\DeclareMathAlphabet{\mathcal}{OMS}{cmsy}{m}{n}
\theoremstyle{plain}
\newtheorem{proposition}{Proposition}
\newtheorem{theorem}{Theorem}
\newtheorem{lemma}{Lemma}
\newtheorem{corollary}{Corollary}
\newtheorem{conjecture}{Conjecture}
\theoremstyle{definition}
\newtheorem*{example*}{Example}
\theoremstyle{remark}
\theoremstyle{definition}
\newcommand{\Z}{\mathbb{Z}}
\newcommand{\N}{\mathbb{N}}
 \newcommand{\prop}{\mathsf{P}}
 \newcommand{\mtl}{\mathsf{MTL}}
 \newcommand{\mtlmodels}{\models_\mtl}
 \newcommand{\tptlmodels}{\models_\tptl}
\newcommand{\true}{\texttt{True}}
\newcommand{\false}{\texttt{False}}
\newcommand{\sep}{\textit{ }|\textit{ }} 
\newcommand{\U}{\mathsf{U}} 
\newcommand{\p}{P}
\newcommand{\finally}{\mathsf{F}}
\newcommand{\X}{\mathsf{X}} 
\newcommand{\urk}{\mathsf{Urk}} 
\newcommand{\ev}{\bar{\mathsf{0}}} 
\newcommand{\ltl}{\mathsf{LTL}}
\newcommand{\freeze}{\mathsf{Freeze\hspace{0mm}LTL}}
\newcommand{\tptl}{\mathsf{TPTL}}
\newcommand{\tptlun}{\mathsf{UnaTPTL}}
\newcommand{\tptleq}{\freeze}
\newcommand{\tptlexeq}{\mathsf{Una}\freeze}
\newcommand{\regtptlexeq}[1]{\tptlexeq^{#1}}
\newcommand{\regtptlex}[1]{\tptlun^{#1}}
\newcommand{\problemx}[3]{
\vspace{3mm}
\par\noindent{\bf#1}\par\nobreak\vskip.2\baselineskip
\begingroup\clubpenalty10000\widowpenalty10000
\setbox0\hbox{\bf INPUT:\ \ }\setbox1\hbox{\bf QUESTION:\ \ }
\dimen0=\wd0\ifnum\wd1>\dimen0\dimen0=\wd1\fi
\vskip-\parskip\noindent
\hbox to\dimen0{\box0\hfil}\hangindent\dimen0\hangafter1\ignorespaces#2\par
\vskip-\parskip\noindent
\hbox to\dimen0{\box1\hfil}\hangindent\dimen0\hangafter1\ignorespaces#3\par
\endgroup\vspace{3mm}
}
\newcommand*\ie{\textit{i.e.}\xspace}
\newcommand*\eg{\textit{e.g.}\xspace}
\newcommand{\logic}{\mathcal{L}}
\title{On the Expressiveness of TPTL and MTL over $\omega$-Data Words}
\author{Claudia Carapelle\thanks{The author is supported by {\it Deutsche Forschungsgemeinschaft} (DFG), GRK 1763 (QuantLA).}\, , \; Shiguang Feng$^*$, \; Oliver Fern\'{a}ndez Gil$^*$, \; Karin Quaas\thanks{The author is supported by {\it DFG}, project QU~316/1-1.}
\institute{ Institut f\"ur Informatik, Universit\"at Leipzig, \\
D-04109 Leipzig, Germany }
\email{\{carapelle, shiguang, fernandez, quaas\}@informatik.uni-leipzig.de}
}
\begin{document}
\maketitle

\begin{abstract}
    Metric Temporal Logic ($\mtl$) and Timed Propositional Temporal Logic ($\tptl$) are prominent extensions of Linear Temporal Logic to specify properties about data languages. In this paper, we consider the class of data languages of non-monotonic data words over the natural numbers. We prove that, in this setting, $\tptl$ is strictly more expressive than $\mtl$. To this end, we introduce Ehrenfeucht-Fra\"iss\'e (EF) games for $\mtl$. Using EF games for $\mtl$, we also prove that the $\mtl$ definability decision problem (``Given a $\tptl$-formula, is the language defined by this formula definable in $\mtl$?'') is undecidable. We also define EF games for $\tptl$, and we show the effect of various syntactic restrictions on the expressiveness of $\mtl$ and $\tptl$.
\end{abstract}

    \section{Introduction}

        Recently, verification and analysis of sets of \emph{data words} have gained a lot of interest~\cite{DBLP:conf/csl/Segoufin06,DBLP:conf/fossacs/DemriLS08,DBLP:journals/tocl/DemriL09,DBLP:journals/tocl/BojanczykDMSS11,DBLP:conf/concur/Bollig11,DBLP:conf/fossacs/BolligCGK12,DBLP:journals/ipl/Bouyer02}.
        Here we consider $\omega$-words, \ie, infinite sequences over $\Sigma \times D$, where $\Sigma$ is a finite set of labels, and $D$ is a potentially  infinite set of \emph{data values}.
        One prominent example of data words are  \emph{timed words}, used in the analysis of real-time systems~\cite{DBLP:journals/tcs/AlurD94}.
In this paper, we consider data words as behavioral models of one-counter machines. Therefore, in contrast to timed words, the sequence of data values within the word may be non-monotonic, and we choose the set of natural numbers as data domain.
It is straightforward to adapt our results to the data domain of integers.
In timed words, intuitively, the sequence of data values describes the timestamps at which the properties from the labels set $\Sigma$ hold.
Non-monotonic sequences of natural numbers, instead, can model the variation of an observed value during a time elapse: we can think of the heartbeat rate recorded by a cardiac monitor, atmospheric pressure, humidity or temperature measurements obtained from a meteorological station. For example, let $\mathsf{Weather} = \{\mathsf{sunny, cloudy, rainy}\}$ be a set of labels. A data word modeling the changing of the weather and highest temperature day after day could be:
        $$(\mathsf{rainy},10) (\mathsf{cloudy}, 8) (\mathsf{sunny}, 12) (\mathsf{sunny}, 13)\ldots$$

        For reasoning about data words, we consider extensions of \emph{Linear Temporal Logic} ($\ltl$, for short). One of these extensions is \emph{$\freeze$}, which extends $\ltl$ with a \emph{freeze quantifier} that stores the current data value in a register variable. One can then check whether in a later position in the data word the data value equals the value stored in the register or not. Model checking one-counter machines with this logic is in general undecidable~\cite{DBLP:conf/fossacs/DemriLS08}, and so is the satisfiability problem~\cite{DBLP:journals/tocl/DemriL09}.
        A good number of recent publications deal with decidable and undecidable fragments of $\freeze$~\cite{DBLP:journals/tocl/DemriL09,DBLP:journals/iandc/DemriLN07,DBLP:conf/fossacs/DemriLS08,DBLP:conf/fossacs/DemriS10}.

        Originally, the freeze quantifier was introduced in \emph{Timed Propositional Temporal Logic} ($\tptl$, for short)~\cite{DBLP:journals/jacm/AlurH94}. Here, in contrast to $\freeze$, a data value $d$ can be compared to a register value $x$ using linear inequations of the form, $\eg$, $d-x\leq 2$. Another widely used logic in the context of real-time systems is \emph{Metric Temporal Logic} ($\mtl$, for short)~\cite{DBLP:journals/rts/Koymans90}.
        $\mtl$ extends $\ltl$ by constraining the temporal operators with intervals over the non-negative reals.
        It is well known that every $\mtl$-formula can be effectively translated into an equivalent formula in $\tptl$. For the other direction, however, it turns out that the result depends on the data domain. For \emph{monotonic data words} over the natural numbers,
        Alur and Henzinger~\cite{DBLP:journals/iandc/AlurH93} proved that $\mtl$ and $\tptl$ are equally expressive.
        For timed words over the non-negative reals, instead, Bouyer et al.~\cite{DBLP:journals/iandc/BouyerCM10} showed that $\tptl$ is strictly more expressive than $\mtl$.

        Both logics, however, have not gained much attention in the specification of non-monotonic data words. Recently we studied the decidability and complexity of $\mtl$, $\tptl$ and some of their fragments over non-monotonic data words~\cite{DBLP:conf/lata/CarapelleFGQ14}, but still not much is known about their relative expressiveness, albeit they can express many interesting properties.
        To continue our example, using the $\mtl$-formula  $(\mathsf{sunny} \; \U_{[-3,-1]}\; \mathsf{cloudy})$ over the labels set $\mathsf{Weather}$, we can express the following property: it is sunny until it becomes cloudy and the highest temperature has decreased of 1 to 3 degrees.
        The following $\tptl$-formula expresses the fact that, at least three days from now, the highest temperature will be the same as today:  $x.\finally\finally\finally (x=0)$. Over a data word, this formula expresses that there is a point whose data value is the same as that of the present one after at least three points.
        The main advantage of $\mtl$ with respect to $\tptl$ is its concise syntax. It would be practical if we could show that, as in the case of monotonic data words over the natural numbers, $\mtl$ equals $\tptl$ on data words.
        The goal of this paper is to investigate the relative expressiveness of $\tptl$ and $\mtl$ when evaluated over data words.

        In this paper, we show as a main result that for data words $\tptl$ is strictly more expressive than $\mtl$.
More detailed, we use the formula $x.\finally(b\wedge \finally(c\wedge x\leq 2))$ to separate $\tptl$ and $\mtl$. This is the same formula used in the paper by Bouyer et al.~\cite{DBLP:journals/iandc/BouyerCM10} to separate these two logics over timed words.
We also show that the  simpler $\tptl$-formula $x.\finally\finally\finally (x=0)$ is not definable in $\mtl$.
Note that this formula is in the unary fragment of $\freeze$, which is very restrictive.
The intuitive reason for the difference in expressiveness is that, using register variables, we can store data values at any position of a  word to compare them with a later position, and it is possible to check that other properties are verified in between.
This cannot be done using the constrained temporal operators in $\mtl$.
This does not result in a gap in expressiveness in the monotonic data words setting, because the monotonicity of the data sequence  does not allow arbitrary values between two positions of a data word.

	As a main tool for showing this result, we introduce \emph{quantitative} versions of Ehrenfeucht-Fra\"iss\'e (EF) games for $\mtl$ and $\tptl$.
In model theory, EF games are mainly used to prove inexpressibility results for first-order logic. Etessami and Wilke~\cite{DBLP:conf/lics/EtessamiW96} introduced the EF game for $\ltl$ and used it to show that the Until Hierarchy for $\ltl$ is strict.
        Using our EF games for $\mtl$ and $\tptl$, we prove a number of results concerning the relation between the expressive power of $\tptl$ and $\mtl$, as well as between different fragments of both logics.
        We investigate the effects of restricting the syntactic resources.
        For instance, we show that $\tptl$ that permits two register variables is strictly more expressive than $\tptl$ restricted to one register variable.
        We also use EF games to show that the following problem
        is undecidable:  given a $\tptl$-formula $\varphi$, is there an $\mtl$-formula equivalent to $\varphi$?

        We remark that quantitative EF games provide a very general and intuitive mean to
        prove results concerning the expressive power of quantitative logics.
        We would also like to point out that recently an extension of Etessami and Wilke's EF games has been defined~\cite{DBLP:conf/concur/PandyaS11} to investigate relative  expressiveness of some fragments of the real-time version of $\mtl$ over \emph{finite} timed words only. The proof of Theorem 1 in~\cite{DBLP:conf/concur/PandyaS11} relies on the fact that there is an integer bound on the timestamps of a finite timed word to deal with the potentially infinite number of equivalence classes of $\mtl$ formulas. It is not clear how this can be extended to \emph{infinite} timed words. In contrast to this, the results in our paper using EF games can also be applied to \emph{finite} data words.

        \section{Metric Temporal Logic and Timed Propositional Temporal Logic}
        \label{section_logics}

        In this section, we define two quantitative extensions of $\ltl$: $\mtl$ and $\tptl$. The logics are evaluated over \emph{data words}, defined in the following.

        We use $\Z$ and $\N$ to denote the set of integers and the set of non-negative integers, respectively.
        Let $\prop$ be a finite set of propositional variables.
        An $\omega$-\emph{data word}, or simply \emph{data word}, $w$ is an infinite sequence $(\p_0,d_0)(\p_1,d_1)\dots$ of pairs in $2^\prop\times\N$. Let $i\in \N$, we use $w[i]$ to denote the data word $(\p_i,d_i)(\p_{i+1},d_{i+1})\dots$ and use $(2^\prop\times\N)^\omega$ to denote the set of all data words.

        \subsection{Metric Temporal Logic}
        The set of formulas of $\mtl$ is built up from $\prop$
        by boolean connectives and a constraining version of the {\em until} operator:

$$\varphi::=  p \sep \neg\varphi \sep \varphi_1\wedge\varphi_2 \sep \varphi_1\U_{I}\varphi_2 $$
where $p\in\prop$ and $I \subseteq \Z$ is a (half-)open or (half-)closed interval over the integers, possibly unbounded. We use pseudo-arithmetics expressions to denote intervals, \emph{e.g.}, $\geq 1$ to denote $[1, +\infty)$.
If $I=\Z$, then we may omit the annotation $I$ on $\U_I$.

Formulas in $\mtl$ are interpreted over data words.
Let $w=(\p_0,d_0)(\p_1,d_1)\dots$ be a data word, and let $i\in\N$.
We define the {\em satisfaction relation for $\mtl$}, denoted by $\mtlmodels$, inductively as follows:
\begin{align*}
	&(w,i) \mtlmodels p \text{ iff }p\in\p_i, \quad (w,i) \mtlmodels \neg\varphi \text{ iff }(w,i)\not\mtlmodels \varphi, \\
	&(w,i) \mtlmodels \varphi_1\wedge\varphi_2 \text{ iff } (w,i)\mtlmodels \varphi_1 \text{ and } (w,i)\mtlmodels \varphi_2,\\
	&(w,i)\mtlmodels \varphi_1\U_{I}\varphi_2 \text{ iff } \exists j> i\text{ such that }(w,j)\mtlmodels\varphi_2,\, d_j - d_i \in I,\\
    &\hspace{3.5cm}\text{ and } \forall i< k<j,(w,k)\mtlmodels\varphi_1.
\end{align*}

    We say that a data word \emph{satisfies} an $\mtl$-formula $\varphi$, written $w\mtlmodels\varphi$, if $(w,0)\mtlmodels\varphi$. We use the following syntactic abbreviations:
    $\true:= p\vee \neg p$,
    $\false := \neg\true$,
    $\X_I\varphi := \false \U_I\varphi$,
    $\finally_I\varphi:=\true\U_I\varphi$.
    Note that the use of the \emph{strict} semantics for the until operator is essential to define the next operator $\X_I$.

\begin{example*}
        The following formula expresses the fact that
the weather is sunny until it becomes cloudy and the temperature has decreased from one to three degrees. Furthermore in the future it will rain and the temperature will increase by at least one  degree:

\begin{equation}\label{example1}
\mathsf{sunny} \; \U_{[-3,-1]}\; (\mathsf{cloudy} \wedge \finally_{\geq 1}\; \mathsf{ rainy}).
\end{equation}
\end{example*}

\subsection{Timed Propositional Temporal Logic}
Given an infinite countable set $X$ of \emph{register variables},
the set of formulas of $\tptl$ is defined as follows:

$$\varphi::=  p \sep x \in I \sep \neg\varphi \sep \varphi_1\wedge\varphi_2 \sep \varphi_1\U\varphi_2 \sep x.\varphi$$
where $p\in\prop$, $x\in X$ and $I$ is an interval over the integers, defined as for $\mtl$. We will use pseudo-arithmetic expressions to denote intervals, \emph{e.g.}, $x < 0$ denotes $x \in (0, - \infty)$.
Intuitively, $x. \varphi$, means that we are \emph{resetting} $x$ to the current data value, and $x \in I$ means that, compared to the last time that we reset $x$, the data value has increased or decreased within the margins of the interval $I$.

Formulas in $\tptl$ are interpreted over data words.
A \emph{register valuation} $\nu$ is a function from $X$ to $\N$.
Let $w=(\p_0,d_0)(\p_1,d_1)\dots$ be a data word, let $\nu$ be a register valuation, and let $i\in\N$.
The satisfaction relation for $\tptl$, denoted by $\tptlmodels$, is inductively defined in a similar way as for $\mtl$; we only give the definitions for the new formulas:
\begin{align*}
	&(w,i,\nu)\tptlmodels x\in I \text{ iff } d_i -\nu(x)\in I,\\
	&(w,i,\nu) \tptlmodels x.\varphi \text{ iff } (w,i,\nu[x\mapsto d_i])\tptlmodels\varphi,\\
    &(w,i,\nu)\tptlmodels \varphi_1\U\varphi_2 \text{ iff } \exists j\!>\!i, (w,j,\nu)\tptlmodels\varphi_2, \forall i\!<\!k\!<\!j,(w,k,\nu)\tptlmodels\varphi_2.
\end{align*}
Here, $\nu[x\mapsto d_i]$ is the valuation that agrees with $\nu$ on all $y\in X\backslash\{x\}$, and maps $x$ to $d_i$.
We say that a data word $w$ satisfies a $\tptl$-formula $\varphi$, written $w\tptlmodels\varphi$, if $(w,0,\ev)\tptlmodels\varphi$. Here, $\ev$ denotes the valuation that maps each register variable to $d_0$.
We use the same syntactic abbreviations as for $\mtl$ where the interval $I$ for the temporal operators is ignored.

In the following, we define some fragments of $\tptl$.
Given $n\geq 1$,
we use $\tptl^n$ to denote the set of $\tptl$-formulas that use at most $n$ different register variables.
The \emph{unary fragment of $\tptl$}, denoted by $\tptlun$,
is defined by the following grammar:
$$\varphi::= p \sep \neg \varphi \sep x\in I \sep \varphi_1\wedge\varphi_2 \sep \finally \varphi \sep \X \varphi \sep x.\varphi$$
We define $\tptleq$ to be the subset of $\tptl$-formulas where the formula `$x\in I$' is restricted to be of the form `$x\in [0,0]$'. We denote combinations of these fragments in the expected manner; $\eg$, $\regtptlexeq{1}$ denotes the unary fragment of $\tptl$ in which only one register variable and equality checks of the form `$x \in [0,0]$' are allowed.

\begin{example*}
The $\mtl$-formula (\ref{example1}) in the above example is equivalent to the $\tptl^1$-formula
$$x.[\mathsf{sunny}\;\U\; (x\in[-3,-1]\wedge\mathsf{cloudy}\wedge x.\finally\;(x\geq 1\wedge\mathsf{rainy}))].$$
The formulas  $x.((\mathsf{cloudy}\wedge x\leq 2)\U \;\mathsf{sunny})$ and $x.\finally\;(\mathsf{cloudy}\wedge\finally\;(\mathsf{sunny}\wedge x\leq2))$, over the labels set $\mathsf{Weather}$
express the following properties:
\begin{enumerate}
   \item The weather will eventually become sunny. Until then it is cloudy every day and the temperature is at most two degrees higher than the temperature at the present day.
  \item It will be cloudy in the future, later it will become sunny, and the temperature will have increased by at most 2 degrees.
\end{enumerate}
 \end{example*}

\subsection{Relative Expressiveness}
\label{relative_expressiveness}

Let $\logic$ and $\logic'$ be two logics interpreted over elements in $(2^\prop\times\N)^\omega$, and $\varphi\in\logic$ and $\varphi'\in\logic'$ be two formulas. Define $L(\varphi) = \{w\in (2^\prop\times\N)^\omega \mid w \textrm{ satisfies } \varphi\}$. We say that $\varphi$ is \emph{equivalent} to $\varphi'$ if $L(\varphi)=L(\varphi')$.
Given a data language $\mathbf{L}\subseteq (2^\prop\times\N)^\omega$, we say that $\mathbf{L}$ is \emph{definable in $\logic$} if there is a formula $\varphi\in\logic$ such that $L(\varphi) = \mathbf{L}$. We say that a formula $\psi$ is definable in $\mathcal{L}$ if $L(\psi)$ is definable in $\mathcal{L}$. We say that $\logic'$ is \emph{at least as expressive as $\logic$}, written $\logic \preccurlyeq \logic'$, if each formula of $\logic$ is definable in $\logic'$. It is \emph{strictly more expressive}, written $\logic \prec \logic'$ if, additionally, there is a formula in $\logic'$ that is not definable in $\logic$.
Further, $\logic$ and $\logic'$ are \emph{equally expressive}, written $\logic \equiv \logic'$, if $\logic \preccurlyeq \logic'$ and $\logic' \preccurlyeq \logic$.
$\logic$ and $\logic'$ are \emph{incomparable}, if neither $\logic \preccurlyeq \logic'$ nor $\logic' \preccurlyeq \logic$.

In this paper we are interested in the relative expressiveness of (fragments of) $\mtl$ and $\tptl$. It is straightforward to translate an $\mtl$-formula into an equivalent $\tptl^1$-formula. So it can easily be seen that $\tptl^1$ is as least as expressive as $\mtl$.
However, we will show that there exist some $\tptl^1$-formulas that are not definable in $\mtl$. For this we introduce the Ehrenfeucht-Fra\"{i}ss\'{e} game for $\mtl$.
Before, we define the important notion of \emph{until rank} of a formula.

\subsection{Until Rank}
\label{until_rank}
The \emph{until rank} of an $\mtl$-formula $\varphi$, denoted by $\urk(\varphi)$, is defined
by induction on the structure of the formula:
\begin{itemize}
\item $\urk(p)= 0$ for every $p\in\prop$,
\item $\urk(\neg\varphi) = \urk(\varphi)$, $\urk(\varphi_{1}\wedge\varphi_{2}) = \max\{\urk(\varphi_{1}), \urk(\varphi_{2})\}$, and
\item $\urk(\varphi_{1}\U_{I}\varphi_{2}) = \max\{\urk(\varphi_{1}), \urk(\varphi_{2})\}+1$.
\end{itemize}

 We use $\mathsf{Cons}(\Z)$ to denote the set $\{S\cup\{-\infty,+\infty\}\mid S\subseteq\Z \}$ and $\mathsf{FCons}(\Z)$ for the subset of $\mathsf{Cons}(\Z)$ which contains all \emph{finite} sets in $\mathsf{Cons}(\Z)$. Let $\mathcal{I}\in \mathsf{Cons}(\Z)$, $k\in\N$.
Define
\begin{align*}
    &\mtl^{\mathcal{I}}= \{\varphi\in\mtl\mid\text{the endpoints of }I\text{ in each operator }\U_{I}\text{ in }\varphi\text{ are in }\mathcal{I}\},\\
    &\mtl_k =\{\varphi\in\mtl \mid \urk(\varphi)\leq k\},\quad \mtl^{\mathcal{I}}_k =\mtl_k \cap \mtl^\mathcal{I}.
\end{align*}
It is easy to check that $\mtl=\bigcup^{k\in \N}_{\mathcal{I}\in\mathsf{FCons}(\Z)}\mtl^{\mathcal{I}}_{k}$, and  $\mtl^{\mathcal{I}}=\bigcup^{\mathcal{I}'\subseteq\mathcal{I},k\in \N}_{\mathcal{I}'\in\mathsf{FCons}(\Z)}\mtl^{\mathcal{I}'}_{k}$ for each $\mathcal{I}\in \mathsf{Cons}(\Z)$.

\begin{lemma}
	\label{lemma_mtl_I_k_finite}
	For each $\mathcal{I}\in \mathsf{FCons}(\Z)$ and $k\in\N$, there are only finitely many formulas
	in $\mtl_{k}^{\mathcal{I}}$ up to equivalence.
\end{lemma}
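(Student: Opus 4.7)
The plan is to proceed by induction on the until rank $k$, keeping $\mathcal{I} \in \mathsf{FCons}(\Z)$ fixed, and to show that there is a finite set of ``building-block'' formulas from which every element of $\mtl_k^{\mathcal{I}}$ can be obtained as a Boolean combination. Since the Boolean combinations of a finite set of formulas yield only finitely many distinct truth-value assignments, hence finitely many equivalence classes, the result will follow.

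For the base case $k=0$, formulas in $\mtl_0^{\mathcal{I}}$ use no until operator, so they are Boolean combinations of the propositional variables in the finite set $\prop$. There are at most $2^{2^{|\prop|}}$ such formulas up to equivalence, which is finite.

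For the inductive step, assume $\mtl_k^{\mathcal{I}}$ contains only finitely many formulas up to equivalence; let $F_k$ be a finite set of representatives. Every formula $\varphi \in \mtl_{k+1}^{\mathcal{I}}$ is obtained by Boolean combinations starting from atoms of the form $p \in \prop$ and $\psi_1 \U_I \psi_2$ with $\psi_1, \psi_2 \in \mtl_k^{\mathcal{I}}$ and $I$ an interval whose endpoints belong to $\mathcal{I}$. Since $\mathcal{I}$ is finite and each endpoint can be open or closed (with $\pm\infty$ always open), the set of admissible intervals is finite. Using the semantic definition of $\U_I$, I would verify that if $\psi_1 \equiv \psi_1'$ and $\psi_2 \equiv \psi_2'$, then $\psi_1 \U_I \psi_2 \equiv \psi_1' \U_I \psi_2'$, so replacing $\psi_1, \psi_2$ by their representatives in $F_k$ preserves equivalence. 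Hence the set of until formulas $\{\psi_1 \U_I \psi_2 \mid \psi_1, \psi_2 \in F_k,\ I \text{ admissible}\}$ is finite up to equivalence, and together with $\prop$ gives a finite set $A_{k+1}$ of atomic formulas.

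Every $\varphi \in \mtl_{k+1}^{\mathcal{I}}$ is then a Boolean combination of formulas in $A_{k+1}$. Since $|A_{k+1}|$ is finite, the number of Boolean functions on these atoms is at most $2^{2^{|A_{k+1}|}}$, which bounds the number of equivalence classes in $\mtl_{k+1}^{\mathcal{I}}$. This completes the induction.

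The step that needs care is the compositionality claim used in the inductive step: if $\psi_1 \equiv \psi_1'$ and $\psi_2 \equiv \psi_2'$, then $\psi_1 \U_I \psi_2 \equiv \psi_1' \U_I \psi_2'$. This is a routine consequence of the semantic clause for $\U_I$ (which refers to $\psi_1, \psi_2$ only through their truth value at each position), but it is the only nontrivial ingredient beyond a straightforward counting argument; without it one could not reduce the inductive step to the finitely many representatives provided by $F_k$.
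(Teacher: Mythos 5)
Your proof is correct and is the standard counting argument the paper intends (the lemma is stated there without proof): induction on the until rank, reducing every rank-$(k+1)$ formula to a Boolean combination of the finitely many atoms $p$ and $\psi_1\U_I\psi_2$ with $\psi_1,\psi_2$ ranging over representatives of $\mtl_k^{\mathcal I}$ and $I$ over the finitely many intervals with endpoints in the finite set $\mathcal I$, and then bounding the Boolean combinations by $2^{2^{|A_{k+1}|}}$. The one point worth making explicit in your compositionality step is that the paper defines $\psi\equiv\psi'$ via satisfaction at position $0$ only, so you should add the (immediate, shift-invariance) observation that $(w,i)\mtlmodels\psi$ iff $w[i]\in L(\psi)$, whence position-$0$ equivalence gives agreement of $\psi_1,\psi_1'$ and $\psi_2,\psi_2'$ at every position of every word, which is exactly what the semantic clause for $\U_I$ needs.
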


We define a family of equivalence relations over $(2^\prop\times \N)^\omega\times \N$.
Let $w_{0},w_{1}$ be two data words,
$i_{0},i_{1}\geq 0$ be positions in $w_0,w_1$, respectively.
Let $\mathcal{I}\in \mathsf{Cons}(\Z)$, and let $k\in\N$.
We say that $(w_0,i_0)$ and $(w_1,i_1)$ are \emph{$\mtl^\mathcal{I}_k$-equivalent}, written $(w_{0},i_{0})\equiv_{k}^{\mathcal{I}}(w_{1},i_{1})$,
if for each formula $\varphi\in\mtl_{k}^{\mathcal{I}}$, $(w_{0},i_{0})\mtlmodels\varphi$ if and only if $(w_{1},i_{1})\mtlmodels\varphi$.

\section{The Ehrenfeucht\textendash{}Fra\"{i}ss\'{e} Game for MTL}
\label{section_ef_mtl}

Next we define the Ehrenfeucht\textendash{}Fra\"{i}ss\'{e} (EF) game
for $\mtl$.
Let $\mathcal{I}\subseteq\mathsf{FCons}(\Z)$, $k\in\N$, $w_{0},w_{1}$ be two data words and $i_0, i_1$ be positions in $w_0$ and $w_1$, respectively. The $k$-round $\mtl$ EF game on $(w_{0},i_{0})$ and $(w_{1},i_{1})$ with respect to $\mathcal{I}$, denoted by $\textrm{MG}_{k}^{\mathcal{I}}(w_{0},i_{0},w_{1},i_{1})$, is played by two players, called Spoiler and Duplicator, on the pair $(w_0,w_1)$ of data words starting from the positions $i_0$ in $w_0$ and $i_1$ in $w_1$.

In each round of the game, Spoiler chooses a word and a position, and Duplicator tries to find a position in the respective other word satisfying conditions concerning the propositional variables and the data values in $w_0$ and $w_1$.
We say that $i_0$ and $i_1$ \emph{agree in the propositional variables} if $(w_0,i_0)\mtlmodels p$ iff $(w_1,i_1)\mtlmodels p$ for each $p\in\prop$.
We say that $m,n\in \Z$ \emph{are in the same region} with respect to $\mathcal{I}$ if $(a,b)$ or  $[a,a]$ is the smallest interval $I$ such that $a,b\in\mathcal{I}$ and $m\in I$, then $n\in I$. For example, let $\mathcal{I}=\{-\infty,1,3,8,+\infty\}$, $1$ and $2$ are not in the same region with respect to $\mathcal{I}$, $4$ and $5$ are  in the same region with respect to $\mathcal{I}$.

$\textrm{MG}_{k}^{\mathcal{I}}(w_{0},i_{0},w_{1},i_{1})$ is defined inductively as follows. If $k=0$, there are no rounds to be played, Spoiler wins if $i_0$ and $i_1$ do not agree in the propositional variables. Otherwise, Duplicator wins. If $k>0$, in the first round,

\begin{enumerate}
\item Spoiler wins this round if $i_0$ and $i_1$ do not agree in the propositional variables. Otherwise, he chooses a word $w_{l}$ ($l\in\{0,1\}$), and a position $i_{l}'>i_{l}$ in $w_{l}$.

\item Then Duplicator tries to choose a position $i_{(1-l)}'>i_{(1-l)}$ in $w_{(1-l)}$ such that $i'_0$ and $i'_1$ agree in the propositional variables, and $d_{i_{0}'}-d_{i_{0}}$ and $d_{i_{1}'}-d_{i_{1}}$ are in the same region with respect to $\mathcal{I}$. If one of the conditions is violated, then Spoiler wins the round.


\item Then, Spoiler has two options: either he chooses to start a new game $\textrm{MG}_{k-1}^{\mathcal{I}}(w_{0},i_{0}',w_{1},i_{1}')$; or

\item Spoiler chooses a position $i_{(1-l)}<i_{(1-l)}''<i_{(1-l)}'$ in $w_{(1-l)}$.
	In this case Duplicator tries to respond by choosing a position
	$i_{l}<i_{l}''<i_{l}'$ in $w_{l}$ such that $i''_0$ and $i''_1$ agree in the propositional variables.
	If this condition is violated, Spoiler wins the round.
\item If Spoiler cannot win in Step 1, 2 or 4, then Duplicator wins this round. Then Spoiler chooses to start a new game
    $\textrm{MG}_{k-1}^{\mathcal{I}}(w_{0},i_{0}'',w_{1},i_{1}'')$.

\end{enumerate}

We say that Duplicator has a \emph{winning strategy} for the game $\textrm{MG}_{k}^{\mathcal{I}}(w_{0},i_{0},w_{1},i_{1})$ if she can win every round of the game regardless of the choices of Spoiler. We denote this by $(w_{0},i_{0})\sim_{k}^{\mathcal{I}}(w_{1},i_{1})$. Otherwise we say that Spoiler has a winning strategy. It follows easily that if $(w_{0},i_{0})\sim_{k}^{\mathcal{I}}(w_{1},i_{1})$, then for all $m<k$, $(w_{0},i_{0})\sim_{m}^{\mathcal{I}}(w_{1},i_{1})$.

    \begin{theorem}
	\label{theorem_equiv_win}
    For each $\mathcal{I}\in \mathsf{FCons}(\Z)$ and $k\in \N$, $(w_{0},i_{0})\equiv_{k}^{\mathcal{I}}(w_{1},i_{1})$ if and only if $(w_{0},i_{0})\sim_{k}^{\mathcal{I}}(w_{1},i_{1})$.
    \end{theorem}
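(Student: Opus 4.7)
The plan is to prove both implications by a joint induction on the rank $k$. At $k=0$ both sides reduce to propositional agreement at $(i_0,i_1)$: the rank-$0$ game terminates immediately with Duplicator winning iff propositions match, and $\mtl_0^\mathcal{I}$ is generated from $\prop$ by Boolean connectives, which preserve agreement. For the inductive step, I assume the theorem at rank $k$ and prove both directions at rank $k+1$ separately.

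For the direction ``Duplicator wins $\Rightarrow\;\equiv_{k+1}^\mathcal{I}$'', I would proceed by structural induction on $\varphi\in\mtl_{k+1}^\mathcal{I}$; the Boolean cases are routine, so only $\varphi=\varphi_1\,\U_I\,\varphi_2$ is interesting. Suppose $(w_0,i_0)\mtlmodels\varphi$ with witness $j$. I let Spoiler play $(w_0,j)$; Duplicator's winning strategy then returns some $j'>i_1$ with propositions matching and $d_{j'}-d_{i_1}$ in the same $\mathcal{I}$-region as $d_j-d_{i_0}$, and since the endpoints of $I$ lie in $\mathcal{I}$ this forces $d_{j'}-d_{i_1}\in I$. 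For Spoiler's option~3, Duplicator still wins the residual $k$-round game at $(j,j')$, so the outer IH yields $(w_0,j)\equiv_k^\mathcal{I}(w_1,j')$, transferring $\varphi_2$. For option~4, any intermediate $j''\in(i_1,j')$ picked by Spoiler is countered by Duplicator with some $j'''\in(i_0,j)$ with matching propositions, she wins the $k$-round continuation, and the outer IH transfers $\varphi_1$. Hence $(w_1,i_1)\mtlmodels\varphi$; the opposite transfer is symmetric by having Spoiler play in $w_1$ instead.

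For the direction ``$\equiv_{k+1}^\mathcal{I}\;\Rightarrow$ Duplicator wins'', Lemma~\ref{lemma_mtl_I_k_finite} is the central ingredient: it implies there are only finitely many $\equiv_k^\mathcal{I}$-classes, and each such class is definable by a characteristic formula $\chi_\tau\in\mtl_k^\mathcal{I}$ built as a finite Boolean combination of a set of representatives. Assume $(w_0,i_0)\equiv_{k+1}^\mathcal{I}(w_1,i_1)$ and let Spoiler play $(w_0,i_0')$ (the symmetric case is analogous). Let $\chi$ be the characteristic formula of the $\equiv_k^\mathcal{I}$-type of $(w_0,i_0')$, let $I$ be the $\mathcal{I}$-region of $d_{i_0'}-d_{i_0}$ (a singleton or an open interval whose endpoints lie in $\mathcal{I}$), and let $T$ be the finite set of $\equiv_k^\mathcal{I}$-types realized strictly between $i_0$ and $i_0'$ in $w_0$. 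Then $\psi:=\bigl(\bigvee_{\tau\in T}\chi_\tau\bigr)\,\U_I\,\chi$ lies in $\mtl_{k+1}^\mathcal{I}$ and is satisfied at $(w_0,i_0)$ via the witness $i_0'$, so by hypothesis $(w_1,i_1)\mtlmodels\psi$. Duplicator plays any witness $i_1'$ that $\psi$ supplies in $w_1$, which automatically meets the propositional and region constraints. In Spoiler's option~3, the outer IH converts $(w_0,i_0')\equiv_k^\mathcal{I}(w_1,i_1')$ into a winning $k$-round continuation. In option~4, the disjunction-guard forces every intermediate $j\in(i_1,i_1')$ to share its $\equiv_k^\mathcal{I}$-type with some $j'\in(i_0,i_0')$, so Duplicator picks such a $j'$ (propositional agreement is built into the equivalence) and invokes the outer IH. The main obstacle is precisely this encoding step: without Lemma~\ref{lemma_mtl_I_k_finite} neither the single characteristic formula $\chi$ nor the finite disjunction $\bigvee_{\tau\in T}\chi_\tau$ could be formed, and Spoiler's arbitrary combinatorial choices could not be compressed into one until-formula of rank $k+1$.
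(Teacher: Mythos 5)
Your proof is correct and follows exactly the route the paper sets up (the paper itself omits the proof of Theorem~\ref{theorem_equiv_win} in this version, but states Lemma~\ref{lemma_mtl_I_k_finite} precisely so that characteristic formulas for $\equiv_k^{\mathcal{I}}$-classes can be formed): induction on $k$, a structural induction over $\U_I$-formulas for the easy direction, and the Hintikka-style formula $\bigl(\bigvee_{\tau\in T}\chi_\tau\bigr)\,\U_I\,\chi$ with $I$ the $\mathcal{I}$-region of $d_{i_0'}-d_{i_0}$ for the hard direction. The one point you correctly rely on and should make explicit is that any interval with endpoints in $\mathcal{I}$ is a union of $\mathcal{I}$-regions, so ``same region'' suffices to transfer the constraint $d_j-d_i\in I$ in both directions.
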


    \begin{theorem}
	\label{theorem_property_mtl}
	Let $\mathbf{L}$ be a data language.
	The following are equivalent:
	\begin{enumerate}
    \item $\mathbf{L}$ is not definable in $\mtl$.
    \item For each $\mathcal{I}\in \mathsf{FCons}(\Z)$ and $k\in\N$ there exist $w_0\in\mathbf{L}$ and $w_1\not\in\mathbf{L}$ such that $(w_0,0)\sim^\mathcal{I}_k (w_1,0)$.
    \end{enumerate}
    \end{theorem}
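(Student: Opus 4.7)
My plan is to prove the two directions separately by contraposition, using Theorem~\ref{theorem_equiv_win} to pass between EF-game equivalence and logical equivalence, and Lemma~\ref{lemma_mtl_I_k_finite} to pass from the latter to $\mtl$-definability.

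For the direction (2) $\Rightarrow$ (1), I argue contrapositively: assume $\mathbf{L}$ is defined by some $\mtl$-formula $\varphi$, and construct a choice of $\mathcal{I}$ and $k$ that falsifies (2). Take $k := \urk(\varphi)$, and let $\mathcal{I}$ consist of $-\infty$, $+\infty$, and the (finitely many) endpoints of intervals appearing in $\varphi$; then $\mathcal{I} \in \mathsf{FCons}(\Z)$ and $\varphi \in \mtl^{\mathcal{I}}_{k}$. For every $w_0 \in \mathbf{L}$ and $w_1 \notin \mathbf{L}$, we have $w_0 \mtlmodels \varphi$ and $w_1 \not\mtlmodels \varphi$, so $(w_0,0) \not\equiv^{\mathcal{I}}_{k} (w_1,0)$, and hence by Theorem~\ref{theorem_equiv_win} also $(w_0,0) \not\sim^{\mathcal{I}}_{k} (w_1,0)$, which contradicts (2) for this particular $\mathcal{I}$ and $k$.

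For the direction (1) $\Rightarrow$ (2), I again argue contrapositively: assume that there exist $\mathcal{I} \in \mathsf{FCons}(\Z)$ and $k \in \N$ such that for every pair $w_0 \in \mathbf{L}$ and $w_1 \notin \mathbf{L}$ we have $(w_0,0) \not\sim^{\mathcal{I}}_{k} (w_1,0)$, and I will exhibit an $\mtl$-formula defining $\mathbf{L}$. By Theorem~\ref{theorem_equiv_win}, every such pair is distinguished by some formula in $\mtl^{\mathcal{I}}_{k}$, so $\mathbf{L}$ is a union of equivalence classes of $\equiv^{\mathcal{I}}_{k}$ restricted to the first position. By Lemma~\ref{lemma_mtl_I_k_finite}, there are only finitely many $\mtl^{\mathcal{I}}_{k}$-formulas up to equivalence; fix a finite set $\{\varphi_{1},\dots,\varphi_{n}\}$ of representatives. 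For each data word $w$ define the characteristic formula
\[
\chi_{w} \;:=\; \bigwedge_{i \,:\, (w,0) \mtlmodels \varphi_{i}} \varphi_{i} \;\wedge\; \bigwedge_{i \,:\, (w,0) \not\mtlmodels \varphi_{i}} \neg \varphi_{i},
\]
which is itself a formula in $\mtl^{\mathcal{I}}_{k}$ (taking conjunctions and negations preserves both the bound on the until rank and the interval restriction). By construction, $(w',0) \mtlmodels \chi_{w}$ iff $(w',0) \equiv^{\mathcal{I}}_{k} (w,0)$. Since there are at most $2^{n}$ distinct equivalence classes, the disjunction $\bigvee_{[w]\,:\, w \in \mathbf{L}} \chi_{w}$ is a finite disjunction over representatives, lies in $\mtl^{\mathcal{I}}_{k} \subseteq \mtl$, and defines $\mathbf{L}$.

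The only nontrivial step is the construction of the characteristic formulas and the use of saturation of $\mathbf{L}$ under $\equiv^{\mathcal{I}}_{k}$; everything else follows mechanically from Theorem~\ref{theorem_equiv_win} and Lemma~\ref{lemma_mtl_I_k_finite}. Consequently, Lemma~\ref{lemma_mtl_I_k_finite} is doing the real work here, and any subtlety in the full proof will lie in verifying that taking boolean combinations of representatives in $\mtl^{\mathcal{I}}_{k}$ does not increase the until rank or introduce new interval endpoints, which is immediate from the inductive definition of $\urk$.
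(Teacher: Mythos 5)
Your proof is correct and follows essentially the route the paper intends: the paper states Theorem~\ref{theorem_property_mtl} without an explicit proof, but it is set up to follow exactly as you argue, by contraposition in both directions using Theorem~\ref{theorem_equiv_win} together with the finiteness of $\mtl^{\mathcal{I}}_{k}$ up to equivalence (Lemma~\ref{lemma_mtl_I_k_finite}) and the resulting characteristic (Hintikka-style) formulas for the finitely many $\equiv^{\mathcal{I}}_{k}$-classes.
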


	\section{Application of the EF Game for MTL}
	\subsection{Relative Expressiveness of TPTL and MTL}

    In this section, we present one of the main results in this paper: Over data words, $\tptl$ is strictly more expressive than $\mtl$. Before we come to this result, we show in the following lemma that in a data word the difference between data values is what matters, as opposed to the specific numerical value.
    \begin{lemma}\label{lemma_pre}
        Let $w_0 =(P_0,d_0)(P_1,d_1)\dots$ and $w_1 = (P_0,d_0 + c)(P_1,d_1 +c)\dots$ for some $c \in \N$ be two data words. Then for every $k\in\N$ and $\mathcal{I}\in\mathsf{FCons}(\Z)$, $(w_{0},0)\sim_{k}^{\mathcal{I}}(w_{1},0)$.
    \end{lemma}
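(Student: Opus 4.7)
The plan is to prove a slightly stronger statement by induction on $k$: for every $i,k\in\N$ and $\mathcal{I}\in\mathsf{FCons}(\Z)$, Duplicator wins $\textrm{MG}_{k}^{\mathcal{I}}(w_{0},i,w_{1},i)$. The lemma is then the special case $i=0$. The key observation is that $w_0$ and $w_1$ share the same sequence of labels $P_0 P_1 \dots$ and that data differences are translation-invariant: $(d_{j}+c)-(d_{i}+c)=d_{j}-d_{i}$ for all $i,j\in\N$. Hence the only information about data values that the game ever inspects is already identical on the nose in the two words.

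I would let Duplicator play the obvious \emph{copycat} strategy, maintaining throughout the invariant that both players stand at the same index in their respective words. Concretely: whenever the current positions coincide (say at $i$) and Spoiler picks a word $w_l$ together with a position $i'_l>i$, Duplicator responds with $i'_{1-l} := i'_l$ in the other word; if Spoiler subsequently picks an in-between position $i''_{1-l}$ in Step~4, Duplicator mirrors it by $i''_l := i''_{1-l}$. Both answers are plainly legal moves, and the invariant $i_0=i_1$ is preserved after each round.

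To verify that Duplicator survives every round I would inspect the conditions of Steps~1, 2 and~4. Step~1 holds at the start of each round because $i_0=i_1$ and the two words carry identical labels. In Step~2, the chosen positions again coincide, so label agreement is immediate; and by translation-invariance we have $d_{i'_0}-d_{i_0}=d_{i'_1}-d_{i_1}$ (each computed in the respective word), so the two differences trivially lie in the same region with respect to $\mathcal{I}$. Step~4 is analogous: mirroring Spoiler's in-between position preserves label agreement. After every round, the next sub-game $\textrm{MG}_{k-1}^{\mathcal{I}}(w_0,j,w_1,j)$ is again played from coinciding indices $j$, so the induction hypothesis applies; the base case $k=0$ is immediate, because equal indices force propositional agreement.

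There is no real obstacle here: the lemma is essentially a sanity check that the EF game is insensitive to a uniform shift of the data values, exactly as one should expect from a game whose only data-related test (``same region'') inspects differences between data values.
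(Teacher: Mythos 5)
Your proof is correct and follows exactly the paper's argument: the paper likewise has Duplicator play the copycat strategy of responding with the same position in the other word, relying on the fact that data differences are invariant under a uniform shift by $c$. Your write-up merely makes explicit the induction on $k$ and the verification of Steps 1, 2 and 4, which the paper leaves as ``straightforward.''
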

    \begin{proof}
        The proof is straightforward. If Spoiler chooses a position in $w_{l}\,(l\in\{0,1\})$, then the duplicator can respond with the same position in $w_{(1-l)}$.
    \end{proof}

    From now on, we use $(w_{l}\!:\!i, w_{(1-l)}\!:\!j)\,(l\in\{0,1\})$ to denote that Spoiler chooses a word $w_{l}$ and a position $i$ in $w_{l}$ and Duplicator responds with a position $j$ in $w_{(1-l)}$.

	\begin{proposition}
		\label{xfffequal}
    The  $\regtptlexeq{1}$-formula $x.\finally\finally\finally (x=0)$ and the $\tptl$-formula $x.\finally(b\wedge\finally(c\wedge x\leq 2))$ are not definable in  $\mtl$.
	\end{proposition}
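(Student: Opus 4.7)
By Theorem~\ref{theorem_property_mtl}, it suffices, for each of the two formulas $\varphi$ and for each $k\in\N$ and $\mathcal{I}\in\mathsf{FCons}(\Z)$, to exhibit data words $w_0,w_1$ with $w_0\models\varphi$, $w_1\not\models\varphi$ and $(w_0,0)\sim_k^{\mathcal{I}}(w_1,0)$. The plan is to use constructions whose ``local'' structure of propositions and of data differences is the same in $w_0$ and in $w_1$, so that Duplicator can mirror Spoiler's moves, while the ``global'' information that the $\tptl$-formula measures from the initial position $d_0$ differs between the two words.

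For $\varphi_1 = x.\finally\finally\finally(x=0)$, fix an integer $N$ strictly larger than every finite endpoint of $\mathcal{I}$, and consider
\[
w_0 = (\emptyset,0)(\emptyset,0)(\emptyset,0)(\emptyset,0)(\emptyset,N)(\emptyset,N+1)(\emptyset,N+2)\cdots,
\]
\[
w_1 = (\emptyset,0)(\emptyset,0)(\emptyset,0)(\emptyset,N)(\emptyset,N+1)(\emptyset,N+2)\cdots.
\]
Then $w_0\models\varphi_1$ (position $3$ has value $0=d_0$) while $w_1\not\models\varphi_1$. Duplicator's strategy is a ``shift-by-one'' mirror map: she keeps the invariant that the current pair of positions $(i_0,i_1)$ satisfies either $i_0,i_1\le$ ``end of the $0$-block'' or $i_0=i_1+1$ with both positions in the monotonic tail. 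Under this invariant, data values agree, so all differences chosen by Spoiler can be matched irrespective of $\mathcal{I}$, and the propositional condition is trivial. The delicate point, where Spoiler tries to drive the game into a state of the form $(2,2)$ and then play $w_0\!:\!3$, is handled by enlarging the initial $0$-block as a function of $k$ so that Duplicator can always keep $i_1$ strictly smaller than $i_0$ whenever she is still inside that block.

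For $\varphi_2 = x.\finally(b\wedge\finally(c\wedge x\le 2))$, which is the formula used by Bouyer et al.~\cite{DBLP:journals/iandc/BouyerCM10} over timed words, we transfer the same idea to the non-monotonic $\omega$-data setting. We build two data words over $\{b,c\}$ that present the same local sequence of propositional labels and locally equal differences between adjacent events, but in which only $w_0$ contains a $b$-position followed later by a $c$-position whose data value is within $2$ of $d_0$. In $w_1$, every ``candidate'' $c$ after a $b$ lies at data distance greater than $2$ from $d_0$, while locally the $b$–$c$ patterns remain indistinguishable from those in $w_0$. Duplicator again follows a position mirror map, this time between blocks of the construction, and the block lengths and the gaps between data values are chosen as functions of $k$ and the largest endpoint of $\mathcal{I}$ so that every region in $\mathcal{I}$ is witnessed identically by both words.

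The main obstacle is the strategy argument for $\varphi_2$: one must verify, by induction on the round counter, that the mirror map correctly answers both the ``main'' Spoiler move (where the propositions and the $\mathcal{I}$-region of the data difference must be preserved) and the optional ``intermediate'' move (where only propositional agreement is needed), while the chosen intermediate positions land in admissible ranges. The construction for $\varphi_1$, where propositions are trivial and differences are essentially $0$ or linear in the position index, is by comparison routine; the real work goes into designing $w_0,w_1$ for $\varphi_2$ with enough ``decoy'' $b$- and $c$-positions that Duplicator can always answer Spoiler in $k$ rounds.
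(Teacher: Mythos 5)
Your construction for $\varphi_1=x.\finally\finally\finally(x=0)$ does not work, and the repair you sketch cannot work. Take $\mathcal{I}=\{-\infty,0,+\infty\}$. Your $w_0$ has three positions $j\geq 1$ with $d_j=d_0=0$ while your $w_1$ has only two, so the formula $\finally_{[0,0]}\finally_{[0,0]}\finally_{[0,0]}\true\in\mtl^{\mathcal{I}}_3$ holds in $w_0$ (take positions $1,2,3$) but fails in $w_1$; by Theorem~\ref{theorem_equiv_win}, Spoiler wins the $3$-round game, e.g.\ by repeatedly playing the last $0$-valued position of $w_1$ and using the intermediate move of Step~4 to punish Duplicator whenever she jumps ahead inside $w_0$'s block. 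Enlarging the $0$-block as a function of $k$ is not available to you: the moment $w_1$ has a position $j\geq 3$ with $d_j=d_0$ it satisfies $\varphi_1$, so $w_1$'s block is capped at length $3$, and the distinguishing formula above (hence a $3$-round Spoiler win) persists no matter how long you make $w_0$'s block. The way out is to avoid repeated data values entirely: the paper picks $r$ exceeding every constant of $\mathcal{I}$ and uses values $s,\,s-2r,\,s-r,\,s,\,s+r,\,s+2r,\dots$ for $w_0$ against $s,\,s-r,\,s,\,s+r,\,s+2r,\dots$ for $w_1$, so that the value $d_0=s$ recurs exactly once (at position $3$, resp.\ $2$), all positions from $1$ onward carry pairwise distinct values, and every data difference Spoiler can exhibit is a nonzero multiple of $r$, hence lies in the same $\mathcal{I}$-region as its counterpart; matching position $i$ of $w_0$ with $i-1$ of $w_1$ for $i\geq 2$ makes the two suffixes equal up to a constant shift, and Lemma~\ref{lemma_pre} closes the induction.

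For $\varphi_2=x.\finally(b\wedge\finally(c\wedge x\leq 2))$ you give no witness words at all: the paragraph describes the properties the pair $(w_0,w_1)$ must have, but producing such a pair and verifying Duplicator's strategy is precisely the content of the claim. (The paper's witnesses alternate $c$ and $b$ with values $s,\,s-3r,\,s-2r,\,s-r,\,s+r,\,s+2r,\dots$ versus $s,\,s-2r,\,s-r,\,s+r,\,s+2r,\dots$ --- the same shift-by-one device, with no block of equal values.)
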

	\begin{proof}
    To show that the formula $\varphi=x.\finally\finally\finally (x=0)$ is not definable in $\mtl$,
    for each $\mathcal{I}\in \mathsf{FCons}(\Z)$ and $k\in\N$, we will define two data words $w_0$ and $w_1$ such that $w_0\models\varphi$ and $w_1\not\models \varphi$, and $(w_0,0)\sim_k^\mathcal{I}(w_1,0)$. Then, by Theorem \ref{theorem_property_mtl}, $\varphi$ is not definable in $\mtl$. So let $r,s\in\N$ be such that all numbers in $\mathcal{I}$ are contained in $(-r,+r)$ and $s\geq 2r$. Intuitively, we choose $r$ in such a way that a jump of magnitude $\pm r$ in data value cannot be detected by $\mtl^\mathcal{I}$, as all constants in $\mathcal{I}$ are smaller than $r$.
	Define $w_0$ and $w_1$ as follows:

	\begin{picture}(75,22)(0,-22)
\put(0,-7){$w_0$}
\node[Nfill=y,fillcolor=Black,Nw=1.0,Nh=1.0,Nmr=2.0](n1)(7.0,-6.0){}
\put(6,-10){$s$}
\node[Nfill=y,fillcolor=Black,Nw=1.0,Nh=1.0,Nmr=2.0](n2)(17.0,-6.0){}
\put(13,-10){$s\!-\!2r$}
\node[Nfill=y,fillcolor=Black,Nw=1.0,Nh=1.0,Nmr=2.0](n3)(27.0,-6.0){}
\put(24,-10){$s\!-\!r$}
\node[Nfill=y,fillcolor=Black,Nw=1.0,Nh=1.0,Nmr=2.0](n4)(37.0,-6.0){}

\put(36,-10){$s$}
\node[Nfill=y,fillcolor=Black,Nw=1.0,Nh=1.0,Nmr=2.0](n5)(47.0,-6.0){}
\put(44,-10){$s\!+\!r$}
\node[Nfill=y,fillcolor=Black,Nw=1.0,Nh=1.0,Nmr=2.0](n6)(57.0,-6.0){}
\put(53,-10){$s\!+\!2r$}
\node[Nfill=y,fillcolor=Black,Nw=1.0,Nh=1.0,Nmr=2.0](n8)(67.0,-6.0){}
\put(64,-10){$s\!+\!3r$}
\node[Nfill=y,fillcolor=White,linecolor=White,Nw=0.0,Nh=0.0,Nmr=2.0](n11)(77.0,-6.0){}
\put(72,-10){$\dots$}
\drawedge[AHnb=0](n1,n11){ }

\put(0,-17){$w_1$}
\node[Nfill=y,fillcolor=Black,Nw=1.0,Nh=1.0,Nmr=2.0](n1)(7.0,-16.0){}
\put(6,-20){$s$}
\node[Nfill=y,fillcolor=Black,Nw=1.0,Nh=1.0,Nmr=2.0](n2)(17.0,-16.0){}
\put(14,-20){$s\!-\!r$}
\node[Nfill=y,fillcolor=Black,Nw=1.0,Nh=1.0,Nmr=2.0](n3)(27.0,-16.0){}
\put(26,-20){$s$}
\node[Nfill=y,fillcolor=Black,Nw=1.0,Nh=1.0,Nmr=2.0](n4)(37.0,-16.0){}

\put(34,-20){$s\!+\!r$}
\node[Nfill=y,fillcolor=Black,Nw=1.0,Nh=1.0,Nmr=2.0](n5)(47.0,-16.0){}
\put(43,-20){$s\!+\!2r$}
\node[Nfill=y,fillcolor=Black,Nw=1.0,Nh=1.0,Nmr=2.0](n6)(57.0,-16.0){}
\put(54,-20){$s\!+\!3r$}
\node[Nfill=y,fillcolor=Black,Nw=1.0,Nh=1.0,Nmr=2.0](n8)(67.0,-16.0){}
\put(64,-20){$s\!+\!4r$}
\node[Nfill=y,fillcolor=White,linecolor=White,Nw=0.0,Nh=0.0,Nmr=2.0](n11)(77.0,-16.0){}
\put(72,-20){$\dots$}
\drawedge[AHnb=0](n1,n11){ }

\end{picture}

    There are no propositional variables in $w_0,w_1$.
We show that Duplicator has a winning strategy for the game $\textrm{MG}_{k}^{\mathcal{I}}(w_{0},0,w_{1},0)$.
The case $k=0$ is trivial.
Suppose $k>0$. Note that after the first round, they start a new $(k-1)$-round game $\textrm{MG}_{k-1}^{\mathcal{I}}(w_{0},i_0,w_{1},i_1)$, where $i_{0},i_{1}\geq 1$. By Lemma \ref{lemma_pre}, Duplicator has a winning strategy for this game. So it is sufficient to show that Duplicator can win the first round. In the following we give the winning strategy for Duplicator in the first round.

        \begin{center}
        \begin{tabular}{|c|c|c|c|c|}
        \hline
        \backslashbox[10mm]{\small{Move}}{\small{Case}}  & 1 & 2 & 3 & 4\tabularnewline
        \hline
        1st & $\begin{array}{r}
                (w_{l}\!:\!1, w_{(1-l)}\!:\!1),\\
                (l\in\{0,1\})\,
                \end{array}$
        &   $(w_{0}\!:\!2, w_{1}\!:\!1)$
        &   $\begin{array}{r}
                (w_{0}\!:\! i, w_{1}\!:\! i\!-\!1),\\
                (i>2)\,
                \end{array}$
        &   $\begin{array}{r}
                (w_{1}\!:\! i, w_{0}\!:\! i\!+\!1),\\
                (i\geq2)\,
                \end{array}$    \tabularnewline
        \hline
        2nd & - & -
                & $\begin{array}{r}
                    (w_{1}\!:\! j, w_{0}\!:\! j\!+\!1),\\
                    (0<j<i\!-1\!)\,
                    \end{array}$
        &   $\begin{array}{r}
                (w_{0}\!:\!1, w_{1}\!:\!1), \text{or}\\
                (w_{0}\!:\! j,\, w_{1}\!:\! j\!-\!1),\\
                (2\leq j<i\!+\!1)\,
                \end{array}$\tabularnewline
        \hline
        \end{tabular}
        \end{center}

        By the choice of number $r$, $d^{w_0}_1 - d^{w_0}_0(=-2r)$ is in the same region as $d^{w_1}_1 - d^{w_1}_0(=-r)$. It is easy to check that Duplicator's responses satisfy the winning condition about the data value. Hence $(w_0,0) \sim^\mathcal{I}_k (w_1,0)$.

         The proof for the formula $x.\finally(b\wedge \finally(c\wedge x\leq 2))$ is similar, we define $\mathcal{I}$, $k$, $r$ and $s\geq 3r$ as above. We leave it to the reader to verify that Duplicator has a winning strategy for the game $\textrm{MG}_{k}^{\mathcal{I}}(w_{0},0,w_{1},0)$ on the following two data words.

    \begin{picture}(90,35)(0,-35)
    \put(0,-14){$w_0$}
    \node[Nfill=y,fillcolor=Black,Nw=1.0,Nh=1.0,Nmr=2.0](n1)(7.0,-13.0){}
    \put(6,-17){$s$}
    \node[Nfill=y,fillcolor=Black,Nw=1.0,Nh=1.0,Nmr=2.0](n2)(19.0,-13.0){}
    \put(18,-11){$c$}
    \put(15,-17){$s\!-\!3r$}
    \node[Nfill=y,fillcolor=Black,Nw=1.0,Nh=1.0,Nmr=2.0](n3)(31.0,-13.0){}
    \put(30,-11){$b$}
    \put(27,-17){$s\!-\!2r$}
    \node[Nfill=y,fillcolor=Black,Nw=1.0,Nh=1.0,Nmr=2.0](n4)(43.0,-13.0){}
    \put(42,-11){$c$}
    \put(40,-17){$s\!-\!r$}
    \node[Nfill=y,fillcolor=Black,Nw=1.0,Nh=1.0,Nmr=2.0](n5)(55.0,-13.0){}
    \put(54,-11){$b$}
    \put(51,-17){$s\!+\!r$}
    \node[Nfill=y,fillcolor=Black,Nw=1.0,Nh=1.0,Nmr=2.0](n6)(67.0,-13.0){}
    \put(66,-11){$c$}
    \put(63,-17){$s\!+\!2r$}
    \node[Nfill=y,fillcolor=Black,Nw=1.0,Nh=1.0,Nmr=2.0](n8)(79.0,-13.0){}
    \put(78,-11){$b$}
    \put(75,-17){$s\!+\!3r$}
    \node[Nfill=y,fillcolor=White,linecolor=White,Nw=0.0,Nh=0.0,Nmr=2.0](n11)(90.0,-13.0){}
    \put(83,-11){$\dots$}
    \drawedge[AHnb=0](n1,n11){ }

    \put(0,-29){$w_1$}
    \node[Nfill=y,fillcolor=Black,Nw=1.0,Nh=1.0,Nmr=2.0](n1)(7.0,-28.0){}
    \put(6,-32){$s$}
    \node[Nfill=y,fillcolor=Black,Nw=1.0,Nh=1.0,Nmr=2.0](n2)(19.0,-28.0){}
    \put(18,-26){$c$}
    \put(14,-32){$s\!-\!2r$}
    \node[Nfill=y,fillcolor=Black,Nw=1.0,Nh=1.0,Nmr=2.0](n3)(31.0,-28.0){}
    \put(30,-26){$b$}
    \put(28,-32){$s\!-\!r$}
    \node[Nfill=y,fillcolor=Black,Nw=1.0,Nh=1.0,Nmr=2.0](n4)(43.0,-28.0){}
    \put(42,-26){$c$}
    \put(40,-32){$s\!+\!r$}
    \node[Nfill=y,fillcolor=Black,Nw=1.0,Nh=1.0,Nmr=2.0](n5)(55.0,-28.0){}
    \put(54,-26){$b$}
    \put(50,-32){$s\!+\!2r$}
    \node[Nfill=y,fillcolor=Black,Nw=1.0,Nh=1.0,Nmr=2.0](n6)(67.0,-28.0){}
    \put(66,-26){$c$}
    \put(63,-32){$s\!+\!3r$}
    \node[Nfill=y,fillcolor=Black,Nw=1.0,Nh=1.0,Nmr=2.0](n8)(79.0,-28.0){}
    \put(78,-26){$b$}
    \put(76,-32){$s\!+\!4r$}
    \node[Nfill=y,fillcolor=White,linecolor=White,Nw=0.0,Nh=0.0,Nmr=2.0](n11)(90,-28.0){}
    \put(83,-26){$\dots$}
    \drawedge[AHnb=0](n1,n11){ }

    \end{picture}

\end{proof}
	
    As a corollary, together with the fact that every $\mtl$-formula is equivalent to a $\tptl^1$-formula we obtain the following.

    \begin{corollary}
	\label{corollary_tptl_mtl}
	$\tptl^1$ is strictly more expressive than $\mtl$.
    \end{corollary}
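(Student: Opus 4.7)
The plan is to establish the two inequalities $\mtl \preccurlyeq \tptl^1$ and $\tptl^1 \not\preccurlyeq \mtl$, which together yield $\mtl \prec \tptl^1$.

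For the first inequality I will give a direct translation from $\mtl$ to $\tptl^1$. The translation $(\cdot)^{\star}$ is defined by induction on the structure of the formula: propositional variables, negation and conjunction are mapped homomorphically, and a constrained until is translated by
\[
(\varphi_1 \U_I \varphi_2)^{\star} \; := \; x.\bigl(\varphi_1^{\star} \U (x \in I \wedge \varphi_2^{\star})\bigr),
\]
where $x$ is a single fixed register variable. The key point is that the same variable $x$ can be reused inside the subformulas, since each occurrence of an until operator re-freezes $x$ before any inner reference to it occurs; this is a standard observation and a short induction on the formula structure shows that $(w,i) \mtlmodels \varphi$ iff $(w,i,\nu) \tptlmodels \varphi^{\star}$ for every valuation $\nu$. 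In particular, the translation uses only one register variable, so $\varphi^{\star} \in \tptl^1$, which establishes $\mtl \preccurlyeq \tptl^1$.

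For the strictness, I will invoke Proposition~\ref{xfffequal}, which already exhibits the $\regtptlexeq{1}$-formula $x.\finally\finally\finally(x=0)$ as a witness that is not definable in $\mtl$. Since $\regtptlexeq{1} \subseteq \tptl^1$, this is in particular a $\tptl^1$-formula with no $\mtl$-equivalent, giving $\tptl^1 \not\preccurlyeq \mtl$. Combining both directions yields $\mtl \prec \tptl^1$.

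The only nontrivial part is the translation lemma, and even that is completely routine; no real obstacle arises since the heavy lifting (the inexpressibility result via the EF game) has already been done in Proposition~\ref{xfffequal}.
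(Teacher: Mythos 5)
Your proof is correct and follows the same route as the paper: the paper likewise derives the corollary from the (stated-as-straightforward) translation of $\mtl$ into $\tptl^1$, using a single re-frozen register variable exactly as in your $(\varphi_1 \U_I \varphi_2)^{\star} = x.(\varphi_1^{\star} \U (x \in I \wedge \varphi_2^{\star}))$, combined with Proposition~\ref{xfffequal} for strictness. You merely make the translation explicit where the paper leaves it implicit.
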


	\subsection{The MTL Definability Decision Problem}
    For many logics whose expressiveness has been shown to be in a strict inclusion relation, the definability decision problem has been considered. For example, it is well known that Monadic second-order logic (MSO) defines exactly regular languages. Its first-order fragment (FO) defines the star-free languages which is a proper subset of regular languages. The problem of whether a MSO formula is equivalent to an FO formula over words is decidable. In our case the problem is stated as follows: Given a $\tptl$-formula $\varphi$, is $\varphi$  definable in $\mtl$? We show in the following, using the EF game method, that this problem is undecidable. First, we prove a Lemma.

    \begin{lemma}\label{lemma_pre2}
          Given an arbitrary $\mathcal{I}\in \mathsf{FCons}(\Z)$, let $r,s\in\N$ be such that all numbers in $\mathcal{I}$ are contained in $(-r,+r)$. For each $k\in \N$, if the data word $w_{0}$ is of the following form:

    \begin{picture}(90,20)(0,-20)
    \put(-2,-14){$w_{0}$}
    \node[Nfill=y,fillcolor=Black,Nw=1.0,Nh=1.0,Nmr=2.0](n1)(7.0,-13.0){}
    \put(6,-11){$P_{0}$}
    \put(6,-17){$s$}
    \node[Nfill=y,fillcolor=Black,Nw=1.0,Nh=1.0,Nmr=2.0](n2)(21.0,-13.0){}
    \put(20,-11){$P_{0}$}
    \put(17,-17){$s\!+\!r$}
    \node[Nfill=y,fillcolor=Black,Nw=1.0,Nh=1.0,Nmr=2.0](n3)(35.0,-13.0){}
    \put(33,-11){$P_{0}$}
    \put(29,-17){$s\!+\!2r$}
    \node[Nfill=y,fillcolor=Black,Nw=0.0,Nh=0.0,Nmr=0.0](n4)(49.0,-13.0){}
    \put(49,-11){$\dots$}
    \node[Nfill=y,fillcolor=Black,Nw=1.0,Nh=1.0,Nmr=2.0](n5)(63.0,-13.0){}
    \put(61,-11){$P_{0}$}
    \put(54,-17){$s\!+\!(k\!+\!1)r$}
    \node[Nfill=y,fillcolor=Black,Nw=1.0,Nh=1.0,Nmr=2.0](n6)(77.0,-13.0){}
    \put(75,-11){$P_{1}$}
    \put(76,-17){$d_{0}$}
    \node[Nfill=y,fillcolor=Black,Nw=1.0,Nh=1.0,Nmr=2.0](n8)(91.0,-13.0){}
    \put(89,-11){$P_{2}$}
    \put(90,-17){$d_{1}$}
    \node[Nfill=y,fillcolor=White,linecolor=White,Nw=0.0,Nh=0.0,Nmr=2.0](n11)(105.0,-13.0){}
    \put(95,-11){$\dots$}
    \drawedge[AHnb=0](n1,n11){ }
    \put(7,-8){$\overbrace{\hphantom{hshshshshssfefekfoefssssssssssss}}^{k+2}$}
    \end{picture}\\
    where $P_i\subseteq \prop, d_{i}\geq s+(k+2)r, (i\geq 0)$, and $w_1$ is defined by $w_{1}=w_{0}[1]$, then Duplicator has a winning strategy on the game $\mathrm{MG}^{\mathcal{I}}_{k}(w_{0},0,w_{1},0)$.
    \end{lemma}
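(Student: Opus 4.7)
The plan is to prove the statement by induction on $k$, using a \emph{shift-by-one} correspondence between positions of $w_0$ and $w_1$ that exploits the identity $w_1[j] = w_0[j+1]$. Writing $M$ for the largest finite element of $\mathcal{I}$, the hypothesis $M<r$ ensures that any integer exceeding $M$ lies in the single region $(M,+\infty)$, a fact I will use repeatedly to certify the data-value region condition. The base case $k=0$ is immediate, since both starting positions carry label $P_0$.

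For the inductive step, Duplicator's first-round strategy is as follows. If Spoiler picks some $i_0'\geq 2$ in $w_0$, Duplicator answers with $i_1' = i_0'-1$ in $w_1$ (symmetrically, $i_1'\geq 2$ in $w_1$ is answered by $i_0' = i_1'+1$ in $w_0$). Then $w_0[i_0'] = w_1[i_1']$ as labelled tuples, and the two difference values $d_{i_0'}^{w_0} - s$ and $d_{i_1'}^{w_1} - (s+r)$ are both strictly greater than $M$, hence share the region $(M,+\infty)$; crucially the suffixes of $w_0$ from $i_0'$ and of $w_1$ from $i_1'$ are \emph{identical} data words, so Lemma~\ref{lemma_pre} (with $c=0$) gives Duplicator a winning strategy for the residual $(k-1)$-round game. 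In the boundary case where Spoiler picks position $1$ in either word the shift is blocked (it would map to the forbidden position $0$), so Duplicator plays the diagonal response and picks position $1$ in the other word. The labels agree (both $P_0$, provided $k\geq 1$) and both difference values equal $r$, so the round conditions are met.

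The diagonal response leaves a residual game $\mathrm{MG}^{\mathcal{I}}_{k-1}(w_0,1,w_1,1)$, which I would handle by applying the inductive hypothesis to the shifted words $w_0[1]$ and $w_1[1]$. One checks directly that the pair $(w_0[1],\,w_1[1] = w_0[1][1])$ fits the Lemma's template, with the prefix length reduced from $k+2$ to $k+1$, the parameter $k$ replaced by $k-1$, and the base value $s$ replaced by $s+r$; the hypothesis on the suffix values carries over since $d_i\geq s+(k+2)r=(s+r)+((k-1)+2)r$. When Spoiler instead chooses an intermediate position in step~4, Duplicator applies the very same shift/diagonal rule to produce a matching intermediate label in the other word, and the analysis of the ensuing $(k-1)$-round game splits into the same two cases.

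The main obstacle I anticipate is precisely the boundary case in which Spoiler plays position $1$: the naive shift fails, and the diagonal response leaves behind two data words whose suffixes are not identical, so Lemma~\ref{lemma_pre} alone no longer suffices. The resolution is the observation that the residual configuration still matches the Lemma's form after trimming one element of the prefix, so the induction on $k$ closes the argument. A secondary, routine point is the verification of the region condition, which relies essentially on the hypothesis that every finite element of $\mathcal{I}$ lies in $(-r,r)$, thereby collapsing every pair $(jr,(j-1)r)$ with $j\geq 2$ into the single region $(M,+\infty)$.
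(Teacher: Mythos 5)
Your proposal is correct and follows essentially the same route as the paper's proof: induction on $k$, with the shift-by-one response (justified by Lemma~\ref{lemma_pre}) when Spoiler plays a position $\geq 2$, and the diagonal response at position $1$ handled by the induction hypothesis, including the same case split for Spoiler's intermediate move. The only difference is that you make explicit the region-condition check and the re-instantiation of the lemma's template for $w_0[1]$ and $w_1[1]$, which the paper leaves implicit.
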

    \begin{proof}
        The proof is by induction on $k$. It is trivial when $k=0$. Suppose the statement holds for $k$, we must show that it also holds for $k+1$, \ie, Duplicator has a winning strategy for the game $\mathrm{MG}^{\mathcal{I}}_{k+1}(w_{0},0,w_{1},0)$. We give the winning strategy for Duplicator as follows:
        \begin{itemize}
          \item  $(w_{l}\!:\!1, w_{(1-l)}\!:\!1),(l\in\{0,1\})$. Then, by induction hypothesis, Duplicator has a winning strategy for the game $\textrm{MG}_{k}^{\mathcal{I}}(w_{0},1,w_{1},1)$.

          \item $(w_{0}\!:\!i, w_{1}\!:\!i-1),(i\geq2)$. Then by Lemma \ref{lemma_pre}, Duplicator has a winning strategy for the game $\textrm{MG}_{k}^{\mathcal{I}}(w_{0},i,w_{1},i-1)$. Moreover, for the second move of Spoiler in this round, if $(w_{1}\!:\!j, w_{0}\!:\!j+1),$ $(0<j<i-1)$, by Lemma \ref{lemma_pre}, Duplicator has a winning strategy for the game $\textrm{MG}_{k}^{\mathcal{I}}(w_{0},j+1,w_{1},j)$.

          \item $(w_{1}\!:\!i, w_{0}\!:\!i+1),(i\geq2)$.
           Then by Lemma \ref{lemma_pre}, Duplicator has a winning strategy for the game $\textrm{MG}_{k}^{\mathcal{I}}(w_{0},i+1,w_{1},i)$. Moreover, for the second move, if $(w_{0}\!:\!1, w_{1}\!:\!1)$, by induction hypothesis, Duplicator has a winning strategy for the game $\textrm{MG}_{k}^{\mathcal{I}}(w_{0},1,w_{1},1)$. Otherwise, if $(w_{0}\!:\!j, w_{1}\!:\!j-1),(1<j<i+1)$, by Lemma \ref{lemma_pre}, Duplicator has a winning strategy for the game $\textrm{MG}_{k}^{\mathcal{I}}(w_{0},j,w_{1},j-1)$.
        \end{itemize}
        This completes the proof.

    \end{proof}

	\begin{theorem}\label{theorem_mtldefinability}
		The problem, whether a given $\tptl$-formula is definable in $\mtl$, is undecidable.
	\end{theorem}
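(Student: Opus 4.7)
The plan is to prove undecidability by a reduction from the satisfiability problem for $\tptl$ over $\omega$-data words. This problem is undecidable, because the satisfiability problem for $\freeze$ is undecidable~\cite{DBLP:journals/tocl/DemriL09} and $\freeze$ is a syntactic fragment of $\tptl$. Given a $\tptl$-formula $\psi$, my aim is to construct effectively a $\tptl$-formula $\varphi_\psi$ such that $\psi$ is satisfiable if and only if $\varphi_\psi$ is \emph{not} definable in $\mtl$.

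The construction of $\varphi_\psi$ is tailored so that the $\mtl$-equivalent witnesses required by Theorem~\ref{theorem_property_mtl} are supplied directly by Lemma~\ref{lemma_pre2}. Let $P_0,P_1,P_2,\ldots$ be fresh propositional variables not occurring in $\psi$. The formula $\varphi_\psi$ will be a conjunction of three parts: (i) a shape constraint asserting that the word begins with a $P_0$-labelled prefix of length at least $2$ whose data values form an arithmetic progression with constant step, and continues with a $(P_1,\cdot)(P_2,\cdot)\ldots$-labelled suffix; (ii) a relativized copy of $\psi$ evaluated on that suffix, obtained by renaming the propositions of $\psi$ into $P_1,P_2,\ldots$; and (iii) a $\tptl$-gadget $\gamma$ that, using the freeze quantifier, compares data values of the very first position and of designated later positions so as to hold on the $w_0$-shape of Lemma~\ref{lemma_pre2} but fail on the shift $w_1=w_0[1]$.

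For the two directions of the reduction: if $\psi$ is unsatisfiable, then no suffix can witness the relativized $\psi$, so $L(\varphi_\psi)=\emptyset$, which is trivially $\mtl$-definable (for instance by $p\wedge\neg p$). If $\psi$ is satisfiable, pick a model $u$ of $\psi$; since the $\tptl$-semantics only refers to differences $d_i-\nu(x)$, one may translate all data values of $u$ upwards by any fixed integer and still have $u\models\psi$. For every $\mathcal{I}\in\mathsf{FCons}(\Z)$ and $k\in\N$, choose $r,s$ as in Lemma~\ref{lemma_pre2}, shift $u$ so that all its data values exceed $s+(k+2)r$, and set $w_0$ to be the $P_0$-arithmetic-progression prefix of length $k+2$ (with data values $s,s+r,\ldots,s+(k+1)r$) followed by the shifted $u$; then let $w_1=w_0[1]$. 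Lemma~\ref{lemma_pre2} yields $(w_0,0)\sim_k^\mathcal{I}(w_1,0)$, and by construction $w_0\models\varphi_\psi$ while $w_1\not\models\varphi_\psi$, so Theorem~\ref{theorem_property_mtl} certifies that $L(\varphi_\psi)$ is not $\mtl$-definable.

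The principal technical obstacle is crafting the gadget $\gamma$: it must be a single parameter-free $\tptl$-formula distinguishing the pair $(w_0,w_1)$ of Lemma~\ref{lemma_pre2} for \emph{every} $k$ and $\mathcal{I}$ simultaneously. Viewed from their respective first positions, $w_0$ and $w_1$ have nearly identical data-difference structure---Duplicator's strategy in Lemma~\ref{lemma_pre2} matches positions whose data differences from position $0$ agree exactly---so the discriminator has to detect the single extra $P_0$-position at the front of $w_0$ despite having only finitely many registers and a fixed set of integer constants. My intended approach is to let $\gamma$ use two registers, one frozen at the initial position and one frozen at the transition from the $P_0$-block to the $P_1$-suffix, and to assert a relation between their values and a distinguished position in the middle of the $P_0$-block. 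Proving that $\gamma$ indeed holds on $w_0$ but is defeated on $w_1$ uniformly in $(k,\mathcal{I})$, while remaining consistent with the relativized $\psi$-part, is the most delicate step; once such a $\gamma$ is produced the reduction carries through as outlined.
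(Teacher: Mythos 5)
There is a genuine gap, and it sits exactly where you locate it yourself: the gadget $\gamma$ is the entire content of the hardness direction, and you do not construct it. Worse, for the witness pair you commit to --- the $w_0$ of Lemma~\ref{lemma_pre2} and its shift $w_1=w_0[1]$ --- no such $\gamma$ can exist. Under the hypotheses of Lemma~\ref{lemma_pre2} every suffix value is at least $s+(k+2)r$, so every data difference between two positions of either word is $0$ or of absolute value at least $r$; once $r$ exceeds all constants occurring in $\gamma$ and $k$ exceeds its until rank, the region abstractions of $w_0$ and $w_0[1]$ relative to $\gamma$ are identical except for the length of the $P_0$-block ($k+2$ versus $k+1$), and a fixed $\tptl$-formula cannot count an unbounded block (Duplicator wins the corresponding $\tptl$ EF game by essentially the strategy of Lemma~\ref{lemma_pre2}). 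So any parameter-free $\gamma$, including your two-register variant, is satisfied by $w_0$ if and only if it is satisfied by $w_0[1]$ for all large $k,r$, and the reduction cannot close. The monotone arithmetic progression is exactly the wrong shape: it gives Duplicator her win in the $\mtl$ game, but by the same token it gives her a win in the $\tptl$ game.

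The paper escapes this by decoupling the two roles. It takes the already-established non-$\mtl$-definable formula $x.\finally\finally\finally(x=0)$ of Proposition~\ref{xfffequal} as the gadget and chooses \emph{non-monotonic} witness words whose prefixes are $s,\,s-2r,\,s-r,\,s,\,s+r,\dots$ and $s,\,s-r,\,s,\,s+r,\dots$: the value $s$ recurs at distance $3$ in $w_0$ but only at distance $2$ in $w_1$, which the gadget detects uniformly in $k$ and $\mathcal{I}$, while Lemma~\ref{lemma_pre2} is used only as a sub-strategy for Duplicator in later rounds of the $\mtl$ game. Your choice of reduction source ($\tptl$/$\freeze$ satisfiability rather than the recurrent state problem for two-counter machines) is a legitimate variation, and your treatment of the unsatisfiable case and of shift-invariance is fine; to repair the proof you should replace your witness pair by one of this non-monotonic shape (or simply conjoin $x.\finally\finally\finally(x=0)$ with $\finally$ of the relativized $\psi$, as the paper does with $\finally\varphi_M$), after which the argument goes through.
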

	\begin{proof}

    The recurrent state problem for two-counter machines is defined as follows: given a two-counter machine M, does there exist a computation of M that visits the initial instruction infinitely often? Alur and Henzinger showed that this problem is $\mathrm{\Sigma}^1_1$-hard~\cite{DBLP:journals/jacm/AlurH94}. We reduce the recurrent state problem to the $\mtl$ definability decision problem in the following way: For each two-counter machine M, we construct a $\tptl$-formula $\psi_{M}$ such that $\psi_{M}$ is definable in $\mtl$ iff M is a negative instance of the recurrent state problem.

     We use the fact that for each two-counter machine M there is a $\tptl$-formula $\varphi_{M}$ which is satisfiable iff M is a positive instance of the recurrent state problem~\cite{DBLP:journals/jacm/AlurH94}. Define $\psi_{M}=(x.\finally\finally\finally(x=0))\wedge \finally\varphi_{M}$. If $\varphi_{M}$ is unsatisfiable, then $\psi_{M}$ is definable by the $\mtl$-formula $\false$. Otherwise, if $\varphi_{M}$ is satisfiable, we will prove that $\psi_{M}$ is not definable in $\mtl$. We show that for each $\mathcal{I}\in \mathsf{FCons}(\Z)$ and $k\in \N$, there is no formula in $\mtl^{\mathcal{I}}_{k}$ that is equivalent to $\psi_{M}$.

    For an arbitrary $\mathcal{I}\in \mathsf{FCons}(\Z)$, let $r,s\in\N$ be such that all numbers in $\mathcal{I}$ are contained in $(-r,+r)$ and $s\geq 2r$. Suppose $k\geq1$. By an exploration of the proof in \cite{DBLP:journals/jacm/AlurH94} we can find that there is no propositional variable occurring in $\varphi_{M}$, and by Lemma \ref{lemma_pre}, if a data word satisfies $\varphi_{M}$, then the new data word obtained by adding the same arbitrary value to every data value in the original word still satisfies $\varphi_{M}$. Hence we can assume that the data word $w$ satisfying $\varphi_{M}$ is of the form:

        \begin{picture}(75,13)(0,-13)
        \put(0,-7){$w$}
        \node[Nfill=y,fillcolor=Black,Nw=1.0,Nh=1.0,Nmr=2.0](n1)(7.0,-6.0){}
        \put(6,-10){$d_{0}$}
        \node[Nfill=y,fillcolor=Black,Nw=1.0,Nh=1.0,Nmr=2.0](n2)(17.0,-6.0){}
        \put(16,-10){$d_{1}$}
        \node[Nfill=y,fillcolor=Black,Nw=1.0,Nh=1.0,Nmr=2.0](n3)(27.0,-6.0){}
        \put(26,-10){$d_{2}$}
        \node[Nfill=y,fillcolor=Black,Nw=1.0,Nh=1.0,Nmr=2.0](n4)(37.0,-6.0){}
        \put(36,-10){$d_{3}$}
        \node[Nfill=y,fillcolor=Black,Nw=1.0,Nh=1.0,Nmr=2.0](n5)(47.0,-6.0){}
        \put(46,-10){$d_{4}$}
        \node[Nfill=y,fillcolor=White,linecolor=White,Nw=0.0,Nh=0.0,Nmr=2.0](n6)(57.0,-6.0){}
        \put(52,-10){$\dots$}
        \drawedge[AHnb=0](n1,n6){ }
        \end{picture}\\
where $d_{i}\geq s+(k+1)r$ for each $i\geq 0$. We define the following two data words $w_{0}$ and $w_{1}$:

        \begin{picture}(75,26)(0,-26)
        \put(0,-7){$w_0$}
        \node[Nfill=y,fillcolor=Black,Nw=1.0,Nh=1.0,Nmr=2.0](n1)(7.0,-6.0){}
        \put(6,-10){$s$}
        \node[Nfill=y,fillcolor=Black,Nw=1.0,Nh=1.0,Nmr=2.0](n2)(17.0,-6.0){}
        \put(13,-10){$s\!-\!2r$}
        \node[Nfill=y,fillcolor=Black,Nw=1.0,Nh=1.0,Nmr=2.0](n3)(27.0,-6.0){}
        \put(24,-10){$s\!-\!r$}
        \node[Nfill=y,fillcolor=Black,Nw=1.0,Nh=1.0,Nmr=2.0](n4)(37.0,-6.0){}
        \put(36,-10){$s$}
        \node[Nfill=y,fillcolor=Black,Nw=1.0,Nh=1.0,Nmr=2.0](n5)(47.0,-6.0){}
        \put(42,-10){$s\!+\!r$}
        \node[Nfill=y,fillcolor=Black,Nw=0.0,Nh=0.0,Nmr=0.0](n6)(57.0,-6.0){}
        \put(50,-10){$\dots$}
        \node[Nfill=y,fillcolor=Black,Nw=1.0,Nh=1.0,Nmr=2.0](n8)(67.0,-6.0){}
        \put(56,-10){$s\!+\!(k\!-\!1)r$}
        \node[Nfill=y,fillcolor=Black,Nw=1.0,Nh=1.0,Nmr=2.0](n9)(77.0,-6.0){}
        \put(74,-10){$s\!+\!kr$}
        \node[Nfill=y,fillcolor=Black,Nw=1.0,Nh=1.0,Nmr=2.0](n10)(87.0,-6.0){}
        \put(86,-10){$d_{0}$}
        \node[Nfill=y,fillcolor=Black,Nw=1.0,Nh=1.0,Nmr=2.0](n11)(97.0,-6.0){}
        \put(96,-10){$d_{1}$}
        \node[Nfill=y,fillcolor=White,linecolor=White,Nw=0.0,Nh=0.0,Nmr=2.0](n12)(107.0,-6.0){}
        \put(102,-10){$\dots$}
        \drawedge[AHnb=0](n1,n12){}

        \put(0,-20){$w_1$}
        \node[Nfill=y,fillcolor=Black,Nw=1.0,Nh=1.0,Nmr=2.0](n1)(7.0,-19.0){}
        \put(6,-23){$s$}
        \node[Nfill=y,fillcolor=Black,Nw=1.0,Nh=1.0,Nmr=2.0](n2)(17.0,-19.0){}
        \put(14,-23){$s\!-\!r$}
        \node[Nfill=y,fillcolor=Black,Nw=1.0,Nh=1.0,Nmr=2.0](n3)(27.0,-19.0){}
        \put(26,-23){$s$}
        \node[Nfill=y,fillcolor=Black,Nw=1.0,Nh=1.0,Nmr=2.0](n4)(37.0,-19.0){}
        \put(32,-23){$s\!+\!r$}
        \node[Nfill=y,fillcolor=Black,Nw=0.0,Nh=0.0,Nmr=0.0](n5)(47.0,-19.0){}
        \put(40,-23){$\dots$}
        \node[Nfill=y,fillcolor=Black,Nw=1.0,Nh=1.0,Nmr=2.0](n6)(57.0,-19.0){}
        \put(46,-23){$s\!+\!(k\!-\!1)r$}
        \node[Nfill=y,fillcolor=Black,Nw=1.0,Nh=1.0,Nmr=2.0](n8)(67.0,-19.0){}
        \put(64,-23){$s\!+\!kr$}
        \node[Nfill=y,fillcolor=Black,Nw=1.0,Nh=1.0,Nmr=2.0](n9)(77.0,-19.0){}
        \put(76,-23){$d_{0}$}
        \node[Nfill=y,fillcolor=Black,Nw=1.0,Nh=1.0,Nmr=2.0](n10)(87.0,-19.0){}
        \put(86,-23){$d_{1}$}
        \node[Nfill=y,fillcolor=Black,Nw=1.0,Nh=1.0,Nmr=2.0](n11)(97.0,-19.0){}
        \put(96,-23){$d_{2}$}
        \node[Nfill=y,fillcolor=White,linecolor=White,Nw=0.0,Nh=0.0,Nmr=2.0](n12)(107.0,-19.0){}
        \put(102,-23){$\dots$}
        \drawedge[AHnb=0](n1,n12){}
        \end{picture}

        Clearly, $w_{0}\tptlmodels \psi_{M}$ and $w_{1}\not \tptlmodels \psi_{M}$. To show that there is no formula in $\mtl^{\mathcal{I}}_{k}$ that is equivalent to $\psi_M$, we prove that Duplicator has a winning strategy for the game $\textrm{MG}_{k}^{\mathcal{I}}(w_{0},0,w_{1},0)$. The winning strategy for Duplicator in the first round is the same as the one that we give in the proof of Lemma \ref{lemma_pre2}. By Lemma \ref{lemma_pre} and \ref{lemma_pre2}, Duplicator can win the remaining rounds.

        Since $\mtl=\bigcup^{k\in \N}_{\mathcal{I}\in\mathsf{FCons}(\Z)}\mtl^{\mathcal{I}}_{k}$, we know by the argument given above that there is no formula in $\mtl$ that is equivalent to $\psi_{M}$ if $\varphi_{M}$ is satisfiable.

	\end{proof}

	\subsection{Effects on the Expressiveness of MTL by Restriction of syntactic Resources}

    We use the EF game for $\mtl$ to show the effects of restricting syntactic resources of $\mtl$-formulas. We start with restrictions on the class of constraints occurring in an $\mtl$-formula. For each $n\in \Z$, define $\varphi^n = \finally_{[n,n]} \true$.

	\begin{lemma}
		\label{lem:MITLsing}
        Let $\mathcal{I}_{1},\mathcal{I}_{2}\in\mathsf{Cons}(\Z)$, for each $n\in\Z$, if $n\in\mathcal{I}_{1}$ and $n-1,n$ or $n,n+1$ are not in $\mathcal{I}_{2}$, then $\varphi^n$ is definable in $\mtl^{\mathcal{I}_{1}}$ but not in $\mtl^{\mathcal{I}_{2}}$.
	\end{lemma}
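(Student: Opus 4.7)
The definability direction is immediate: since $n\in\mathcal{I}_{1}$, the formula $\varphi^{n}=\finally_{[n,n]}\true$ is syntactically an $\mtl^{\mathcal{I}_{1}}$-formula. For the non-definability direction I plan to apply Theorem \ref{theorem_property_mtl} together with the decomposition $\mtl^{\mathcal{I}_{2}}=\bigcup\{\mtl^{\mathcal{I}'}_{k}:\mathcal{I}'\in\mathsf{FCons}(\Z),\,\mathcal{I}'\subseteq\mathcal{I}_{2},\,k\in\N\}$: for each such finite $\mathcal{I}'$ and each $k$, it suffices to produce data words $w_{0}\models\varphi^{n}$, $w_{1}\not\models\varphi^{n}$ with $(w_{0},0)\sim_{k}^{\mathcal{I}'}(w_{1},0)$.

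By the hypothesis, one of $\{n-1,n\}$ or $\{n,n+1\}$ is disjoint from $\mathcal{I}_{2}$, hence from every $\mathcal{I}'\subseteq\mathcal{I}_{2}$; let $m$ be the member of the chosen pair other than $n$, so $m\in\{n-1,n+1\}$. Fix a single propositional profile $P\in 2^{\prop}$ and an $M\in\N$ big enough that the values below are non-negative, and define
\[
w_{0}=(P,M)(P,M+n)(P,M+n)(P,M+n)\cdots,\qquad w_{1}=(P,M)(P,M+m)(P,M+m)(P,M+m)\cdots.
\]
Then $w_{0}\models\varphi^{n}$ (position $1$ witnesses a jump of exactly $n$ from position $0$) and $w_{1}\not\models\varphi^{n}$ (every jump from position $0$ in $w_{1}$ equals $m\neq n$). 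The key observation is that $n$ and $m$ lie in the same $\mathcal{I}'$-region: neither of them is in $\mathcal{I}'$, and any $c\in\mathcal{I}'$ with $c<n$ must satisfy $c<m$ as well, and symmetrically for $c>n$, so the smallest $\mathcal{I}'$-interval containing $n$ also contains $m$.

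Finally I will give Duplicator's winning strategy for $\mathrm{MG}_{k}^{\mathcal{I}'}(w_{0},0,w_{1},0)$: always copy Spoiler's index. This maintains the invariant that the two current positions are equal and either both equal $0$ or both lie in the constant-valued tail. In the first round the relevant jumps from position $0$ are $n$ in $w_{0}$ and $m$ in $w_{1}$, which share a region by the argument above; in every later round both starting positions sit in their constant tails, so every jump equals $0$ and the region condition becomes trivial. Propositional profiles agree throughout because all positions carry $P$, so the intermediate-position moves of the game are also handled by copying. The main technical point is the region argument above, where the hypothesis on $\mathcal{I}_{2}$ is used in an essential way; once that is settled the game analysis reduces to copying indices and a trivial induction on $k$.
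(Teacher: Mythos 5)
Your proof is correct and follows the paper's intended method: reduce to the finite-constraint fragments $\mtl^{\mathcal{I}'}_k$ via the decomposition of $\mtl^{\mathcal{I}_2}$, exhibit a pair of witness words whose only nontrivial jump is $n$ versus $m\in\{n-1,n+1\}$, and win the EF game by copying indices, with the hypothesis on $\mathcal{I}_2$ entering exactly where you place it, namely in showing that $n$ and $m$ lie in the same region with respect to every $\mathcal{I}'\subseteq\mathcal{I}_2$. The only cosmetic point is that the non-definability step really rests on Theorem~\ref{theorem_equiv_win} (game equivalence implies $\equiv_k^{\mathcal{I}'}$) rather than on Theorem~\ref{theorem_property_mtl} as literally stated, since the latter is phrased for definability in all of $\mtl$; your argument already supplies the needed quantification over $\mathcal{I}'$ and $k$, so this is a matter of citation, not substance.
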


    Let $\mathcal{I}[n]=\{m\in \Z \mid m\leq n\}\cup\{-\infty,+\infty\}$. The expressive power relation $\preccurlyeq$ defines a linear order on the set $\{\mtl^{\mathcal{I}[n]}\mid n\in \Z \}$ such that if $n_{1}\leq n_{2}$, then $\mtl^{\mathcal{I}[n_{1}]}\preccurlyeq \mtl^{\mathcal{I}[n_{2}]}$. We have $\mtl= \bigcup \{\mtl^{\mathcal{I}[n]}\mid n\in \Z \}$.

    \begin{proposition}\label{prop_mtl_linear}\emph{(Linear Constraint Hierarchy of $\mtl$)}\\
        For each $n_{1},n_{2}\in \Z$, if $n_{1} < n_{2}$, then $\mtl^{\mathcal{I}[n_{1}]}\prec \mtl^{\mathcal{I}[n_{2}]}$.
    \end{proposition}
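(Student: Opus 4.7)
The claim breaks into two parts: the inclusion $\mtl^{\mathcal{I}[n_1]} \preccurlyeq \mtl^{\mathcal{I}[n_2]}$ and its strictness. The inclusion is immediate from the syntactic definitions: since $n_1 < n_2$, we have $\mathcal{I}[n_1] \subseteq \mathcal{I}[n_2]$, so every formula whose until-interval endpoints lie in $\mathcal{I}[n_1]$ also has its endpoints in $\mathcal{I}[n_2]$; thus any $\mtl^{\mathcal{I}[n_1]}$-formula is literally an $\mtl^{\mathcal{I}[n_2]}$-formula. I would state this as a one-line observation.

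For strictness, my plan is to apply Lemma \ref{lem:MITLsing} directly with a well-chosen witness constant $n$. Set $\mathcal{I}_1 := \mathcal{I}[n_2]$ and $\mathcal{I}_2 := \mathcal{I}[n_1]$, and take $n := n_2$. Then $n \in \mathcal{I}_1$ since $n_2 \leq n_2$, so the first hypothesis of the lemma is satisfied. For the second hypothesis, I use the ``$n, n+1$'' disjunct: both $n_2$ and $n_2+1$ are integers strictly greater than $n_1$ (using $n_1 < n_2 < n_2+1$), hence neither belongs to $\mathcal{I}[n_1] = \{m \in \Z \mid m \leq n_1\} \cup \{-\infty, +\infty\}$. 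Lemma \ref{lem:MITLsing} then yields that $\varphi^{n_2} = \finally_{[n_2,n_2]} \true$ is definable in $\mtl^{\mathcal{I}[n_2]}$ but not in $\mtl^{\mathcal{I}[n_1]}$, which is exactly the separating formula we need.

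Combining the two parts gives $\mtl^{\mathcal{I}[n_1]} \prec \mtl^{\mathcal{I}[n_2]}$, finishing the proof. There is no real obstacle here beyond choosing the witness $n$ correctly: the entire difficulty has been absorbed into Lemma \ref{lem:MITLsing} (whose non-definability direction is presumably handled via the EF game for $\mtl$, showing that a jump of exactly $n$ in data values cannot be detected without a constant equal to, or adjacent to, $n$ in the allowed constraint set). So the proof should be essentially a two-line corollary of the preceding lemma.
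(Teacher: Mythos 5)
Your proof is correct and follows the route the paper intends: the inclusion is the trivial syntactic observation $\mathcal{I}[n_1]\subseteq\mathcal{I}[n_2]$, and strictness is obtained by instantiating Lemma~\ref{lem:MITLsing} with a constant exceeding $n_1$ but not $n_2$ (your choice $n=n_2$ satisfies both hypotheses, since $n_2\in\mathcal{I}[n_2]$ while $n_2,n_2+1\notin\mathcal{I}[n_1]$). Nothing further is needed.
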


       In Proposition \ref{prop_mtl_linear} we show that $\mtl^{\mathcal{I}[n+1]}$ is strictly more expressive than  $\mtl^{\mathcal{I}[n]}$. Intuitively, if $\mathcal{I}_2$ is a proper subset of $\mathcal{I}_1$, one should expect that $\mtl^{\mathcal{I}_1}$ is more powerful than $\mtl^{\mathcal{I}_2}$. But in general this is not true. For example, $\mtl^{\mathcal{I}_{1}}$ with $\mathcal{I}_{1}=\{-\infty,0,1,2,+\infty\}$ has the same expressive power as $\mtl^{\mathcal{I}_{2}}$ where $\mathcal{I}_{2}=\mathcal{I}_{1}\backslash \{1\}$, since we can use $0$ and $2$ to express constraints that use the constant $1$. It is natural to ask, for $\mathcal{I}\in \mathsf{Cons}(\Z)$, what is the minimal subset $\mathcal{I}'$ of $\mathcal{I}$ such that $\mtl^{\mathcal{I}'}\equiv \mtl^{\mathcal{I}}$. In the following we give another constraint hierarchy.

        Let $\mathsf{EVEN}$ be the subset of $\mathsf{Cons}(\Z)$ where only even numbers are in consideration. Let $\mathbf{even}\in \mathsf{EVEN}$ be the set that contains all even numbers. It is easily seen that $\mtl^{\mathbf{even}} \equiv \mtl$. Given $\mathcal{I}_{1},\mathcal{I}_{2}\in \mathsf{EVEN}$, if $\mathcal{I}_{1}\subsetneq \mathcal{I}_{2}$, by Lemma \ref{lem:MITLsing}, we have $\mtl^{\mathcal{I}_{1}}\prec \mtl^{\mathcal{I}_{2}}$. The expressive power relation $\preccurlyeq$ defines a partial order on the set $\{\mtl^{\mathcal{I}}\mid \mathcal{I}\in \mathsf{EVEN}\}$.

        \begin{proposition}\label{prop_mtl_lattice}\emph{(Lattice Constraint Hierarchy of $\mtl$)}\\
        $\left\langle \{\mtl^{\mathcal{I}}\mid \mathcal{I}\in \mathsf{EVEN}\},\preccurlyeq \right\rangle $ constitutes a complete lattice in which
        \begin{itemize}
        \item[(i)] the greatest element is $\mtl^{\mathbf{even}}$,
        \item[(ii)] the least element is $\mtl^{\{-\infty,+\infty\}}$,
        \end{itemize}
        and for each nonempty subset $S\subseteq\mathsf{EVEN}$,
        \begin{itemize}
        \item[(iii)] $\bigwedge_{\mathcal{I}\in S}\mtl^{\mathcal{I}}=\mtl^{\bigcap_{\mathcal{I}\in S}\mathcal{I}}$,
        \item[(iv)] $\bigvee_{\mathcal{I}\in S}\mtl^{\mathcal{I}}=\mtl^{\bigcup_{\mathcal{I}\in S}\mathcal{I}}$.
        \end{itemize}
        \end{proposition}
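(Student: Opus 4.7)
The plan is to show that the map $\mathcal{I} \mapsto \mtl^{\mathcal{I}}$ (taken modulo logical equivalence) is an order isomorphism from $(\mathsf{EVEN}, \subseteq)$ onto $(\{\mtl^{\mathcal{I}} \mid \mathcal{I} \in \mathsf{EVEN}\}, \preccurlyeq)$, and then to read off the lattice structure from the trivial fact that $(\mathsf{EVEN}, \subseteq)$ is a complete lattice under intersection and union.

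The core claim is that for all $\mathcal{I}_1, \mathcal{I}_2 \in \mathsf{EVEN}$,
\[
\mtl^{\mathcal{I}_1} \preccurlyeq \mtl^{\mathcal{I}_2} \quad \Longleftrightarrow \quad \mathcal{I}_1 \subseteq \mathcal{I}_2.
\]
The direction from right to left is immediate, since every $\mtl^{\mathcal{I}_1}$-formula is already syntactically an $\mtl^{\mathcal{I}_2}$-formula whenever $\mathcal{I}_1 \subseteq \mathcal{I}_2$. For the converse I would contrapose: assume $\mathcal{I}_1 \not\subseteq \mathcal{I}_2$ and pick $n \in \mathcal{I}_1 \setminus \mathcal{I}_2$. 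Since $\mathcal{I}_1 \in \mathsf{EVEN}$ and $n \neq \pm\infty$, $n$ must be an even integer, so $n+1$ is odd and therefore also lies outside $\mathcal{I}_2$. Hence $\{n, n+1\} \cap \mathcal{I}_2 = \emptyset$, and Lemma \ref{lem:MITLsing} supplies a formula $\varphi^n$ that is definable in $\mtl^{\mathcal{I}_1}$ but not in $\mtl^{\mathcal{I}_2}$, whence $\mtl^{\mathcal{I}_1} \not\preccurlyeq \mtl^{\mathcal{I}_2}$. The displayed equivalence in particular implies $\mtl^{\mathcal{I}_1} \equiv \mtl^{\mathcal{I}_2}$ iff $\mathcal{I}_1 = \mathcal{I}_2$, so the map is both well-defined on sets and an order-isomorphism onto its image.

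Next I would verify that $(\mathsf{EVEN}, \subseteq)$ is a complete lattice with intersections as meets and unions as joins. Any family $\{\mathcal{I}_j\}_{j \in S}$ of elements of $\mathsf{EVEN}$ consists of sets containing only even integers together with $\{-\infty, +\infty\}$; this property is preserved by arbitrary intersection and union, so $\bigcap_{j \in S} \mathcal{I}_j$ and $\bigcup_{j \in S} \mathcal{I}_j$ both remain in $\mathsf{EVEN}$ and are manifestly the greatest lower and least upper bounds in $(\mathsf{EVEN}, \subseteq)$. The top of this lattice is $\mathbf{even}$ and the bottom is $\{-\infty, +\infty\}$. Transporting these facts via the order-isomorphism from the previous paragraph yields items (i)--(iv) simultaneously.

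I do not expect a real obstacle here, since Lemma \ref{lem:MITLsing} has absorbed all of the technical work: the parity restriction built into $\mathsf{EVEN}$ is precisely what prevents a missing constant $n$ from being simulated using its odd neighbors $n-1, n+1$, which is exactly the hypothesis the lemma requires. The only point worth stating carefully is antisymmetry of $\preccurlyeq$ on equivalence classes (needed to obtain a genuine lattice rather than a preorder), and this is immediate from the biconditional established above.
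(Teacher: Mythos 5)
Your proof is correct and follows essentially the same route as the paper: the paper likewise derives the strictness $\mtl^{\mathcal{I}_1}\prec\mtl^{\mathcal{I}_2}$ for $\mathcal{I}_1\subsetneq\mathcal{I}_2$ from Lemma \ref{lem:MITLsing} via the parity argument, and then reads off the complete-lattice structure from the order isomorphism with $(\mathsf{EVEN},\subseteq)$ (indeed it notes the isomorphism with $\langle\mathcal{P}(X),\subseteq\rangle$ immediately after the proposition). Nothing further is needed.
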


    Note that $\left\langle \{\mtl^{\mathcal{I}}\mid \mathcal{I}\in \mathsf{EVEN}\},\preccurlyeq \right\rangle $ is isomorphic to the complete lattice $\left\langle \mathcal{P}(X),\subseteq \right\rangle $, where $X$ is a countable infinite set, $\mathcal{P}(X)$ is the powerset of $X$ and $\subseteq$ is the containment relation.

    Next we show that, as for $\ltl$~\cite{DBLP:conf/lics/EtessamiW96}, there is a strict until hierarchy for $\mtl$.
    \begin{proposition}
    \label{prop:MTL_until_rank}
	For all $k\in \N$, $\mtl_{k+1}$ is strictly more expressive than $\mtl_{k}$.
    \end{proposition}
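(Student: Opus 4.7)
The plan is to transfer the analogous strict until hierarchy for $\ltl$ of Etessami and Wilke~\cite{DBLP:conf/lics/EtessamiW96} into the $\mtl$ setting via the EF game for $\mtl$. Since $\ltl$ is the syntactic fragment of $\mtl$ in which every $\U$ is annotated with the universal interval $(-\infty,+\infty)$, and this annotation preserves the until rank, every $\ltl$-formula of until rank $k+1$ lies in $\mtl_{k+1}$. First I pick a separating $\ltl$-formula $\varphi_{k+1}$ of until rank $k+1$ together with the infinite words $u_0, u_1 \in (2^\prop)^\omega$ supplied by the Etessami--Wilke construction, so that $u_0 \models \varphi_{k+1}$, $u_1 \not\models \varphi_{k+1}$, and Duplicator wins the $k$-round strict-until $\ltl$ EF game on $(u_0, u_1)$.

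Next I lift $u_0$ and $u_1$ to data words $w_0, w_1 \in (2^\prop \times \N)^\omega$ by assigning the data value $0$ to every position. Plainly $w_0 \mtlmodels \varphi_{k+1}$ and $w_1 \not\mtlmodels \varphi_{k+1}$, since the trivial interval $(-\infty,+\infty)$ imposes no extra requirement. Now fix an arbitrary $\mathcal{I} \in \mathsf{FCons}(\Z)$ and consider the game $\textrm{MG}_k^{\mathcal{I}}(w_0, 0, w_1, 0)$. Every difference of data values that arises during play equals $0$, so the ``same region with respect to $\mathcal{I}$'' condition on Duplicator's reply in step 2 is vacuously satisfied, regardless of how $\mathcal{I}$ is chosen. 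All other ingredients of the game --- the strict order on positions, the propositional matching, and the intermediate-point move in step 4 --- coincide with those of the Etessami--Wilke $\ltl$ EF game on $(u_0, u_1)$. Duplicator can therefore simply replay her $\ltl$ winning strategy and obtain $(w_0, 0) \sim_k^{\mathcal{I}} (w_1, 0)$.

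Applying Theorem~\ref{theorem_equiv_win}, I conclude that $(w_0, 0) \equiv_k^{\mathcal{I}} (w_1, 0)$ for every $\mathcal{I} \in \mathsf{FCons}(\Z)$. Since $\mtl_k = \bigcup_{\mathcal{I} \in \mathsf{FCons}(\Z)} \mtl_k^{\mathcal{I}}$, no formula of $\mtl_k$ separates $w_0$ from $w_1$, while $\varphi_{k+1} \in \mtl_{k+1}$ does. Combined with the trivial inclusion $\mtl_k \preccurlyeq \mtl_{k+1}$, this yields $\mtl_k \prec \mtl_{k+1}$. The main point to verify carefully is that the constant-data lift really reduces the $\mtl$ EF game to the $\ltl$ EF game uniformly in $\mathcal{I}$ --- in particular, that the region condition in step 2 trivialises for every $\mathcal{I}$ simultaneously, and that the intermediate move in step 4 (which only checks propositional agreement) matches exactly the corresponding move in Etessami and Wilke's game. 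Once this correspondence is established, the argument goes through without further complication.
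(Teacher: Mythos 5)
Your overall strategy is sound and genuinely different from the paper's. You reduce the $\mtl$ until hierarchy to the $\ltl$ one by lifting $\omega$-words to constant-data words, and your key observation is correct: when every data value is $0$, the region condition in step~2 of $\textrm{MG}_{k}^{\mathcal{I}}$ is vacuous for every $\mathcal{I}$ simultaneously, so the game collapses to the purely propositional $k$-round strict-until game, and the remaining bookkeeping (reading an $\ltl$-formula as an $\mtl$-formula with interval $\Z$ without changing the rank, Theorem~\ref{theorem_equiv_win}, and $\mtl_k=\bigcup_{\mathcal{I}\in\mathsf{FCons}(\Z)}\mtl_k^{\mathcal{I}}$) is fine. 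The paper proceeds differently: it exhibits the explicit witnesses $\varphi[k+1]=p\wedge\X(p\wedge\X(\cdots))$ together with data words whose values increase by $r$ at each step, and obtains Duplicator's strategy by reusing Lemma~\ref{lemma_pre2}, which it needs anyway for the undecidability result. Your constant-data lift is arguably the cleaner route for this particular proposition.

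The step you should not outsource to a citation is the existence of the pair $(\varphi_{k+1},u_0,u_1)$. Etessami and Wilke's notion of until depth does \emph{not} count $\X$, whereas this paper's until rank does (here $\X_I\varphi:=\false\,\U_I\varphi$, so each $\X$ costs one rank), and their game is formulated for non-strict until with a separate next-move. Consequently their until-hierarchy witnesses of until depth $k+1$ may have until rank strictly larger than $k+1$ in the present counting, in which case they would not lie in $\mtl_{k+1}$ and your conclusion would not follow; likewise ``Duplicator wins their $k$-round game'' is not literally the statement ``Duplicator wins the data-free restriction of $\textrm{MG}_k$.'' What you actually need is the operator-depth hierarchy for strict-until $\ltl$ with $\X$ counted. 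This is true, but it has to be exhibited: the simplest witnesses are exactly the paper's $\varphi[k+1]$ together with $u_0=p^{k+2}q^\omega$ and $u_1=p^{k+1}q^\omega$, and the assertion that Duplicator survives $k$ rounds on this pair is precisely the induction carried out in Lemma~\ref{lemma_pre2}. Note that Lemma~\ref{lemma_pre2} as stated requires the increasing data values $s,s+r,\dots$, so it does not apply verbatim to your constant-data words; you must either redo that (easy) induction in the constant-data setting or adopt the paper's witnesses. Once that combinatorial core is supplied rather than cited, your proof closes.
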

    \begin{proof}
	Define $\varphi[1]=(p\wedge\X p)$ and
	$\varphi[k+1]=(p\wedge \X\varphi[k])$ for every $k\geq 1$.
	Note that for each $k\geq 1$, $\varphi[k]\in\mtl_k$.
	We show that for each $\mathcal{I}\in \mathsf{FCons}(\Z), k\geq 0$,
    $\varphi[k+1]$ is not definable in $\mtl^{\mathcal{I}}_{k}$.
	Let $r \in\N$ be such that all numbers in $\mathcal{I}$ are contained in $(-r,+r)$.
	Define two data words $w_0$ and $w_1$ as follows:

	\begin{picture}(90,35)(0,-35)
    \put(0,-14){$w_0$}
    \node[Nfill=y,fillcolor=Black,Nw=1.0,Nh=1.0,Nmr=2.0](n1)(7.0,-13.0){}
    \put(7,-11){$p$}
    \put(6,-17){$0$}
    \node[Nfill=y,fillcolor=Black,Nw=1.0,Nh=1.0,Nmr=2.0](n2)(19.0,-13.0){}
    \put(19,-11){$p$}
    \put(18,-17){$r$}
    \node[Nfill=y,fillcolor=Black,Nw=1.0,Nh=1.0,Nmr=2.0](n3)(31.0,-13.0){}
    \put(31,-11){$p$}
    \put(29,-17){$2r$}
    \node[Nfill=y,fillcolor=Black,Nw=0.0,Nh=0.0,Nmr=0.0](n4)(43.0,-13.0){}
    \put(43,-11){$\dots$}
    \node[Nfill=y,fillcolor=Black,Nw=1.0,Nh=1.0,Nmr=2.0](n5)(55.0,-13.0){}
    \put(55,-11){$p$}
    \put(48,-17){$(k\!+\!1)r$}
    \node[Nfill=y,fillcolor=Black,Nw=1.0,Nh=1.0,Nmr=2.0](n6)(67.0,-13.0){}
    \put(66,-11){$q$}
    \put(61,-17){$(k\!+\!2)r$}
    \node[Nfill=y,fillcolor=Black,Nw=1.0,Nh=1.0,Nmr=2.0](n8)(79.0,-13.0){}
    \put(78,-11){$q$}
    \put(75,-17){$(k\!+\!3)r$}
    \node[Nfill=y,fillcolor=White,linecolor=White,Nw=0.0,Nh=0.0,Nmr=2.0](n11)(90.0,-13.0){}
    \put(83,-11){$\dots$}
    \drawedge[AHnb=0](n1,n11){ }
    \put(7,-9){$\overbrace{\hphantom{hshshshshssfefekfoefssssssss}}^{k+2}$}

    \put(0,-29){$w_1$}
    \node[Nfill=y,fillcolor=Black,Nw=1.0,Nh=1.0,Nmr=2.0](n1)(7.0,-28.0){}
    \put(7,-26){$p$}
    \put(6,-32){$r$}
    \node[Nfill=y,fillcolor=Black,Nw=1.0,Nh=1.0,Nmr=2.0](n2)(19.0,-28.0){}
    \put(19,-26){$p$}
    \put(18,-32){$2r$}
    \node[Nfill=y,fillcolor=Black,Nw=0.0,Nh=0.0,Nmr=0.0](n3)(31.0,-28.0){}
    \put(31,-26){$\dots$}
    \node[Nfill=y,fillcolor=Black,Nw=1.0,Nh=1.0,Nmr=2.0](n4)(43.0,-28.0){}
    \put(43,-26){$p$}
    \put(37,-32){$(k\!+\!1)r$}
    \node[Nfill=y,fillcolor=Black,Nw=1.0,Nh=1.0,Nmr=2.0](n5)(55.0,-28.0){}
    \put(54,-26){$q$}
    \put(50,-32){$(k\!+\!2)r$}
    \node[Nfill=y,fillcolor=Black,Nw=1.0,Nh=1.0,Nmr=2.0](n6)(67.0,-28.0){}
    \put(66,-26){$q$}
    \put(63,-32){$(k\!+\!3)r$}
    \node[Nfill=y,fillcolor=Black,Nw=1.0,Nh=1.0,Nmr=2.0](n8)(79.0,-28.0){}
    \put(78,-26){$q$}
    \put(76,-32){$(k\!+\!4)r$}
    \node[Nfill=y,fillcolor=White,linecolor=White,Nw=0.0,Nh=0.0,Nmr=2.0](n11)(90,-28.0){}
    \put(83,-26){$\dots$}
    \drawedge[AHnb=0](n1,n11){ }
    \put(7,-24){$\overbrace{\hphantom{hshshshshssfefekffff}}^{k+1}$}

    \end{picture}

    We see that $w_{0}\mtlmodels \varphi[k+1]$ and $w_{1}\not \mtlmodels \varphi[k+1]$. By Lemma \ref{lemma_pre2} and Theorem \ref{theorem_property_mtl}, there is no formula in $\mtl^{\mathcal{I}}_{k}$ that is equivalent to $\varphi[k+1]$. Since $\mtl_{k}=\bigcup_{\mathcal{I}\in \mathsf{FCons}(\Z)}\mtl^{\mathcal{I}}_{k}$, $\varphi[k+1]$ is not definable in $\mtl_{k}$.

\end{proof}

     As for the $\mtl$ definability decision problem, we can show that the $\mtl_{k}$ definability decision problem which asks whether the data language defined by an $\mtl_{k+1}$-formula is definable in $\mtl_{k}$ is undecidable. As a corollary, we know that whether an $\mtl$-formula is equivalent to an $\mtl_{k}$-formula is undecidable.

    \begin{proposition} \label{prop:MTLrank_definability}
    There exists $m\in \N$ such that for every $k\geq m$, the problem whether a formula $\varphi\in \mtl_{k+1}$ is definable in $\mtl_{k}$ is undecidable.
	\end{proposition}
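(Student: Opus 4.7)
The plan is to adapt the strategy used in the proof of Theorem \ref{theorem_mtldefinability}. However, since the input formula must now lie in $\mtl_{k+1}$ rather than in $\tptl$, the TPTL formula $\psi_M$ from that proof cannot be used directly. Instead, I would invoke the fact that satisfiability of $\mtl$ over non-monotonic data words is undecidable (cf.\ \cite{DBLP:conf/lata/CarapelleFGQ14}), via a reduction from the recurrent state problem for two-counter machines that produces, from each machine $M$, an $\mtl$-formula $\sigma_M$ which is satisfiable if and only if $M$ is a positive instance. Let $m$ be an upper bound on the until rank of the formulas $\sigma_M$ in this uniform family; by an argument analogous to the one used for $\varphi_M$ in Theorem \ref{theorem_mtldefinability}, one may moreover assume $\sigma_M$ uses no propositional variables and that any witness model can be shifted to have arbitrarily large data values (Lemma \ref{lemma_pre}).

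For every $k \geq m$, given a two-counter machine $M$, I would define
$$\chi_M \; := \; \varphi[k+1] \;\wedge\; \finally\, \sigma_M,$$
where $\varphi[k+1]$ is the $\mtl_{k+1}$-formula used in the proof of Proposition \ref{prop:MTL_until_rank}. Since $\sigma_M$ has until rank at most $m \leq k$, the formula $\chi_M$ indeed belongs to $\mtl_{k+1}$. If $M$ is a negative instance, then $\sigma_M$, and hence also $\finally \sigma_M$, is unsatisfiable, so $\chi_M \equiv \false$, which lies in $\mtl_k$.

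If $M$ is a positive instance, pick a witness data word $u$ with $u \mtlmodels \sigma_M$, shifted so that all its data values exceed any chosen threshold. I would then build two data words $w_0$ and $w_1$ patterned on those in the proof of Proposition \ref{prop:MTL_until_rank}, but with $u$ appended as a common tail (placed after the $(k+2)$-prefix of $\p$-positions used there) so that $(w_0,0)$ and $(w_1,0)$ both satisfy $\finally\, \sigma_M$. Consequently $w_0 \mtlmodels \chi_M$ while $w_1 \not\mtlmodels \chi_M$. Applying Lemmas \ref{lemma_pre} and \ref{lemma_pre2} to deal with the shared suffix $u$ and the $p$-prefix respectively, I would establish $(w_0, 0) \sim^{\mathcal{I}}_k (w_1, 0)$ for every $\mathcal{I} \in \mathsf{FCons}(\Z)$, and hence by Theorem \ref{theorem_property_mtl} conclude that $\chi_M$ is not definable in $\mtl_k$. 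Combined with the negative case, this gives the desired reduction.

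The main obstacle is twofold. First, I need to ensure that a uniform $\mtl$-encoding $\sigma_M$ with a common bound $m$ on the until rank exists; this relies on the precise form of the undecidability proof for $\mtl$-satisfiability on non-monotonic data words and is the reason the statement quantifies existentially over $m$. Second, the EF-game analysis of Lemma \ref{lemma_pre2} must be checked to continue working when the simple pattern used there is replaced by the arbitrary (but shifted) witness $u$; here the key point is that Spoiler's moves into $u$ fall in the region where the two words coincide, so Duplicator can always copy the position, and moves targeting the decorated prefix are handled exactly as in Lemma \ref{lemma_pre2}. Both issues look technical rather than conceptual.
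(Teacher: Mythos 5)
Your proposal is correct and takes essentially the approach the paper intends (the paper only sketches this proposition as an analogue of Theorem \ref{theorem_mtldefinability}): conjoin the rank-separating formula $\varphi[k+1]$ with $\finally\sigma_M$ for a uniformly rank-bounded $\mtl$ encoding $\sigma_M$ of the recurrent state problem, and reuse Lemmas \ref{lemma_pre} and \ref{lemma_pre2} on the words of Proposition \ref{prop:MTL_until_rank} with the shifted witness appended as a common tail. The only details to add are that the propositions of $\sigma_M$ should be chosen disjoint from $p$ (or a non-$p$ buffer position inserted before the tail) so that $w_1=w_0[1]$ really falsifies $\varphi[k+1]$, and that Lemma \ref{lemma_pre2} is already stated for an arbitrary suffix with sufficiently large data values, so the game analysis you worry about needs no re-verification.
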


\section{The Ehrenfeucht-Fra\"iss\'e Game for TPTL}

    In Proposition \ref{xfffequal} we have proved that there is an $\regtptlexeq{1}$-formula that is not definable in $\mtl$, and we concluded that $\tptl^1$ is strictly more expressive than $\mtl$. A natural question is to ask for the relation between $\mtl$, $\tptlun$ and $\tptleq$. For this, we define the EF game for $\tptl$.	

    The \emph{until rank} of a $\tptl$-formula $\varphi$, denoted by $\urk(\varphi)$, is defined analogously to that of $\mtl$-formulas in Sect. \ref{until_rank}; we additionally define $\urk(x\in I)=0$ and $\urk(x.\varphi)=\urk(\varphi)$. Let $\mathcal{I}\in\mathsf{Cons}(\Z), k\geq 0, n\geq 1$, we define
    \begin{align*}
    & \tptl^{\mathcal{I}}= \{\varphi\in\tptl \mid \text{ for each subformula } x\in I \text{ of } \varphi, \text{ the endpoints of }I \text{ belong to } \mathcal{I} \},\\
    &\tptl^{n} =\{\varphi \in \tptl \mid  \text{the register variables in } \varphi \text{ are from } \{x_1,\dots,x_n \}\},\\
    &\tptl_{k} =\{\varphi\in\tptl \mid \urk(\varphi)\leq k\},\quad \tptl_{k}^{n,\mathcal{I}}=\tptl^{n}\cap \tptl^{\mathcal{I}}\cap \tptl_{k}.
    \end{align*}

    \begin{lemma}
    \label{lemma_tptl_I_k_finite}
	For each $\mathcal{I}\in \mathsf{FCons}(\Z)$, $n\geq 1$ and $k\geq 0$, there are only finitely many formulas
	in $\tptl^{n,\mathcal{I}}_{k}$ up to equivalence.
    \end{lemma}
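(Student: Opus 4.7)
The plan is to proceed by induction on the until rank $k$, following the outline of Lemma~\ref{lemma_mtl_I_k_finite} for $\mtl$, but with extra care to handle the register atoms $x\in I$ and the freeze operator $x.\varphi$.

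For the base case $k=0$, the approach is to show that the truth value of any rank-$0$ formula on a triple $(w,i,\nu)$ depends only on the finite \emph{local profile} $(P_i, R_1, \ldots, R_n)$, where $R_l$ is the smallest interval with endpoints in $\mathcal{I}\cup\{-\infty,+\infty\}$ containing $d_i - \nu(x_l)$. This follows by structural induction on rank-$0$ formulas: propositional atoms depend only on $P_i$; constraint atoms $x_l\in I$ depend only on $R_l$; boolean connectives preserve the dependence; and in the freeze case $x_l.\varphi$, one evaluates $\varphi$ on a triple whose profile is obtained by replacing $R_l$ with the region containing $0$, which is itself determined by the original profile. Since $\mathcal{I}$ is finite and induces only finitely many regions, and since $\prop$ and $\{x_1,\ldots,x_n\}$ are finite, there are only finitely many local profiles, hence only finitely many rank-$0$ formulas up to equivalence.

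For the inductive step, assume that $\tptl^{n,\mathcal{I}}_k$ has finitely many equivalence classes, with representatives $\psi_1, \ldots, \psi_m$. I would argue that every $\tptl^{n,\mathcal{I}}_{k+1}$-formula is equivalent to a boolean combination of atoms, representatives $\psi_i$, and formulas of the form $\vec{y}_S.(\psi_i \U \psi_j)$ where $S \subseteq \{x_1, \ldots, x_n\}$ and $\vec{y}_S$ denotes the freeze prefix that resets exactly the registers in $S$. To obtain this normal form, one pushes freezes inward using the equivalences $x.(\varphi_1 \wedge \varphi_2) \equiv (x.\varphi_1) \wedge (x.\varphi_2)$ and $x.\neg\varphi \equiv \neg(x.\varphi)$; since freezes do not advance the position, they commute pairwise ($x.y.\varphi \equiv y.x.\varphi$) and are idempotent ($x.x.\varphi \equiv x.\varphi$), so any stack of freezes collapses to a subset of $\{x_1,\ldots,x_n\}$.

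The main obstacle is that freezes cannot be pushed through the until operator, since the frozen value $d_i$ must be retained throughout the temporal look-ahead. Hence the pushing process terminates precisely at atoms and at until-subformulas, producing exactly the normal form above. Because there are only finitely many propositions, finitely many allowed intervals, finitely many subsets $S$, and (by the inductive hypothesis) finitely many representatives $\psi_i$, there are only finitely many building blocks, and therefore only finitely many boolean combinations up to equivalence. A final small check is that replacing a rank-$k$ subformula by an equivalent representative $\psi_i$ inside a freeze-prefixed until really preserves semantics; this follows from the fact that semantic equivalence of triples is stable under the updates $\nu \mapsto \nu[x\mapsto d]$ performed by the outer freezes, together with the fact that the semantics of $\U$ carries the same (updated) valuation through the look-ahead.
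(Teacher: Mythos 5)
Your proof is correct and follows the standard counting argument that the paper relies on for this lemma (induction on the until rank, with finitely many building blocks at each level); the one genuinely TPTL-specific difficulty --- that $\urk(x.\varphi)=\urk(\varphi)$ permits unbounded freeze nesting at a fixed rank --- is handled properly by pushing freezes through booleans, collapsing and commuting them, and stopping at atoms and until-subformulas. Your closing remark is also the right one: the equivalence must be taken over all triples $(w,i,\nu)$, not just initial configurations, for the substitution of representatives inside $\vec{y}_S.(\psi_i\,\U\,\psi_j)$ to be a congruence.
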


    Let $w_{0},w_{1}$ be two data words, and $i_{0},i_{1}\geq 0$ be positions in $w_0,w_1$, respectively, and $\nu_0,\nu_1$ be two register valuations. We say that $(w_0,i_0,\nu_0)$ and $(w_1,i_1,\nu_1)$ are \emph{$\tptl^{n,\mathcal{I}}_k$-equivalent}, written $(w_{0},i_{0},\nu_0)\equiv_{k}^{n,\mathcal{I}}(w_{1},i_{1},\nu_1)$,
    if for each formula $\varphi\in\tptl_{k}^{n,\mathcal{I}}$, $(w_{0},i_{0},\nu_0)\tptlmodels\varphi$ iff
    $(w_{1},i_{1},\nu_1)\tptlmodels\varphi$.

    The $k$-round $\tptl$ EF game on $(w_{0},i_{0},\nu_{0})$ and $(w_{1},i_{1},\nu_{1})$ with respect to $n$ and $\mathcal{I}$, denoted by $\textrm{TG}^{n,\mathcal{I}}_{k}(w_0,i_0,\nu_0,w_1,i_1,\nu_1)$, is played by Spoiler and Duplicator on $w_{0}$ and $w_{1}$ starting from $i_{0}$ in $w_{0}$ with valuation $\nu_{0}$ and $i_{1}$ in $w_{1}$ with valuation $\nu_{1}$.

    We say that $(i_0,\nu_0)$ and $(i_1,\nu_1)$ \emph{agree in the atomic formulas in $\tptl^{n,\mathcal{I}}$}, if $(w_0,i_0,\nu_0)\tptlmodels p$ iff $(w_1,i_1,\nu_1)\tptlmodels p$ for for each $p\in\prop$, and $(w_0,i_0,\nu_0)\tptlmodels x\in I$ iff $(w_1,i_1,\nu_1)\tptlmodels x\in I$ for each formula $x\in I$ in $\tptl^{n,\mathcal{I}}$.

    Analogously to the EF game for $\mtl$, $\textrm{TG}^{n,\mathcal{I}}_{k}(w_0,i_0,\nu_0,w_1,i_1,\nu_1)$ is defined inductively. If $k=0$, then Spoiler wins if $(i_0,\nu_0)$ and $(i_1,\nu_1)$ do not agree in the atomic formulas in $\tptl^{n,\mathcal{I}}$. Otherwise, Duplicator wins. Suppose $k>0$, in the first round,

    \begin{enumerate}
    \item Spoiler wins this round if $(i_0,\nu_0)$ and $(i_1,\nu_1)$ do not agree in the atomic formulas in $\tptl^{n,\mathcal{I}}$. Otherwise, Spoiler chooses a subset $Y$ (maybe empty) of $\{x_1,\dots,x_n \}$ and sets $\nu'_{l}=\nu_{l}[x:=d_{i_{l}}(x\in Y)]$ for all $l\in \{0,1\}$. Then Spoiler chooses a word $w_{l}$ for some $l\in\{0,1\}$ and a position $i_{l}'>i_{l}$ in $w_{l}$.
    \item Then Duplicator tries to choose a position $i_{(1-l)}'>i_{(1-l)}$ in $w_{(1-l)}$ such that $(i'_0,\nu'_0)$ and $(i'_1,\nu'_1)$ agree in the atomic formulas in $\tptl^{n,\mathcal{I}}$. If Duplicator fails, then Spoiler wins this round.
    \item Then, Spoiler has two options: either he chooses to start a new game $\textrm{TG}^{n,\mathcal{I}}_{k-1}(w_0,i'_0,\nu'_0,w_1,i'_1,\nu'_1)$; or
    \item Spoiler chooses a position $i_{(1-l)}<i_{(1-l)}''<i_{(1-l)}'$ in $w_{(1-l)}$. Then Duplicator tries to respond by choosing a position $i_{l}<i_{l}''<i_{l}'$ in $w_{l}$ such that $(i''_0,\nu'_0)$ and $(i''_1,\nu'_1)$ agree in the atomic formulas in $\tptl^{n,\mathcal{I}}$. If Duplicator fails to do so, Spoiler wins this round.
    \item If Spoiler cannot win in Step 1, 2 or 4, then Duplicator wins this round. Then Spoiler chooses to start a new game
        $\textrm{TG}^{n,\mathcal{I}}_{k-1}(w_0,i''_0,\nu'_0,w_1,i''_1,\nu'_1)$.
    \end{enumerate}

    If Duplicator has a winning strategy for the game $\textrm{TG}^{n,\mathcal{I}}_{k}(w_0,i_0,\nu_0,w_1,i_1,\nu_1)$, then we denote it by $(w_{0},i_{0},\nu_0)\sim_{k}^{n,\mathcal{I}}(w_{1},i_{1},\nu_1)$.

    \begin{theorem} \label{theorem_sim_equiv_tptl}
    For each $\mathcal{I}\in \mathsf{FCons}(\Z)$, $n\geq 1, k\geq 0$, $(w_{0},i_0,\nu_{0})\equiv_{k}^{n,\mathcal{I}}(w_{1},i_1,\nu_{1})$
	if and only if $(w_{0},i_0,\nu_{0})\sim_{k}^{n,\mathcal{I}}(w_{1},i_1,\nu_{1})$.
    \end{theorem}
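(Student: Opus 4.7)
The plan is to prove the equivalence by induction on the round count $k$. The base case $k=0$ is immediate: by the definitions of the two relations, both collapse to agreement on the atomic formulas of $\tptl^{n,\mathcal{I}}_0$, which is precisely Duplicator's win condition in the $0$-round game. For the inductive step, I assume the theorem at rank $k-1$ and treat the two implications in turn.

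For the direction $(\sim_k^{n,\mathcal{I}} \Rightarrow \equiv_k^{n,\mathcal{I}})$, I proceed by structural induction on $\varphi \in \tptl^{n,\mathcal{I}}_k$. Atomic and boolean cases follow directly from the round-$1$ agreement condition. For $\varphi = x.\psi$, I claim that $(w_0,i_0,\nu_0[x \mapsto d_{i_0}]) \sim_k^{n,\mathcal{I}} (w_1,i_1,\nu_1[x \mapsto d_{i_1}])$: given any Spoiler move $(Y, w_l, i'_l)$ in the shifted game, Duplicator simulates the original strategy using $(Y \cup \{x\}, w_l, i'_l)$, and one checks that the resulting reset valuations coincide, so every subsequent configuration is identical. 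Since $\urk(\psi) = \urk(x.\psi) \leq k$, the structural IH on $\psi$ concludes this case. For $\varphi = \varphi_1 \U \varphi_2$ with $\urk(\varphi_j) \leq k-1$, if $(w_0,i_0,\nu_0) \tptlmodels \varphi$ with witness $j_0$, Spoiler plays $(Y = \emptyset, w_0, j_0)$ followed by option 3(a); Duplicator's response $j_1 > i_1$ gives $(w_0,j_0,\nu_0) \sim_{k-1}^{n,\mathcal{I}} (w_1,j_1,\nu_1)$, and the outer IH on $k$ lifts this to $\equiv_{k-1}^{n,\mathcal{I}}$, from which $(w_1,j_1,\nu_1) \tptlmodels \varphi_2$ follows. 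For any intermediate $j_1' \in (i_1, j_1)$, Spoiler plays option 4 with $j_1'$ and then option 5; Duplicator produces $j_0' \in (i_0, j_0)$ with $\sim_{k-1}^{n,\mathcal{I}}$-equivalence at the new pair, and the outer IH transfers $(w_0,j_0',\nu_0) \tptlmodels \varphi_1$ to $(w_1,j_1',\nu_1) \tptlmodels \varphi_1$. The symmetric case (satisfaction originating in $w_1$) is handled analogously.

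For $(\equiv_k^{n,\mathcal{I}} \Rightarrow \sim_k^{n,\mathcal{I}})$, by Lemma~\ref{lemma_tptl_I_k_finite} there are only finitely many $\equiv^{n,\mathcal{I}}_{k-1}$-equivalence classes, each characterised by some formula $\chi_\tau \in \tptl^{n,\mathcal{I}}_{k-1}$. Suppose Spoiler plays $(Y, w_l, i'_l)$ and set $\nu'_l = \nu_l[x \mapsto d_{i_l} : x \in Y]$; let $\tau$ be the $\equiv_{k-1}^{n,\mathcal{I}}$-type of $(w_l, i'_l, \nu'_l)$ and $T$ the finite set of types realised by $(w_l, j, \nu'_l)$ for $i_l < j < i'_l$. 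Writing $\{y_1, \dots, y_m\} = Y$, the formula
\[
\Phi \;=\; y_1.y_2.\cdots y_m.\Bigl(\bigl(\textstyle\bigvee_{\sigma \in T} \chi_\sigma\bigr)\,\U\,\chi_\tau\Bigr)
\]
lies in $\tptl^{n,\mathcal{I}}_k$ and is satisfied by $(w_l,i_l,\nu_l)$. By the hypothesis $\equiv_k^{n,\mathcal{I}}$, the same formula is satisfied by $(w_{1-l}, i_{1-l}, \nu_{1-l})$, supplying a position $i'_{1-l} > i_{1-l}$ whose type under the reset $\nu'_{1-l}$ is exactly $\tau$ and whose strictly intermediate positions all have types in $T$. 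Duplicator plays $i'_{1-l}$: atomic agreement at the new pair is encoded in $\chi_\tau$, and $(w_0, i'_0, \nu'_0) \equiv_{k-1}^{n,\mathcal{I}} (w_1, i'_1, \nu'_1)$ together with the outer IH yields the $\sim_{k-1}^{n,\mathcal{I}}$-equivalence required by option 3(a). If Spoiler plays option 4 with intermediate $i''_{1-l}$, its type lies in $T$, so some $i''_l \in (i_l, i'_l)$ with the same type exists and is Duplicator's reply; the outer IH again delivers $\sim_{k-1}^{n,\mathcal{I}}$ at the resulting configurations.

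The main obstacle is the construction of $\Phi$ in the second direction: one must simultaneously encode the target type $\tau$ \emph{and} the full type profile $T$ of intermediate positions into a single until-formula of rank $k$, and verify that the chosen characteristic formulas $\chi_\tau$ genuinely live in $\tptl^{n,\mathcal{I}}_{k-1}$ (which uses Lemma~\ref{lemma_tptl_I_k_finite}). The interaction of the freeze prefix $y_1.\cdots y_m.$ with the until operator is delicate, since the resets must distribute uniformly onto both the witness characterisation $\chi_\tau$ and the disjunction over intermediate types; the fact that the freeze quantifier does not increase until rank is what permits $\Phi$ to stay within $\tptl^{n,\mathcal{I}}_k$.
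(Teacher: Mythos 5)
Your proof is correct and follows exactly the argument the paper's machinery is set up for (the Etessami--Wilke-style correspondence: structural induction on formulas for the strategy-to-equivalence direction, and characteristic formulas obtained from Lemma~\ref{lemma_tptl_I_k_finite} packaged into a single rank-$k$ until formula under a freeze prefix for the converse); the paper itself omits the proof, but your construction of $\Phi$ and the handling of the reset set $Y$ via $Y\cup\{x\}$ in the $x.\psi$ case are precisely the intended details. The only points worth making explicit are that rank-$0$ formulas may still contain freeze prefixes (so the base case needs the observation that such prefixes reduce to constants or untouched atomic checks) and that an empty intermediate-type set $T$ yields the degenerate next-step formula $\false\,\U\,\chi_\tau$, but neither affects correctness.
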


    \begin{theorem}
	\label{theorem_property_tptl}
    Let $\mathbf{L}$ be a data language. For each $\mathcal{I}\in \mathsf{FCons}(\Z)$, $n\geq 1$ and $k\geq 0$, the following are equivalent:
	\begin{enumerate}
	\item $\mathbf{L}$ is not definable in $\tptl^{n,\mathcal{I}}_{k}$.
    \item There exist $w_0\in\mathbf{L}$ and $w_1\not\in\mathbf{L}$ such that $(w_0,0,\ev)\sim^{n,\mathcal{I}}_k (w_1,0,\ev)$.
    \end{enumerate}
    \end{theorem}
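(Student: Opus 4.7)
The plan is to mirror the argument sketched for Theorem~\ref{theorem_property_mtl}, combining Theorem~\ref{theorem_sim_equiv_tptl}, which identifies the game relation $\sim^{n,\mathcal{I}}_k$ with the logical equivalence $\equiv^{n,\mathcal{I}}_k$, and Lemma~\ref{lemma_tptl_I_k_finite}, which guarantees that only finitely many $\tptl^{n,\mathcal{I}}_k$-formulas exist up to equivalence. Fix $\mathcal{I}\in\mathsf{FCons}(\Z)$, $n\geq 1$, and $k\geq 0$ throughout.

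The direction $(2)\Rightarrow(1)$ is immediate: given $w_0\in\mathbf{L}$ and $w_1\notin\mathbf{L}$ with $(w_0,0,\ev)\sim^{n,\mathcal{I}}_k(w_1,0,\ev)$, Theorem~\ref{theorem_sim_equiv_tptl} yields $(w_0,0,\ev)\equiv^{n,\mathcal{I}}_k(w_1,0,\ev)$, so no formula in $\tptl^{n,\mathcal{I}}_k$ distinguishes the two words, and hence no such formula can define $\mathbf{L}$.

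For $(1)\Rightarrow(2)$, pick a finite set $\varphi_1,\dots,\varphi_N$ of representatives of the equivalence classes of $\tptl^{n,\mathcal{I}}_k$ provided by Lemma~\ref{lemma_tptl_I_k_finite}. Viewed on data words with initial position $0$ and register valuation $\ev$, the relation $\equiv^{n,\mathcal{I}}_k$ then has at most $2^N$ classes, and for any class $C$ with a representative $w_C$ the characteristic formula $\chi_C := \bigwedge_{i:\, w_C\models\varphi_i}\varphi_i \,\wedge\, \bigwedge_{i:\, w_C\not\models\varphi_i}\neg\varphi_i$ is a finite Boolean combination of $\tptl^{n,\mathcal{I}}_k$-formulas, hence belongs to $\tptl^{n,\mathcal{I}}_k$ and defines exactly $C$. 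Assume towards a contradiction that $(2)$ fails, \ie every class is either contained in $\mathbf{L}$ or disjoint from $\mathbf{L}$. Then $\mathbf{L}$ equals the finite disjunction $\bigvee\{\chi_C\mid C\subseteq\mathbf{L}\}$, and is therefore definable in $\tptl^{n,\mathcal{I}}_k$, contradicting $(1)$. So some class contains both a word $w_0\in\mathbf{L}$ and a word $w_1\notin\mathbf{L}$, and Theorem~\ref{theorem_sim_equiv_tptl} converts the resulting $\equiv^{n,\mathcal{I}}_k$-equivalence of $(w_0,0,\ev)$ and $(w_1,0,\ev)$ into $(w_0,0,\ev)\sim^{n,\mathcal{I}}_k(w_1,0,\ev)$.

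The only delicate point is checking that every $\equiv^{n,\mathcal{I}}_k$-class is itself definable \emph{within the fragment} $\tptl^{n,\mathcal{I}}_k$; this is precisely where Lemma~\ref{lemma_tptl_I_k_finite} is used, and it mirrors the analogous step from the $\mtl$ proof. Once finiteness of the class count is in hand, the characteristic-formula construction and Theorem~\ref{theorem_sim_equiv_tptl} do the rest.
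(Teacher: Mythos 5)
Your proposal is correct and follows the intended route: the backward direction is immediate from Theorem~\ref{theorem_sim_equiv_tptl}, and the forward direction combines Lemma~\ref{lemma_tptl_I_k_finite} with the closure of $\tptl^{n,\mathcal{I}}_k$ under Boolean combinations to build the characteristic formulas of the finitely many $\equiv^{n,\mathcal{I}}_k$-classes and hence a defining formula for $\mathbf{L}$ whenever no class straddles the boundary of $\mathbf{L}$. This is essentially the same argument the paper relies on (mirroring Theorem~\ref{theorem_property_mtl}), so no further changes are needed.
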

	
	\subsection{More on the Relative Expressiveness of MTL and TPTL}

    We are going to compare $\mtl$ with two fragments of $\tptl$, namely the unary fragment $\tptlun$ and the fragment $\tptleq$. Using the EF game for $\tptl$ we can prove the following results:
	\begin{proposition}
		\label{prop_mtl_tptleq}
	The $\mtl$-formula $\finally_{=1}\mathtt{True}$ is not definable in $\tptleq$.
    \end{proposition}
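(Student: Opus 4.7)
The plan is to apply the $\tptleq$-instance of Theorem~\ref{theorem_property_tptl}. Since every $\tptleq$-formula lies in $\tptl^{\mathcal{I}}$ for $\mathcal{I} = \{-\infty, 0, +\infty\} \in \mathsf{FCons}(\Z)$ (every constraint $x \in [0,0]$ has its endpoints in $\mathcal{I}$), it suffices to exhibit, for every $n \geq 1$ and $k \geq 0$, two data words $w_0 \tptlmodels \finally_{=1}\true$ and $w_1 \not\tptlmodels \finally_{=1}\true$ with $(w_0, 0, \ev) \sim_k^{n, \mathcal{I}} (w_1, 0, \ev)$.

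The witnessing pair will be $w_0 = (\emptyset, 0)(\emptyset, 1)(\emptyset, 1)(\emptyset, 1) \dots$ and $w_1 = (\emptyset, 0)(\emptyset, 2)(\emptyset, 2)(\emptyset, 2) \dots$, chosen independently of $n$ and $k$. Position $1$ of $w_0$ witnesses $d_1 - d_0 = 1$, so $w_0 \tptlmodels \finally_{=1}\true$; no position of $w_1$ has data value $1$, so $w_1 \not\tptlmodels \finally_{=1}\true$.

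Duplicator's strategy will be the \emph{mirror strategy}: whenever Spoiler plays a position $j$ in one word, Duplicator replies with position $j$ in the other word. I would then prove by induction on the rounds the invariant that the two plays are at equal positions in $w_0$ and $w_1$, and that for every register $x$ the pair $(\nu_0(x), \nu_1(x))$ lies in $\{(0,0), (1,2)\}$. The initial valuation $\ev$ assigns $(0,0)$ to every register, giving the base case. When Spoiler resets a set $Y$ of registers at the common current position $i$, each $x \in Y$ becomes $(d^{w_0}_i, d^{w_1}_i)$, which equals $(0,0)$ if $i = 0$ and $(1, 2)$ if $i \geq 1$, so the invariant is preserved; registers outside $Y$ are untouched. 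Spoiler's intermediate-position moves in step~4 are handled by the same mirroring (and the strict inequalities on positions are trivially preserved).

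The core observation, which is the only step requiring real verification, is that under this invariant, for any common position $i$ and any register $x$, the differences $d^{w_0}_i - \nu_0(x)$ and $d^{w_1}_i - \nu_1(x)$ have the same sign and are zero simultaneously. A short case split on $(i = 0$ vs.\ $i \geq 1)$ and on $(\nu_0(x), \nu_1(x)) \in \{(0,0),(1,2)\}$ confirms this (the differences take values in $\{0, 1, -1\}$ for $w_0$ and in $\{0, 2, -2\}$ for $w_1$, with matched signs in every case). Hence, for every interval $I$ with endpoints in $\{-\infty, 0, +\infty\}$, $(w_0, i, \nu_0) \tptlmodels x \in I$ iff $(w_1, i, \nu_1) \tptlmodels x \in I$, and propositional atoms agree trivially. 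Thus Duplicator wins $\textrm{TG}^{n,\mathcal{I}}_k(w_0, 0, \ev, w_1, 0, \ev)$ for every $n$ and $k$, and Theorem~\ref{theorem_property_tptl} yields that $\finally_{=1}\true$ is not definable in $\tptleq$.
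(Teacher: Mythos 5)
Your proof is correct and follows the same method the paper intends for this proposition, namely the EF game for $\tptl$ with a pair of witness words whose data values differ only by a ``scaling'' that the available constraints cannot detect; the mirror strategy, the invariant $(\nu_0(x),\nu_1(x))\in\{(0,0),(1,2)\}$, and the sign-matching case split are all verified correctly, and the reduction to Theorem~\ref{theorem_property_tptl} is sound because every $\tptleq$-formula lies in some $\tptl^{n,\mathcal{I}}_k$ with $\mathcal{I}=\{-\infty,0,+\infty\}$. One small point in your favour: the paper remarks that for the fragments $\tptleq$ and $\tptlun$ the game has to be slightly modified so that the analogue of Theorem~\ref{theorem_sim_equiv_tptl} holds, whereas you sidestep this entirely by playing the \emph{unmodified} game over the constant set $\{-\infty,0,+\infty\}$ and thereby proving the strictly stronger statement that $\finally_{=1}\true$ is not even definable in $\tptl^{\{-\infty,0,+\infty\}}$ (which contains $\tptleq$); your witness words happen to also preserve the order constraints $x>0$ and $x<0$, so this stronger route goes through at no extra cost.
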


    \begin{proposition}
	\label{prop_mtl_tptlun}
	The $\mtl$-formula $(\neg a) \U b$ is not definable in $\tptlun$.
    \end{proposition}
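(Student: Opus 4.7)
The plan is to apply an adaptation of Theorem~\ref{theorem_property_tptl} for the unary fragment: the EF game for $\tptlun$ of until-rank at most $k$ is obtained from the $\tptl$-game of Section~\ref{section_ef_mtl} by dropping Step~4 (there is no $\U$) and adding a \emph{next}-move in which Spoiler plays $j=i_{l}+1$ and Duplicator must answer $j'=i_{1-l}+1$; its soundness and completeness go through by a routine adaptation of Theorem~\ref{theorem_sim_equiv_tptl}. For a given $k$, I set $M:=k+1$ and take
\[
w_{0}=(\{c\},0)^{M}\bigl[(\{b\},0)(\{c\},0)^{M}(\{a\},0)(\{c\},0)^{M}\bigr]^{\omega},
\]
\[
w_{1}=(\{c\},0)^{M}\bigl[(\{a\},0)(\{c\},0)^{M}(\{b\},0)(\{c\},0)^{M}\bigr]^{\omega}.
\]
Then $w_{0}\mtlmodels(\neg a)\U b$ (witnessed by the $b$ at position $M$), while $w_{1}\not\mtlmodels(\neg a)\U b$ since the first $b$ in $w_{1}$ is at $2M+1$ and is preceded by the $a$ at position $M$. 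Because all data values are $0$, every reachable register valuation is identically $0$ and every atomic $x\in I$ has uniform truth value across positions, so the game reduces to its purely propositional part and the parameters $n$ and $\mathcal{I}$ play no role.

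The key observation is the shift identity $w_{0}[q]=w_{1}[q+(M+1)]$ and $w_{1}[q]=w_{0}[q+(M+1)]$ for every $q\geq 0$, which is immediate from the $2(M+1)$-periodic structure of the two words. Duplicator maintains the invariant that after each round either \textbf{(I)}~$i_{0}=i_{1}=j$ with $j\leq M-1$, or \textbf{(II)}~$i_{1}-i_{0}$ is an odd multiple of $M+1$, in which case $w_{0}[i_{0}+m]=w_{1}[i_{1}+m]$ for every $m\geq 0$. Initially we are in phase (I) with $j=0$. Spoiler's next-moves in phase (I) preserve (I), and since only $k=M-1$ rounds are played, next-chains alone never reach the first disagreement at position $M$. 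A finally-move in phase (I) picking $p>j$ in $w_{l}$ is answered by $p+(M+1)$ in $w_{1-l}$: this response is strictly in the future, matches Spoiler's label by the shift identity, and produces identical suffixes, so the game transitions into phase (II). In phase (II) Duplicator simply mirrors Spoiler by preserving the current offset, which keeps both the atomic match and the suffix identity intact under either type of move.

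The main technical step is the case analysis for the phase-(I) finally-move, where one verifies---for each of the possible labels $c$, $a$, $b$ of Spoiler's target and for either choice of word---that Duplicator's shifted response is strictly greater than her current position, propositionally matches Spoiler's target, and produces identical suffixes; all of these follow uniformly from the shift identity and $2(M+1)$-periodicity. Once Duplicator's winning strategy is established, the adapted EF game theorem yields that no $\tptlun$-formula of until-rank at most $k$ separates $w_{0}$ from $w_{1}$; as $k$ was arbitrary, the $\mtl$-formula $(\neg a)\U b$ is not definable in $\tptlun$.
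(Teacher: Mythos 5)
Your proof is correct and follows exactly the route the paper intends: adapt the EF game to the unary fragment (replace the until-move by separate $\finally$- and $\X$-moves), exhibit for each rank $k$ a pair of words separated by $(\neg a)\U b$ but not by $k$ rounds, and note that constant data values neutralize the registers and constraints. The shift identity $w_0[q]=w_1[q+(M+1)]$ together with the phase invariant gives a clean winning strategy for Duplicator, and the bound $M=k+1$ correctly prevents next-chains from reaching the first propositional disagreement.
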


    We remark that for these results, we have to slightly change the definition of the games to suit to the fragments $\tptleq$ and $\tptlun$ such that an analogous version of Theorem \ref{theorem_sim_equiv_tptl} holds. The preceding propositions yield another interesting result for $\mtl$ and these two fragments of $\tptl$.

    \begin{corollary}
    \label{cor_incomparable}
	\begin{enumerate}
	\item $\mtl$ and $\tptleq$ are incomparable.
	\item $\mtl$ and $\tptlun$ are incomparable.
    \item $\tptlun$ and $\tptleq$ are incomparable.
	\end{enumerate}
    \end{corollary}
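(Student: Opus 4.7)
The plan is to derive all three incomparability statements essentially for free from Propositions~\ref{xfffequal}, \ref{prop_mtl_tptleq}, and \ref{prop_mtl_tptlun}, by picking the right witness formula for each direction and checking that it lies in the fragment being separated from. Recall the three witnesses: Proposition~\ref{xfffequal} gives $\varphi_0 = x.\finally\finally\finally(x=0) \in \regtptlexeq{1}$ which is not definable in $\mtl$; Proposition~\ref{prop_mtl_tptleq} gives the $\mtl$-formula $\varphi_1 = \finally_{=1}\true$ which is not definable in $\tptleq$; and Proposition~\ref{prop_mtl_tptlun} gives the $\mtl$-formula $\varphi_2 = (\neg a)\U b$ which is not definable in $\tptlun$. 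Directly from the syntactic definitions of the fragments, $\regtptlexeq{1} \subseteq \tptleq \cap \tptlun$, so $\varphi_0$ serves as a witness living inside either $\tptleq$ or $\tptlun$.

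For item (1), the non-inclusion $\mtl \not\preccurlyeq \tptleq$ is immediate from Proposition~\ref{prop_mtl_tptleq} (witnessed by $\varphi_1$), and $\tptleq \not\preccurlyeq \mtl$ follows from Proposition~\ref{xfffequal} together with $\varphi_0 \in \tptleq$. Item (2) is analogous: $\mtl \not\preccurlyeq \tptlun$ is immediate from Proposition~\ref{prop_mtl_tptlun} (witnessed by $\varphi_2$), and $\tptlun \not\preccurlyeq \mtl$ follows from Proposition~\ref{xfffequal} together with $\varphi_0 \in \tptlun$.

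For item (3), I need to observe that each of $\varphi_1$ and $\varphi_2$ lies in one of the two fragments. The $\mtl$-formula $\varphi_1 = \finally_{=1}\true$ is equivalent to the $\tptlun$-formula $x.\finally(x \in [1,1])$ (the standard syntactic translation of $\mtl$ into $\tptl^1$ preserves unarity and uses a single register variable); hence $\varphi_1$ is definable in $\tptlun$ but, by Proposition~\ref{prop_mtl_tptleq}, not in $\tptleq$, giving $\tptlun \not\preccurlyeq \tptleq$. Conversely, $\varphi_2 = (\neg a)\U b$ contains no register variable and no atomic subformula of the form $x \in I$, so it is already a $\tptleq$-formula, and by Proposition~\ref{prop_mtl_tptlun} it is not definable in $\tptlun$, giving $\tptleq \not\preccurlyeq \tptlun$.

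The hard work has been done in the three cited propositions, each of which relies on a careful construction of the $\tptl$ EF game (with the fragment-specific modifications alluded to in the paragraph just before the corollary). The proof of the corollary itself has no real obstacle; the only point to keep in mind is the bookkeeping required to verify that each separating formula genuinely belongs to the fragment used in the opposite direction, and each of those memberships is immediate from the syntactic definitions of $\tptlun$, $\tptleq$, and $\regtptlexeq{1}$.
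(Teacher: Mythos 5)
Your proposal is correct and is exactly the argument the paper intends: the corollary is stated as an immediate consequence of Propositions~\ref{xfffequal}, \ref{prop_mtl_tptleq}, and \ref{prop_mtl_tptlun}, using $x.\finally\finally\finally(x=0)\in\regtptlexeq{1}\subseteq\tptleq\cap\tptlun$ as the witness against $\mtl$ and the two $\mtl$-witnesses (placed into $\tptlun$ and $\tptleq$ respectively, as you verify) for the remaining directions. No differences from the paper's route worth noting.
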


   Analogously to Theorem \ref{theorem_mtldefinability}, we can prove that the $\tptleq$ (resp., $\tptlun$) definability problem is undecidable.

	\begin{proposition}
    \label{prop:tptl_definability}
    The problem, whether a given $\tptl$-formula is definable in $\tptleq$ (resp., $\tptlun$), is undecidable.
	\end{proposition}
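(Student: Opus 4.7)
The plan is to adapt the reduction from Theorem~\ref{theorem_mtldefinability}, again starting from the $\Sigma^1_1$-hard recurrent state problem for two-counter machines. Let $\varphi_M$ be the Alur--Henzinger $\tptl$-formula that is satisfiable iff $M$ admits a recurrent computation; recall that no propositional variable occurs in it. Take $\phi_{\text{disc}}$ to be the $\tptl$-formula equivalent to $\finally_{=1}\true$ for the $\tptleq$-case, and the $\tptl$-formula $(\neg a)\,\U\, b$ for the $\tptlun$-case; by Propositions~\ref{prop_mtl_tptleq} and~\ref{prop_mtl_tptlun} neither is definable in the respective target fragment. Set
\[
\psi_M \;=\; \phi_{\text{disc}} \wedge \finally\,\varphi_M .
\]
If $\varphi_M$ is unsatisfiable, then $\psi_M\equiv\false$, which lies in every fragment considered.

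Now assume $\varphi_M$ is satisfiable and fix some $w^{*}\tptlmodels\varphi_M$. Using the adapted EF-game characterisation (the analogue of Theorem~\ref{theorem_property_tptl} mentioned after Corollary~\ref{cor_incomparable}), it suffices to exhibit, for every $\mathcal{I}\in\mathsf{FCons}(\Z)$, $n\geq 1$ and $k\geq 0$, data words $w_0\tptlmodels\psi_M$ and $w_1\not\tptlmodels\psi_M$ on which Duplicator wins the $k$-round game with parameters $(n,\mathcal{I})$ starting from position $0$ and valuation $\ev$. Such a pair is built in three blocks: the discriminating prefix taken from the proof of Proposition~\ref{prop_mtl_tptleq} (resp., Proposition~\ref{prop_mtl_tptlun}), which separates $w_0$ from $w_1$ via $\phi_{\text{disc}}$ at position $0$; a padding block in the style of Lemma~\ref{lemma_pre2} containing enough points with strictly increasing data values to absorb all $k$ game-rounds; and finally a sufficiently shifted copy of $w^{*}$, whose data values exceed all earlier values by more than any constant in $\mathcal{I}$. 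Lemma~\ref{lemma_pre} ensures that the shift preserves $w^{*}\tptlmodels\varphi_M$, so both $w_0$ and $w_1$ satisfy $\finally\,\varphi_M$, while the scale gap guarantees that no equality or region test evaluated at position $0$ can mix the suffix with the prefix.

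The central obstacle is composing Duplicator's winning strategies across the three blocks while accommodating the extra Spoiler moves available in the $\tptl$-games, in particular the choice of which subset of registers to reset each round. Within the discriminating prefix the fragment-specific strategy from Proposition~\ref{prop_mtl_tptleq} (resp., Proposition~\ref{prop_mtl_tptlun}) applies directly; within the padding block the copy strategy of Lemma~\ref{lemma_pre2} handles matchings up to depth $k$; and crossings between blocks are safe because the suffix is beyond the resolution of $\mathcal{I}$ and the padding is long enough that no reset performed in one block can meaningfully compare against a position in another. Once this composition is verified, no formula of the $(n,\mathcal{I},k)$-restriction of $\tptleq$ (resp., $\tptlun$) is equivalent to $\psi_M$; since each fragment is the union of these restrictions, $\psi_M$ is not definable in $\tptleq$ (resp., $\tptlun$), completing the reduction.
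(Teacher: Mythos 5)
Your proposal is correct and follows essentially the same route as the paper, which proves this proposition ``analogously to Theorem~\ref{theorem_mtldefinability}'': conjoin a formula that is $\tptl$-definable but not definable in the target fragment (taken from Proposition~\ref{prop_mtl_tptleq}, resp.\ Proposition~\ref{prop_mtl_tptlun}) with $\finally\varphi_M$, dispose of the unsatisfiable case via $\false$, and in the satisfiable case glue the separating witness pair to a padding block and a shifted model of $\varphi_M$ so that Duplicator's strategies compose in the fragment-adapted EF game. The composition details you flag as needing verification are exactly the ones the paper also leaves implicit.
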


	\subsection{Restricting Resources in TPTL}

     In the following we prove results on the effects of restricting syntactic resources of $\tptl$-formulas similar to those for $\mtl$. For each $n\in \Z$, we redefine $\varphi^n = x.\finally(x=n)$.

	\begin{lemma}
    \label{lem:TPTLsing}
        Let $\mathcal{I}_{1},\mathcal{I}_{2}\in\mathsf{Cons}(\Z)$, for each $n\in\Z$, if $n\in\mathcal{I}_{1}$ and $n-1,n$ or $n,n+1$ are not in $\mathcal{I}_{2}$, then $\varphi^n$ is definable in $\tptl^{\mathcal{I}_{1}}$ but not in $\tptl^{\mathcal{I}_{2}}$.
	\end{lemma}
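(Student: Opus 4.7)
The statement has two directions. For the definability direction: since $n\in\mathcal{I}_1$, the formula $x.\finally(x\in[n,n])$ is semantically equivalent to $\varphi^n=x.\finally(x=n)$ and its only interval $[n,n]$ has endpoints in $\mathcal{I}_1$, so $\varphi^n\in\tptl^{\mathcal{I}_1}$.

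For non-definability in $\tptl^{\mathcal{I}_2}$, I apply Theorem~\ref{theorem_property_tptl} together with the decomposition $\tptl^{\mathcal{I}_2}=\bigcup\tptl^{n',\mathcal{I}'}_k$ ranging over finite $\mathcal{I}'\subseteq\mathcal{I}_2$, $n'\geq 1$, and $k\geq 0$. It suffices to exhibit, for each such triple, data words $w_0\tptlmodels\varphi^n$ and $w_1\not\tptlmodels\varphi^n$ with a winning strategy for Duplicator in $\textrm{TG}^{n',\mathcal{I}'}_k(w_0,0,\ev,w_1,0,\ev)$. Assume without loss of generality that $n-1,n\notin\mathcal{I}_2$ (the other case $n,n+1\notin\mathcal{I}_2$ is symmetric, using $n+1$ in place of $n-1$). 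Let $r\in\N$ bound the absolute values of all integers in $\mathcal{I}'$ and pick $M\in\N$ with $M>r+|n|$. Define
\[
w_0=(\emptyset,0)(\emptyset,n)(\emptyset,M)(\emptyset,M)\cdots,\qquad
w_1=(\emptyset,0)(\emptyset,n-1)(\emptyset,M)(\emptyset,M)\cdots.
\]
Then $w_0\tptlmodels\varphi^n$ (reset $x$ at position $0$, then $\finally$ to position $1$ gives $d_1-0=n$), while $w_1\not\tptlmodels\varphi^n$ (no data value of $w_1$ equals $n$, since $M\neq n$).

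Duplicator's winning strategy is to mirror every position chosen by Spoiler in the opposite word; correctness is shown by induction on $k$. Propositions agree automatically. For atomic formulas $x\in I$, since game positions are strictly increasing and the data value $0$ occurs only at the (unreachable-after-round-one) starting position, the register valuations $\nu_0,\nu_1$ coincide on every variable except those reset at position $1$, where they take the values $n$ and $n-1$ respectively. The paired differences $d_i-\nu_l(x)$ that must be matched are then either literally equal in both words, or they fall into the pairs $(n,n-1)$ (at $i=1$ with valuation $0$) and $(M-n,M-(n-1))$ (at $i\geq 2$ with a position-$1$ reset). The first pair lies in the same $\mathcal{I}'$-region by the hypothesis $n-1,n\notin\mathcal{I}_2$; the second pair has both coordinates exceeding $r$, so both lie in the rightmost region of $\mathcal{I}'$. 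The principal obstacle is that a naive construction whose tail has data value $0$ would instead produce the problematic pair $(-n,-(n-1))$ at every $i\geq 2$ after a position-$1$ reset, and $-n,-(n-1)$ need not be in a common region of $\mathcal{I}_2$; taking the tail value $M$ large enough precisely sidesteps this, confining all such residual comparisons to the extremal regions of $\mathcal{I}'$. Consequently, Duplicator wins $\textrm{TG}^{n',\mathcal{I}'}_k$ for every triple $(\mathcal{I}',n',k)$, completing the proof.
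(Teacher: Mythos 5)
Your proof is correct and follows essentially the same EF-game pattern the paper uses for all of its separation results: definability in $\tptl^{\mathcal{I}_1}$ is immediate from $n\in\mathcal{I}_1$, and non-definability in $\tptl^{\mathcal{I}_2}$ is reduced via Theorem~\ref{theorem_property_tptl} and the decomposition into the fragments $\tptl^{n',\mathcal{I}'}_{k}$ to a mirroring strategy on a pair of words differing only by the region-equivalent values $n$ and $n-1$ (resp.\ $n+1$), with the large tail value $M$ correctly disposing of the residual register comparisons $M-n$ versus $M-(n-1)$. The one (trivially repaired) blemish is that data values must lie in $\N$, so for $n\leq 0$ the words as written are not data words; shifting every value by a constant $s\geq|n|+1$ fixes this and, by Lemma~\ref{lemma_pre}, changes nothing in the argument.
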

Using this lemma we can prove the following two propositions.

     \begin{proposition}\emph{(Linear Constraint Hierarchy of $\tptl$)}\\
      The expressive power relation $\preccurlyeq$ defines a linear order on the set $\{\tptl^{\mathcal{I}[n]}\mid n\in \Z \}$ such that if $n_{1}\leq n_{2}$, then $\tptl^{\mathcal{I}[n_{1}]}\preccurlyeq \tptl^{\mathcal{I}[n_{2}]}$ . Moreover, if $n_{1} < n_{2}$, then $\tptl^{\mathcal{I}[n_{1}]}\prec \tptl^{\mathcal{I}[n_{2}]}$.
    \end{proposition}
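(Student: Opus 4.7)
The plan is to verify the three contents of the statement in turn, paralleling the $\mtl$-argument of Proposition~\ref{prop_mtl_linear}. Most of the model-theoretic work is absorbed into Lemma~\ref{lem:TPTLsing}; once that lemma is available, the proof reduces to choosing an appropriate separating constant and checking a set-theoretic inclusion.

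First I would handle the weak inequality. If $n_{1}\leq n_{2}$, then $\mathcal{I}[n_{1}]\subseteq \mathcal{I}[n_{2}]$ directly from the definition of $\mathcal{I}[n]$, so every formula of $\tptl^{\mathcal{I}[n_{1}]}$ is literally a formula of $\tptl^{\mathcal{I}[n_{2}]}$ and is therefore definable in the latter. This yields $\tptl^{\mathcal{I}[n_{1}]}\preccurlyeq \tptl^{\mathcal{I}[n_{2}]}$ with no game argument at all.

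Second, I would establish strictness by exhibiting a separating formula. Setting $n := n_{2}$, I apply Lemma~\ref{lem:TPTLsing} with $\mathcal{I}_{1}:=\mathcal{I}[n_{2}]$ and $\mathcal{I}_{2}:=\mathcal{I}[n_{1}]$. The hypotheses hold trivially: $n_{2}\in\mathcal{I}[n_{2}]$, while $n_{1}<n_{2}$ implies that both $n_{2}$ and $n_{2}+1$ are strictly larger than $n_{1}$ and hence lie outside $\mathcal{I}[n_{1}]$. Lemma~\ref{lem:TPTLsing} then guarantees that $\varphi^{n_{2}} = x.\finally(x = n_{2})$ belongs to $\tptl^{\mathcal{I}[n_{2}]}$ but is not definable in $\tptl^{\mathcal{I}[n_{1}]}$, giving $\tptl^{\mathcal{I}[n_{1}]}\prec \tptl^{\mathcal{I}[n_{2}]}$.

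Third, linearity of $\preccurlyeq$ on the family $\{\tptl^{\mathcal{I}[n]}\mid n\in\Z\}$ follows combinatorially from the previous two steps. Reflexivity and transitivity of $\preccurlyeq$ are immediate from its definition in Section~\ref{relative_expressiveness}; totality follows from totality of $\leq$ on $\Z$, since for any $n_{1},n_{2}\in\Z$ either $n_{1}\leq n_{2}$ or $n_{2}\leq n_{1}$; and the strict chain just established rules out any two distinct members of the family being equally expressive, so (after identifying logics up to expressive equivalence) the induced relation is antisymmetric and thus a genuine linear order.

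Because Lemma~\ref{lem:TPTLsing} already encapsulates the hard content -- an EF-game argument for $\tptl$ along the lines of Proposition~\ref{xfffequal} and Lemma~\ref{lemma_pre2} -- there is essentially no remaining obstacle. The only care needed is the deliberate choice $n := n_{2}$, which simultaneously forces $n\in\mathcal{I}[n_{2}]$ and keeps both $n$ and $n+1$ outside $\mathcal{I}[n_{1}]$, so that Lemma~\ref{lem:TPTLsing} applies in the required direction.
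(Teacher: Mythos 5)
Your proposal is correct and follows essentially the same route the paper intends: the paper gives no separate proof for this proposition but states that it follows from Lemma~\ref{lem:TPTLsing}, exactly as you use it, with the weak inclusion coming from $\mathcal{I}[n_{1}]\subseteq\mathcal{I}[n_{2}]$ and strictness from applying the lemma to $\varphi^{n_{2}}$ with $\mathcal{I}_{1}=\mathcal{I}[n_{2}]$, $\mathcal{I}_{2}=\mathcal{I}[n_{1}]$. Your choice $n:=n_{2}$ correctly satisfies the lemma's hypotheses since $n_{2},n_{2}+1\notin\mathcal{I}[n_{1}]$.
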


    \begin{proposition}\emph{(Lattice Constraint Hierarchy of $\tptl$)}\\
    $\left\langle \{\tptl^{\mathcal{I}}\mid \mathcal{I}\in \mathsf{EVEN}\},\preccurlyeq \right\rangle $ constitutes a complete lattice in which
        \begin{itemize}
        \item[(i)] the greatest element is $\tptl^{\mathbf{even}}(\equiv \tptl)$,
        \item[(ii)] the least element is $\tptl^{\{-\infty,+\infty\}}(\equiv \ltl)$,
        \end{itemize}
        and for each nonempty subset $S\subseteq\mathsf{EVEN}$,
        \begin{itemize}
        \item[(iii)] $\bigwedge_{\mathcal{I}\in S}\tptl^{\mathcal{I}}=\tptl^{\bigcap_{\mathcal{I}\in S}\mathcal{I}}$,
        \item[(iv)] $\bigvee_{\mathcal{I}\in S}\tptl^{\mathcal{I}}=\tptl^{\bigcup_{\mathcal{I}\in S}\mathcal{I}}$.
        \end{itemize}
    \end{proposition}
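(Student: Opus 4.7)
The plan is to reduce the proposition to the well-known fact that $(\mathsf{EVEN}, \subseteq)$ is a complete lattice whose meets and joins are given by intersection and union, by establishing an order-isomorphism between $(\mathsf{EVEN}, \subseteq)$ and $(\{\tptl^\mathcal{I} \mid \mathcal{I} \in \mathsf{EVEN}\}, \preccurlyeq)$ modulo the equivalence $\equiv$.

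The crucial step is to show that for $\mathcal{I}_1, \mathcal{I}_2 \in \mathsf{EVEN}$, $\mathcal{I}_1 \subseteq \mathcal{I}_2$ if and only if $\tptl^{\mathcal{I}_1} \preccurlyeq \tptl^{\mathcal{I}_2}$. The forward direction is immediate from the syntactic inclusion. For the converse, suppose $\mathcal{I}_1 \not\subseteq \mathcal{I}_2$ and pick $n \in \mathcal{I}_1 \setminus \mathcal{I}_2$. Since $\mathcal{I}_2 \in \mathsf{EVEN}$ contains only even integers besides $\pm\infty$, and since $n \in \mathcal{I}_1 \in \mathsf{EVEN}$ forces $n$ to be even, both $n-1$ and $n+1$ are odd and hence not in $\mathcal{I}_2$. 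Lemma~\ref{lem:TPTLsing} then provides the formula $\varphi^n = x.\finally(x=n)$, which is definable in $\tptl^{\mathcal{I}_1}$ but not in $\tptl^{\mathcal{I}_2}$, contradicting $\tptl^{\mathcal{I}_1} \preccurlyeq \tptl^{\mathcal{I}_2}$.

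For (i), the direction $\tptl^\mathbf{even} \preccurlyeq \tptl$ is trivial syntactically. For the other direction, observe that over integer data values any interval with odd endpoints has an equivalent representation with even endpoints; for instance, $x \in [n,n]$ with $n$ odd coincides on $\Z$ with $x \in (n-1, n+1)$, and similar rewrites handle half-open or unbounded intervals. Hence every $\tptl$-formula is equivalent to a $\tptl^\mathbf{even}$-formula, yielding $\tptl^\mathbf{even} \equiv \tptl$. For (ii), the only interval whose endpoints lie in $\{-\infty, +\infty\}$ is $\Z$ itself, so every subformula $x \in I$ is a tautology; freezes and register tests carry no information and a simple induction shows that each $\tptl^{\{-\infty, +\infty\}}$-formula is equivalent to a pure $\ltl$-formula. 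The converse $\ltl \preccurlyeq \tptl^{\{-\infty,+\infty\}}$ is immediate.

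Finally, (iii) and (iv) follow from the order-isomorphism established above. The set $\bigcap_{\mathcal{I} \in S}\mathcal{I}$ still belongs to $\mathsf{EVEN}$ and is contained in every $\mathcal{J} \in S$, so $\tptl^{\bigcap \mathcal{I}}$ is a lower bound of $\{\tptl^\mathcal{I} \mid \mathcal{I} \in S\}$; any other lower bound $\tptl^\mathcal{K}$ with $\mathcal{K} \in \mathsf{EVEN}$ satisfies $\mathcal{K} \subseteq \mathcal{J}$ for all $\mathcal{J} \in S$ by the isomorphism, hence $\mathcal{K} \subseteq \bigcap_{\mathcal{I} \in S}\mathcal{I}$ and $\tptl^\mathcal{K} \preccurlyeq \tptl^{\bigcap \mathcal{I}}$. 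The dual argument with union yields (iv). The main obstacle is the strictness direction of the order-isomorphism, which rests entirely on Lemma~\ref{lem:TPTLsing}: without it one would only get a monotone correspondence, and different sets in $\mathsf{EVEN}$ might collapse to the same logic, destroying the lattice identification.
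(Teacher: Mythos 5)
Your proposal is correct and takes essentially the same route as the paper: it reduces the lattice structure to the powerset lattice $(\mathsf{EVEN},\subseteq)$ via an order-isomorphism whose non-trivial (strictness) direction is exactly the application of Lemma~\ref{lem:TPTLsing} to an even $n\in\mathcal{I}_1\setminus\mathcal{I}_2$, mirroring the paper's treatment of the $\mtl$ analogue and its remark that the family is isomorphic to $\langle\mathcal{P}(X),\subseteq\rangle$.
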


    In the next proposition we show that the until hierarchy for $\tptl$ is strict.

    \begin{proposition}
    \label{prop:TPTL_until_rank}
    $\tptl_{k+1}$ is strictly more expressive than $\tptl_k$.
    \end{proposition}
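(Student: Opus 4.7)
The plan is to adapt the proof of Proposition~\ref{prop:MTL_until_rank} (the $\mtl$ until hierarchy) to the $\tptl$ setting, using the $\tptl$ EF game (Theorem~\ref{theorem_property_tptl}) in place of its $\mtl$ counterpart. I reuse the formula family $\varphi[1]=p\wedge\X p$, $\varphi[k+1]=p\wedge\X\varphi[k]$; since $\X\psi$ abbreviates $\false\U\psi$, each $\varphi[k+1]$ is a $\tptl$-formula with $\urk(\varphi[k+1])=k+1$. I also reuse the two data words $w_0,w_1$ from that proof, which satisfy $w_0\models\varphi[k+1]$ but $w_1\not\models\varphi[k+1]$. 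It then suffices to prove $(w_0,0,\ev)\sim_k^{n,\mathcal{I}}(w_1,0,\ev)$ for every $n\geq 1$ and $\mathcal{I}\in\mathsf{FCons}(\Z)$; combined with Theorem~\ref{theorem_property_tptl} and $\tptl_k=\bigcup_{n,\mathcal{I}}\tptl^{n,\mathcal{I}}_k$, this implies $\varphi[k+1]\notin\tptl_k$.

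The key auxiliary tool is a $\tptl$-analog of Lemma~\ref{lemma_pre}: whenever $w'$ is obtained from $w$ by adding a constant $c$ to every data value and $\mu'(x)=\mu(x)+c$ for every register $x$, then $(w,i,\mu)\sim_k^{n,\mathcal{I}}(w',i,\mu')$, because the quantities $d-\mu(x)$ governing the atomic checks $x\in I$ are preserved under such a joint shift. Equipped with this shift lemma, I will prove by induction on $k$ the following strengthening: for every valuation $\nu$ with $\nu(x)\leq 0$ for all $x$, Duplicator wins the $k$-round $\tptl$ EF game on $(w_0,0,\nu)$ and $(w_1,0,\nu+r)$, where $\nu+r$ denotes the valuation sending $x$ to $\nu(x)+r$. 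Taking $\nu=\ev$ yields the required initial equivalence. Duplicator's strategy in each round is the shift matching $(0,0),(1,1),(i,i-1)$ for $i\geq 2$ borrowed from the proof of Lemma~\ref{lemma_pre2}.

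The main obstacle is the round-by-round case analysis showing that Duplicator's moves preserve both propositional and register-region agreement, despite Spoiler's freedom to reset any subset of registers each round. Two cases arise from Spoiler's first move. In Case~(A), Spoiler picks position~$1$ in either word; the subgame from $(1,1)$ lives on the suffixes $w_0[1],w_1[1]$, which coincide with the corresponding data words for parameter $k-1$ up to a uniform data shift by $+r$, so the shift lemma combined with the induction hypothesis on $k-1$ settles this subgame. In Case~(B), Spoiler picks position~$i\geq 2$ in $w_0$ (symmetrically in $w_1$); then $w_0[i]$ and $w_1[i-1]$ coincide as $\omega$-data-words, and the subgame plays on this common word with valuations satisfying, register by register, either $\nu_0(x)=\nu_1(x)$ (after a reset at the matched pair) or $\nu_1(x)=\nu_0(x)+r$ with $\nu_0(x)\leq 0$; Duplicator plays identity matching, and the remaining region checks at an un-reset register go through because the two candidate values of $d-\nu(x)$ differ by at most $r$ and both have magnitude at least $r$, hence fall outside $\mathcal{I}\subseteq(-r,+r)$ and lie in the same region. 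Spoiler's intermediate-position option (Step~4 of the EF game) reduces to one of these two subcases.
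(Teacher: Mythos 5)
Your argument is sound, but it takes a noticeably more laborious route than the paper's. Both proofs use the same formulas $\varphi[k]$ and conclude via Theorem~\ref{theorem_property_tptl}; the difference lies entirely in the choice of witness words. The paper takes $w_0=(\{p\},0)^{k+2}(\{q\},0)^{\omega}$ and $w_1=(\{p\},0)^{k+1}(\{q\},0)^{\omega}$, i.e.\ it sets \emph{every} data value to $0$. Then every reachable register valuation stores $0$, every atomic constraint $x\in I$ has the same truth value (namely that of $0\in I$) at every position of either word, and the $\tptl$ game degenerates to the propositional Etessami--Wilke until-hierarchy game; no shift lemma and no valuation bookkeeping are needed. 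You instead keep the $\mtl$ witness words with data values $0,r,2r,\dots$, which is why you must prove a $\tptl$ analogue of Lemma~\ref{lemma_pre}, strengthen the induction hypothesis to a family of valuations, and track Spoiler's resets. Your case analysis does go through: in Case~B the suffixes $w_0[i]$ and $w_1[i-1]$ coincide, reset registers become equal, and for never-reset registers the two values of $d-\nu(x)$ differ by $r$ and are both at least $r$, hence in the same region with respect to $\mathcal{I}\subseteq(-r,+r)$; Case~A closes by the shift lemma plus induction. The one rough edge is that the strengthened hypothesis ``$\nu(x)\leq 0$'', combined with the downward shift by $r$ in Case~A, takes you outside $\N$-valued valuations; you should either let valuations range over $\Z$ (harmless here, and consistent with the paper's remark on the integer data domain) or restate the invariant as $\nu_1=\nu_0+r$ with $d_j-\nu_0(x)\geq r$ at every position reachable in the subgame. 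The payoff of your version is that a single pair of words witnesses both the $\mtl$ and the $\tptl$ until hierarchies; the payoff of the paper's version is that the data dimension is switched off entirely.
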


    \begin{proof}
    Let $\varphi[k]\, (k\geq 1)$ be as defined in Proposition \ref{prop:MTL_until_rank}.  $\varphi[k]$ is a formula in $\tptl_k$. For every $k\geq 0$. We can show that $(w_0,0,\ev)\sim^{n,\mathcal{I}}_k (w_1,0,\ev)$ on the following two data words $w_0\tptlmodels \varphi[k+1]$ and $w_1 \not \tptlmodels \varphi[k+1]$.
	
	\begin{picture}(90,35)(0,-35)
    \put(0,-14){$w_0$}
    \node[Nfill=y,fillcolor=Black,Nw=1.0,Nh=1.0,Nmr=2.0](n1)(7.0,-13.0){}
    \put(7,-11){$p$}
    \put(6,-17){$0$}
    \node[Nfill=y,fillcolor=Black,Nw=1.0,Nh=1.0,Nmr=2.0](n2)(19.0,-13.0){}
    \put(19,-11){$p$}
    \put(18,-17){$0$}
    \node[Nfill=y,fillcolor=Black,Nw=1.0,Nh=1.0,Nmr=2.0](n3)(31.0,-13.0){}
    \put(31,-11){$p$}
    \put(30,-17){$0$}
    \node[Nfill=y,fillcolor=Black,Nw=0.0,Nh=0.0,Nmr=0.0](n4)(43.0,-13.0){}
    \put(43,-11){$\dots$}
    \node[Nfill=y,fillcolor=Black,Nw=1.0,Nh=1.0,Nmr=2.0](n5)(55.0,-13.0){}
    \put(55,-11){$p$}
    \put(54,-17){$0$}
    \node[Nfill=y,fillcolor=Black,Nw=1.0,Nh=1.0,Nmr=2.0](n6)(67.0,-13.0){}
    \put(66,-11){$q$}
    \put(66,-17){$0$}
    \node[Nfill=y,fillcolor=Black,Nw=1.0,Nh=1.0,Nmr=2.0](n8)(79.0,-13.0){}
    \put(78,-11){$q$}
    \put(78,-17){$0$}
    \node[Nfill=y,fillcolor=White,linecolor=White,Nw=0.0,Nh=0.0,Nmr=2.0](n11)(90.0,-13.0){}
    \put(83,-11){$\dots$}
    \drawedge[AHnb=0](n1,n11){ }
    \put(7,-9){$\overbrace{\hphantom{hshshshshssfefekfoefssssssss}}^{k+2}$}

    \put(0,-29){$w_1$}
    \node[Nfill=y,fillcolor=Black,Nw=1.0,Nh=1.0,Nmr=2.0](n1)(7.0,-28.0){}
    \put(7,-26){$p$}
    \put(6,-32){$0$}
    \node[Nfill=y,fillcolor=Black,Nw=1.0,Nh=1.0,Nmr=2.0](n2)(19.0,-28.0){}
    \put(19,-26){$p$}
    \put(18,-32){$0$}
    \node[Nfill=y,fillcolor=Black,Nw=0.0,Nh=0.0,Nmr=0.0](n3)(31.0,-28.0){}
    \put(31,-26){$\dots$}
    \node[Nfill=y,fillcolor=Black,Nw=1.0,Nh=1.0,Nmr=2.0](n4)(43.0,-28.0){}
    \put(43,-26){$p$}
    \put(42,-32){$0$}
    \node[Nfill=y,fillcolor=Black,Nw=1.0,Nh=1.0,Nmr=2.0](n5)(55.0,-28.0){}
    \put(54,-26){$q$}
    \put(54,-32){$0$}
    \node[Nfill=y,fillcolor=Black,Nw=1.0,Nh=1.0,Nmr=2.0](n6)(67.0,-28.0){}
    \put(66,-26){$q$}
    \put(66,-32){$0$}
    \node[Nfill=y,fillcolor=Black,Nw=1.0,Nh=1.0,Nmr=2.0](n8)(79.0,-28.0){}
    \put(78,-26){$q$}
    \put(78,-32){$0$}
    \node[Nfill=y,fillcolor=White,linecolor=White,Nw=0.0,Nh=0.0,Nmr=2.0](n11)(90,-28.0){}
    \put(83,-26){$\dots$}
    \drawedge[AHnb=0](n1,n11){ }
    \put(7,-24){$\overbrace{\hphantom{hshshshshssfefekffff}}^{k+1}$}
    \end{picture}

    \end{proof}

    \begin{corollary}
    \label{cor_incomparable_mtl_tptl}
	$\mtl_{k+1}$ and $\tptl_k$ are incomparable in expressive power.
    \end{corollary}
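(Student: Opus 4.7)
The plan is to prove the two non-inclusions separately, in each case recycling a witness formula that has already been produced in this paper.

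For the direction $\mtl_{k+1} \not\preccurlyeq \tptl_k$, I would reuse the formula $\varphi[k+1]$ defined in the proof of Proposition~\ref{prop:MTL_until_rank}; by construction $\varphi[k+1] \in \mtl_{k+1}$. The proof of Proposition~\ref{prop:TPTL_until_rank} has in fact already exhibited two data words $w_0$ and $w_1$ (all data values equal to $0$, differing only in the number of $p$-labelled positions preceding the $q$-block) such that $w_0 \tptlmodels \varphi[k+1]$, $w_1 \not\tptlmodels \varphi[k+1]$, and $(w_0,0,\ev) \sim_k^{n,\mathcal{I}} (w_1,0,\ev)$ for every $n \geq 1$ and every $\mathcal{I} \in \mathsf{FCons}(\Z)$. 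Applying Theorem~\ref{theorem_property_tptl} together with the decomposition $\tptl_k = \bigcup_{n,\mathcal{I}} \tptl_k^{n,\mathcal{I}}$ then rules out the existence of any formula in $\tptl_k$ equivalent to $\varphi[k+1]$.

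For the direction $\tptl_k \not\preccurlyeq \mtl_{k+1}$ (for $k \geq 3$), I would take $\psi = x.\finally\finally\finally(x=0)$, which has until rank $3 \leq k$ and thus lies in $\tptl_k$. Proposition~\ref{xfffequal} has already established that $\psi$ is not definable in $\mtl$ at all, so in particular $\psi \notin \mtl_{k+1}$. For smaller values of $k$ the statement either degenerates or one would need to supply separate lower-rank freeze-formula witnesses in the same spirit as Proposition~\ref{xfffequal}, but the high-rank case is the intended one.

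No genuinely new work is needed for the corollary: the witnessing formulas and their accompanying pairs of separating data words are already in the paper. The only point that warrants a sentence of verification is that the $\tptl$ EF-game argument of Proposition~\ref{prop:TPTL_until_rank} transfers uniformly across all $n$ and all $\mathcal{I}$, which is immediate because the data words used there carry no data-value content beyond the constant $0$, so the atomic formulas of the form $x \in I$ cannot be used by Spoiler to distinguish the two plays.
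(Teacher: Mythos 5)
Your derivation is exactly the intended one: the direction $\mtl_{k+1}\not\preccurlyeq\tptl_k$ comes from $\varphi[k+1]$ together with the all-zero data words of Proposition~\ref{prop:TPTL_until_rank} (where, as you note, the $x\in I$ atoms are inert, so the argument is uniform in $n$ and $\mathcal{I}$), and the direction $\tptl_k\not\preccurlyeq\mtl_{k+1}$ comes from $x.\finally\finally\finally(x=0)\in\tptl_3$ via Proposition~\ref{xfffequal}. The only caveat, which you correctly flag, is that the second direction as witnessed requires $k\geq 3$; the corollary as stated in the paper is silently subject to the same restriction (indeed for $k=0$ every $\tptl_0$ language is already $\mtl_0$-definable, since the initial valuation trivializes all register constraints), so this is a limitation of the statement rather than a gap in your argument.
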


    \begin{proposition}\label{prop:TPTLrank_definability}
    There exists $m\in \N$ such that for every $k\geq m$, the problem whether a formula $\varphi\in \tptl_{k+1}$ is definable in $\tptl_{k}$ is undecidable.
	\end{proposition}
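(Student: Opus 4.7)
The plan is to parallel the proof of Theorem~\ref{theorem_mtldefinability}, reducing from the recurrent state problem for two-counter machines, but with the separator $\varphi[k+1]$ of Proposition~\ref{prop:TPTL_until_rank} taking the role that $x.\finally\finally\finally(x=0)$ played in the $\mtl$ case. Let $\varphi_M$ be the Alur--Henzinger $\tptl$-formula for $M$ (recall from the proof of Theorem~\ref{theorem_mtldefinability} that it uses no propositional variables and is closed under uniform data shifts), and let $c := \urk(\varphi_M)$, a fixed constant independent of $k$. Set $m := c$, and for each $k \geq m$ let the reduction output
\[
\psi_M \;:=\; \varphi[k+1]\;\wedge\;\finally\,\varphi_M .
\]
Since $\urk(\varphi[k+1]) = k+1$ and $\urk(\finally\,\varphi_M) = c+1 \leq k+1$, we have $\psi_M \in \tptl_{k+1}$. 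If $\varphi_M$ is unsatisfiable (the negative recurrence instance), then $\psi_M \equiv \false \in \tptl_k$.

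For the positive case we must show $\psi_M \notin \tptl_k$. Using $\tptl_k = \bigcup_{n,\mathcal{I}} \tptl^{n,\mathcal{I}}_k$ and Theorem~\ref{theorem_property_tptl}, it suffices to exhibit, for each $n \geq 1$ and each $\mathcal{I} \in \mathsf{FCons}(\Z)$, data words $w_0 \tptlmodels \psi_M$ and $w_1 \not\tptlmodels \psi_M$ with $(w_0, 0, \ev) \sim^{n,\mathcal{I}}_k (w_1, 0, \ev)$. Following the template of Proposition~\ref{prop:TPTL_until_rank}, let $w_0$ be $k+2$ copies of $(\{p\}, 0)$ followed by a suffix $w$ satisfying $\varphi_M$, and let $w_1$ be $k+1$ copies of $(\{p\}, 0)$ followed by the same suffix. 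By shift-invariance of $\varphi_M$ (as invoked in Theorem~\ref{theorem_mtldefinability}), the data values of $w$ may be taken arbitrarily large, in particular larger than every constant in $\mathcal{I}$. Then $\finally\,\varphi_M$ holds at position $0$ of both words, while $\varphi[k+1]$ separates them, so $w_0 \tptlmodels \psi_M$ and $w_1 \not\tptlmodels \psi_M$.

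Duplicator's winning strategy for $\mathrm{TG}^{n,\mathcal{I}}_k(w_0, 0, \ev, w_1, 0, \ev)$ is obtained by splicing the two strategies already available: within the $p$-prefix she plays the strategy of Proposition~\ref{prop:TPTL_until_rank}, and within the common suffix $w$ she uses the identity bijection, which is winning by Lemma~\ref{lemma_pre} (the suffixes coincide exactly). Moves that cross the prefix-suffix boundary are handled by the same alignment of last-$p$ positions used in Proposition~\ref{prop:TPTL_until_rank}: the data jump from the last $p$-position (value $0$) to any position in $w$ is the same in both words.

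The main obstacle, absent in the $\mtl$ analogue, is the interaction of the EF game with register valuations: Spoiler may freshly freeze a register at any position, and Duplicator's bijection must preserve the truth of later atomic tests $x \in I$. Two observations handle this. First, any register stored inside the $p$-prefix is set to $0$ in both words (all prefix data values are $0$), so its later evaluation reduces to a comparison with the current data value, which Duplicator's bijection preserves by construction. Second, any register stored inside the shared suffix $w$ takes exactly the same value in both words under the identity alignment, so atomic tests against it yield identical truth values. Consequently the strategy persists through all $k$ rounds, $\psi_M \notin \tptl^{n,\mathcal{I}}_k$ for arbitrary $n$ and $\mathcal{I}$, and the reduction yields undecidability of $\tptl_k$-definability for $\tptl_{k+1}$-formulas whenever $k \geq m$.
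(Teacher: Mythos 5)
Your reduction is the one the paper intends (the paper omits the proof of this proposition): the formula $\psi_M=\varphi[k+1]\wedge\finally\varphi_M$, the pair of words $(\{p\},0)^{k+2}w$ versus $(\{p\},0)^{k+1}w$ with $w\tptlmodels\varphi_M$ and $p$-free, and the splicing of the until-hierarchy strategy on the prefix with the shift-by-one identity on the common suffix are exactly the TPTL analogue of Theorem~\ref{theorem_mtldefinability} combined with Proposition~\ref{prop:TPTL_until_rank}. Your register bookkeeping is also sound, and in fact simpler than you make it sound: since $w_1=w_0[1]$ and all prefix values are $0$, Duplicator's alignment keeps the two current data values equal in every round, so Spoiler's simultaneous resets preserve $\nu_0=\nu_1$ and agreement on every atomic formula $x\in I$ is automatic.

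The one point you must repair is the choice of $m$. You fix a machine $M$, set $c:=\urk(\varphi_M)$ and $m:=c$; but the statement requires a single $m$ such that, for each fixed $k\geq m$, the problem is undecidable over \emph{all} inputs, i.e.\ the reduction $M\mapsto\psi_M$ must produce a formula in $\tptl_{k+1}$ for \emph{every} machine $M$. So $m$ must bound $\urk(\varphi_M)$ uniformly in $M$; "independent of $k$" is not the relevant independence. The uniform bound does hold --- in the Alur--Henzinger encoding the machine affects only the propositions and the width of the Boolean combinations, not the until-nesting depth, so $\sup_M\urk(\varphi_M)$ is a finite constant, and this is precisely why the proposition is phrased with ``there exists $m$''. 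As literally written, your argument yields only a machine-dependent bound $m_M$, which does not by itself give undecidability of any single fixed-$k$ problem; adding the observation about the uniform rank bound closes the gap.
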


    We have seen in the previous chapters that $\tptl$ is strictly more expressive than $\mtl$.
    The register variables play a crucial role in reaching this greater expressiveness. In the following we want to explore more deeply whether the number of register variables allowed in a $\tptl$ formula has an impact on the expressive power of the logic. We are able to show that there is a strict increase in expressiveness when allowing two register variables instead of just one. The following results concern the number of register variables allowed in a $\tptl$-formula.

    \begin{proposition}
    \label{prop:tptl2_tptl1}
    For the $\regtptlex{2}$-formula $\varphi=x_{1}.\finally(x_{1}>0\wedge x_{2}.\finally(x_{1}>0\wedge x_{2}<0))$ there is no equivalent formula in $\tptl^1$.
    \end{proposition}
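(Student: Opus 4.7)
My plan is to invoke the EF-game characterisation for $\tptl^1$ given by Theorem~\ref{theorem_property_tptl} specialised to $n=1$: to show that $\varphi$ is not definable in $\tptl^1$, it suffices to exhibit, for every $\mathcal{I}\in\mathsf{FCons}(\Z)$ and every $k\in\N$, two data words $w_0\models\varphi$ and $w_1\not\models\varphi$ such that Duplicator wins the $k$-round game $\mathrm{TG}^{1,\mathcal{I}}_{k}(w_0,0,\ev,w_1,0,\ev)$. I will fix $\mathcal{I}$ and $k$ and choose $r\in\N$ large enough that every finite constant of $\mathcal{I}$ lies in $(-r,r)$; any integer difference of absolute value at least $r$ then falls inside one of the two unbounded $\mathcal{I}$-regions. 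I will also take $s\in\N$ sufficiently large (say $s\geq(k+3)r$) so that every data value appearing below is non-negative.

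I will design $w_0$ to exhibit a three-point hump witnessing $\varphi$ at the very beginning, for instance by letting its first three positions carry the values $s,\,s+(k+2)r,\,s+(k+1)r$; this forces $w_0\models\varphi$ with witnesses $i=1,\,j=2$, since $s<s+(k+1)r<s+(k+2)r$. The data word $w_1$ will be built so that its above-baseline values are monotone non-decreasing, which guarantees $w_1\not\models\varphi$; its precise shape, including the length and mild periodicity of its tail, will be tuned to $k$ and to the set of $\mathcal{I}$-regions arising in $w_0$ so that $w_1$ locally matches $w_0$ up to one-register-observable distinctions. The tails of both words will be taken long enough that, after any sequence of resets by Spoiler, Duplicator still finds positions in every region class required for the remaining rounds.

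The key intuition is that a $\tptl^1$-formula can carry only one past value in its register: at any stage Spoiler may either retain the last-reset reference or reset to the current data value, but he can never simultaneously compare the current position against both $d_0$ and a previously visited peak, which is exactly what $\varphi$ exploits. Duplicator's strategy will therefore be, round by round, to maintain a correspondence between positions of $w_0$ and $w_1$ which matches the $\mathcal{I}$-region of the current difference against the active reference, relocating the reference in her word whenever Spoiler resets. Since all nontrivial jumps in both words have magnitude at least $r$ and hence lie in one of the two unbounded $\mathcal{I}$-regions, region-matching always succeeds at each individual move.

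The main obstacle will be to ensure that Duplicator's correspondence remains globally consistent across all $k$ rounds, and in particular to deal with the critical move in which Spoiler resets at the peak position of $w_0$ and then probes for a relative decrease which must have a regionally equivalent counterpart in $w_1$. This is what forces the careful tuning of the tail of $w_1$: it must contain below-baseline positions at which, for each reset value Spoiler can create in $w_1$, there remains a future position whose relative difference is regionally equivalent to what Spoiler reaches in $w_0$. Parametrising the tail length by $k$ and arguing inductively on residual subgames of length $k-m$ will show that Duplicator always has such a response available. Once this is established, Theorem~\ref{theorem_property_tptl} yields that $\varphi$ is not definable in $\tptl^{1,\mathcal{I}}_{k}$; since $\tptl^1=\bigcup_{\mathcal{I}\in\mathsf{FCons}(\Z),\,k\in\N}\tptl^{1,\mathcal{I}}_{k}$, the formula $\varphi$ is not definable in $\tptl^1$.
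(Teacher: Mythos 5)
Your overall method is exactly the paper's: fix $\mathcal{I}\in\mathsf{FCons}(\Z)$ and $k$, choose $r$ dominating all finite constants of $\mathcal{I}$, exhibit $w_0\models\varphi$ and $w_1\not\models\varphi$ with $(w_0,0,\ev)\sim^{1,\mathcal{I}}_k(w_1,0,\ev)$, and conclude via Theorem~\ref{theorem_property_tptl} and $\tptl^1=\bigcup_{\mathcal{I},k}\tptl^{1,\mathcal{I}}_k$. The gap is that the entire mathematical content of such a proof is the concrete pair of words together with Duplicator's winning strategy, and you supply neither: $w_1$ is left as something whose ``precise shape\dots will be tuned,'' and the strategy is asserted to exist by an induction you never carry out. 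Requiring the above-baseline values of $w_1$ to be non-decreasing does correctly force $w_1\not\models\varphi$, but it says nothing about why Duplicator survives $k$ rounds, which is precisely where the one-register limitation has to be cashed in.

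The missing idea is the off-by-one padding the paper builds into the two words. There, both words begin $s,\,s+2r$, plunge to $s-kr$, and climb back up in steps of $r$; the climb in $w_0$ is one step longer, so Duplicator matches $w_0$'s extra value $s+r$ (above the baseline $s$, and the very witness that makes $\varphi$ true) to $w_1$'s value $s-r$ (below the baseline). Every consecutive difference along the climb is $+r$ or $+2r$ and hence lies in the same unbounded region, and a single register cannot simultaneously remember the baseline $s$ and the peak $s+2r$, so Spoiler cannot expose the one-step shift. Your $w_0$, beginning $s,\,s+(k+2)r,\,s+(k+1)r$, places the witnessing descent on an adjacent pair, which Spoiler can probe in two rounds (reach the peak, reset, step down by $r$); by your monotonicity constraint Duplicator's reply must then sit at a value $\leq s$ while the corresponding position of $w_0$ carries $s+(k+1)r$, and whether she can survive the remaining $k-2$ rounds now depends entirely on tails you have not specified. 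Until you commit to a concrete $w_1$ (and tail of $w_0$) and actually verify the strategy---for instance by reproducing the shifted-staircase construction---what you have is a plan for a proof rather than a proof.
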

    \begin{proof}
    Let $\mathcal{I}\in\mathsf{FCons}(\Z)$ and $k\geq 1$. Let $s,r\in\N$ be such that all elements in $\mathcal{I}$ are contained in $(-r,+r)$ and $s-kr\geq 0$. One can show that $(w_0,0,\ev)\sim^{1,\mathcal{I}}_k (w_1,0,\ev)$ on the following two data words $w_0\tptlmodels\varphi$ and $w_1\not\tptlmodels\varphi$.
	
	\begin{center}
	\begin{picture}(120,30)(0,-30)
    \put(0,-7){$w_0$}
    \node[Nfill=y,fillcolor=Black,Nw=1.0,Nh=1.0,Nmr=2.0](n1)(7.0,-6.0){}
    \put(6,-10){$s$}
    \node[Nfill=y,fillcolor=Black,Nw=1.0,Nh=1.0,Nmr=2.0](n2)(17.0,-6.0){}
    \put(12,-10){$s\!+\!2r$}
    \node[Nfill=y,fillcolor=Black,Nw=1.0,Nh=1.0,Nmr=2.0](n3)(27.0,-6.0){}
    \put(22,-10){$s\!-\!kr$}
    \node[Nfill=y,fillcolor=Black,Nw=1.0,Nh=1.0,Nmr=2.0](n4)(37.0,-6.0){}
    \put(32,-10){$s\!-\!(\!k\!-\!1\!)r$}
    \node[Nfill=y,fillcolor=Black,Nw=0.0,Nh=0.0,Nmr=0.0](n5)(47.0,-6.0){}
    \put(49,-10){$\dots$}
    \node[Nfill=y,fillcolor=Black,Nw=0.0,Nh=0.0,Nmr=0.0](n6)(53.0,-6.0){}
    \node[Nfill=y,fillcolor=Black,Nw=1.0,Nh=1.0,Nmr=2.0](n8)(57.0,-6.0){}
    \put(55,-10){$s\!-\!r$}
    \node[Nfill=y,fillcolor=Black,Nw=1.0,Nh=1.0,Nmr=2.0](n9)(67.0,-6.0){}
    \put(65,-10){$s\!+\!r$}
    \node[Nfill=y,fillcolor=Black,Nw=1.0,Nh=1.0,Nmr=2.0](n10)(77.0,-6.0){}
    \put(74,-10){$s\!+\!3r$}
    \node[Nfill=y,fillcolor=Black,Nw=1.0,Nh=1.0,Nmr=2.0](n11)(87.0,-6.0){}
    \put(84,-10){$s\!+\!4r$}
    \node[Nfill=y,fillcolor=Black,Nw=1.0,Nh=1.0,Nmr=2.0](n12)(97.0,-6.0){}
    \put(94,-10){$s\!+\!5r$}
    \node[Nfill=y,fillcolor=Black,Nw=0.0,Nh=0.0,Nmr=0.0](n12)(107.0,-6.0){}
    \put(104,-10){$\dots$}
    \drawedge[AHnb=0](n1,n12){ }
    \put(26,-5){$\overbrace{\hphantom{hshshshshssfkfffffffff}}^{k+1}$}

    \put(0,-23){$w_1$}
    \node[Nfill=y,fillcolor=Black,Nw=1.0,Nh=1.0,Nmr=2.0](n1)(7.0,-21.0){}
    \put(6,-25){$s$}
    \node[Nfill=y,fillcolor=Black,Nw=1.0,Nh=1.0,Nmr=2.0](n2)(17.0,-21.0){}
    \put(12,-25){$s\!+\!2r$}
    \node[Nfill=y,fillcolor=Black,Nw=1.0,Nh=1.0,Nmr=2.0](n3)(27.0,-21.0){}
    \put(22,-25){$s\!-\!kr$}
    \node[Nfill=y,fillcolor=Black,Nw=1.0,Nh=1.0,Nmr=2.0](n4)(37.0,-21.0){}
    \put(32,-25){$s\!-\!(\!k\!-\!1\!)r$}
    \node[Nfill=y,fillcolor=Black,Nw=0.0,Nh=0.0,Nmr=0.0](n5)(47.0,-21.0){}
    \put(49,-25){$\dots$}
    \node[Nfill=y,fillcolor=Black,Nw=0.0,Nh=0.0,Nmr=0.0](n6)(53.0,-21.0){}
    \node[Nfill=y,fillcolor=Black,Nw=1.0,Nh=1.0,Nmr=2.0](n8)(57.0,-21.0){}
    \put(54,-25){$s\!-\!r$}
    \node[Nfill=y,fillcolor=Black,Nw=1.0,Nh=1.0,Nmr=2.0](n9)(67.0,-21.0){}
    \put(64,-25){$s\!+\!3r$}
    \node[Nfill=y,fillcolor=Black,Nw=1.0,Nh=1.0,Nmr=2.0](n10)(77.0,-21.0){}
    \put(74,-25){$s\!+\!4r$}
    \node[Nfill=y,fillcolor=Black,Nw=1.0,Nh=1.0,Nmr=2.0](n11)(87.0,-21.0){}
    \put(84,-25){$s\!+\!5r$}
    \node[Nfill=y,fillcolor=Black,Nw=1.0,Nh=1.0,Nmr=2.0](n12)(97.0,-21.0){}
    \put(94,-25){$s\!+\!6r$}
    \node[Nfill=y,fillcolor=Black,Nw=0.0,Nh=0.0,Nmr=0.0](n12)(107.0,-21.0){}
    \put(104,-25){$\dots$}
    \drawedge[AHnb=0](n1,n12){ }
    \put(26,-20){$\overbrace{\hphantom{hshshsffffffffff}}^{k}$}

    \end{picture}
    \end{center}

    \end{proof}

    \begin{corollary}
	$\tptl^2$ is strictly more expressive than $\tptl^1$.
    \end{corollary}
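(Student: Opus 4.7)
The plan is to combine two observations: an immediate syntactic inclusion and the hard separation provided by Proposition \ref{prop:tptl2_tptl1}. First, I would note that $\tptl^1 \preccurlyeq \tptl^2$ follows directly from the definitions: the fragment $\tptl^n$ is obtained by restricting the set of register variables to $\{x_1,\dots,x_n\}$, so $\tptl^1 \subseteq \tptl^2$ as syntactic sets of formulas, hence every $\tptl^1$-definable data language is also $\tptl^2$-definable.

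Second, for strictness, I would invoke Proposition \ref{prop:tptl2_tptl1}, which already exhibits a $\regtptlex{2}$-formula
$$\varphi = x_1.\finally(x_1 > 0 \wedge x_2.\finally(x_1 > 0 \wedge x_2 < 0))$$
that has no equivalent in $\tptl^1$. Since $\regtptlex{2} \subseteq \tptl^2$, this $\varphi$ is a witness that $\tptl^2 \not\preccurlyeq \tptl^1$. Combined with the inclusion above, this yields $\tptl^1 \prec \tptl^2$.

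The genuine work has already been done inside Proposition \ref{prop:tptl2_tptl1}, where the EF game for $\tptl$ (Theorem \ref{theorem_property_tptl}) is used to build, for every $\mathcal{I}\in\mathsf{FCons}(\Z)$ and every $k\geq 1$, two data words separated by $\varphi$ but equivalent under $\sim_{k}^{1,\mathcal{I}}$. The intuitive obstacle that the proposition overcomes, and which this corollary inherits, is that $\varphi$ requires two different previously-seen data values to be simultaneously compared against a single later position; with only one register, resetting to record the second value irretrievably erases the first, and this information loss is exactly what the EF-game strategy exploits. Thus the corollary reduces to recognizing that Proposition \ref{prop:tptl2_tptl1}'s witness, together with the trivial inclusion, suffices; no further argument is needed.
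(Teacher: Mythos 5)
Your proposal is correct and matches the paper's (implicit) argument exactly: the corollary is stated as an immediate consequence of Proposition~\ref{prop:tptl2_tptl1}, combining the trivial syntactic inclusion $\tptl^1 \subseteq \tptl^2$ with the witness formula $\varphi$ shown there to have no $\tptl^1$ equivalent. No further argument is needed.
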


    It remains open whether we can generalize this result to $\tptl^{n+1}$ and $\tptl^{n}$, where $n\geq 2$, to get a complete hierarchy for the number of register variables. We have the following conjecture.

    \begin{conjecture}
	   For each $n\geq 1$, $\tptl^{n+1}$ is strictly more expressive than $\tptl^n$.
    \end{conjecture}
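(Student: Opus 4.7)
The natural candidate for a separating formula is the iterated generalization of the formula used in Proposition~\ref{prop:tptl2_tptl1}:
\[
\varphi_n \;=\; x_1.\finally\bigl(x_1>0 \wedge x_2.\finally\bigl(x_1>0 \wedge x_2>0 \wedge \cdots \wedge x_{n+1}.\finally(x_1>0 \wedge \cdots \wedge x_n>0 \wedge x_{n+1}<0)\cdots\bigr)\bigr).
\]
Intuitively, to satisfy $\varphi_n$ one must exhibit a strictly increasing chain of $n+1$ reference points together with a final witness that lies strictly above the first $n$ of them but strictly below the $(n+1)$-st. First I would verify that $\varphi_n \in \tptl^{n+1}$ and that the $n=1$ instance reduces (essentially) to the formula used in Proposition~\ref{prop:tptl2_tptl1}.

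\textbf{Word construction.} Then I would generalize the data words $w_0, w_1$ from Proposition~\ref{prop:tptl2_tptl1}. Given $\mathcal{I}\in\mathsf{FCons}(\Z)$ and $k\geq 1$, pick $r\in\N$ with all constants of $\mathcal{I}$ contained in $(-r,+r)$ and pick $s$ large enough so that all values I use are non-negative. Both $w_0$ and $w_1$ start with a shared prefix: an increasing chain of length $n+1$ encoding the potential ``anchor'' values, followed by a long decreasing-then-increasing segment of length depending on $k$ (the ``hiding'' zone, as in Proposition~\ref{prop:tptl2_tptl1}). Then $w_0$ contains an extra intermediate witness value (the analogue of $s+r$) that sits strictly between two critical anchors, while $w_1$ skips it. This extra value is precisely the position that, in $w_0$, simultaneously satisfies all $n+1$ inequalities imposed by the innermost conjunct of $\varphi_n$; in $w_1$ no single position satisfies all of them at once. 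This gives $w_0\tptlmodels\varphi_n$ and $w_1\not\tptlmodels\varphi_n$, which is a routine verification.

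\textbf{EF-game analysis.} By Theorem~\ref{theorem_property_tptl}, it suffices to show that for every $\mathcal{I}\in\mathsf{FCons}(\Z)$ and $k\geq 0$, Duplicator wins the game $\mathrm{TG}^{n,\mathcal{I}}_k(w_0,0,\ev,w_1,0,\ev)$. The strategy I would develop is a ``pebble-shifting'' strategy generalizing the one used in Proposition~\ref{prop:tptl2_tptl1} and Lemma~\ref{lemma_pre2}: the early increasing chain of anchors looks, modulo one shift, the same in both words (via Lemma~\ref{lemma_pre}), so Duplicator can maintain an invariant mapping positions in $w_0$ to positions in $w_1$ that differ by $\pm r$ in the ``late'' region and agree in the ``hiding'' region, exactly so that the region-equivalence condition with respect to $\mathcal{I}$ is preserved for every one of the $n$ registers at every round. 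The key combinatorial fact that needs to be established is that at any point in the game, Spoiler has at most $n$ ``active'' register values, so there always remains at least one of the $n+1$ anchor values that Duplicator is free to shift by $r$ without being caught by any register comparison; this is what allows Duplicator to survive the $k$ rounds.

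\textbf{Main obstacle.} The hard part, which is also the reason the authors leave this as a conjecture, is making the pebble-shifting argument rigorous in the presence of Spoiler's ability to \emph{reset an arbitrary subset} $Y\subseteq\{x_1,\ldots,x_n\}$ at every round. A resetting move can suddenly ``free'' a register and let Spoiler redirect it to a different anchor, so Duplicator's shifting invariant must accommodate arbitrary reset patterns. I would attempt a bookkeeping argument that assigns, at each position of the game tree, a subset of at most $n$ anchors currently ``monitored'' by Spoiler's registers, and show by induction on the number of remaining rounds that Duplicator can always position herself so that every monitored anchor in $w_0$ has a region-equivalent counterpart in $w_1$. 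If this invariant can be maintained, the standard argument of Lemma~\ref{lemma_pre2} closes the induction. Pushing this invariant through the reset-rule is exactly where I expect the technical difficulty, and is presumably why the authors state the result only as a conjecture rather than a theorem.
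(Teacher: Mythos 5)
This statement is not proved in the paper: it is explicitly left open (``It remains open whether we can generalize this result\ldots''), and the authors only establish the case $n=1$ (Proposition~\ref{prop:tptl2_tptl1}). So there is no paper proof to compare against, and the question is only whether your proposal itself constitutes a proof. It does not. What you have written is a plan that correctly identifies the natural candidate formula, the right proof framework (Theorem~\ref{theorem_property_tptl} applied to the $n$-register game $\mathrm{TG}^{n,\mathcal{I}}_k$ for all $\mathcal{I}$ and $k$), and — to your credit — the exact place where the argument is incomplete. But that place is the entire mathematical content of the statement.

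Concretely, two gaps remain. First, the word construction is only described impressionistically (``an extra intermediate witness value that sits strictly between two critical anchors''); without explicit data sequences one cannot verify $w_0\tptlmodels\varphi_n$, $w_1\not\tptlmodels\varphi_n$, nor set up the induction. Second and more seriously, the ``pebble-shifting'' invariant is asserted, not established. The pigeonhole intuition — Spoiler holds only $n$ registers against $n+1$ anchors, so one anchor is always free to shift — is not obviously sound: Duplicator's commitments are irrevocable across the $k$ rounds, Spoiler may reset any subset $Y$ of registers adaptively after seeing which anchor Duplicator has shifted, and the winning condition requires region-agreement with respect to \emph{all} currently stored values simultaneously at \emph{every} visited position, not just at the anchors. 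The game's memory is not limited to the $n$ registers; it also includes the current position pair and the whole history of Duplicator's responses, so a per-round counting argument does not by itself close the induction. Until that invariant is formulated precisely and pushed through Steps 1--5 of the game (including the backward moves of Step 4), the statement remains, as in the paper, a conjecture.
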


\section{Conclusion and Future Work}

        In this paper, we consider the expressive power of $\mtl$ and $\tptl$ on non-monotonic $\omega$-data words and introduce EF games for these two logics.
We show that $\tptl$ is strictly more expressive than $\mtl$ and some other expressiveness results of various syntactic restrictions. For $\tptl$, we examine the effects of allowing only a bounded number of register variables: We prove that $\tptl^2$ is strictly more expressive than $\tptl^1$, but it is still open if $\tptl^{n+1}$ is strictly more expressive than $\tptl^n$ for all $n\geq 1$ (Conjecture 1).
In future work we want to figure out whether there is a decidable characterization for the set of data domains for which $\tptl$ and $\mtl$ are equally expressive.

\bibliographystyle{eptcs}
\bibliography{lit}

\end{document}